\newtheorem{theorem}{Theorem}
\newtheorem{lemma}{Lemma}
\newtheorem{definition}{Definition}
\newtheorem{corollary}{Corollary}
\newtheorem{claim}{Claim}
\newtheorem{remark}{Remark}
\newcommand{\pw}{\mathcal{PW}}
\newcommand{\tw}{\mathcal{TW}}
\newcommand{\np}{\mathbf{NP}}
\newcommand{\nl}{\mathbf{NL}}
\newcommand{\cnf}{\mathbf{CNF}}
\newcommand{\nauxpda}{\mathbf{NAuxPDA}}
\newcommand{\sat}{\mathbf{SAT}}
\newcommand{\ksat}{k\textrm{-}\mathbf{SAT}}
\newcommand{\sac}{\mathbf{SAC}}
\newcommand{\sattw}{\mathbf{SAT}_{\mathsf{tw}}}
\newcommand{\satpw}{\mathbf{SAT}_{\mathsf{pw}}}
\newcommand{\conp}{\mathbf{coNP}}
\newcommand{\satisfiability}{\textsf{satisfiability}}
\title{Width-parameterized SAT: Time-Space Tradeoffs}
\author{Shiteng Chen \and Tiancheng Lou \and Periklis Papakonstantinou \and Bangsheng Tang}
\date{Institute for Theoretical Computer Science\\Tsinghua University}
\begin{document}
\maketitle

\begin{abstract}
Width parameterizations of $\sat$, such as tree-width and path-width,
enable the study of computationally more tractable and practical $\sat$ instances.
We give two simple algorithms. One that runs simultaneously in time-space  $\big(O^*(2^{2\tw(\phi)}),\; O^*(2^{\tw(\phi)})\big)$
and another that runs in time-space $\big(O^*(3^{\tw(\phi)\log{|\phi|}}),\;|\phi|^{O(1)}\big)$, where $\tw(\phi)$
is the tree-width of a formula $\phi$ with $|\phi|$ many clauses and variables.
This partially answers the question of Alekhnovitch and Razborov \cite{alekhnovich2002satisfiability}, 
who also gave algorithms exponential both in time and space, and asked whether the space can
be made smaller.
We conjecture that every algorithm for this problem that runs in time
$2^{\tw(\phi)\mathbf{o(\log{|\phi|})}}$ necessarily blows up the space to exponential in $\tw(\phi)$.

We introduce a novel way to combine the two simple algorithms that allows us to
trade \emph{constant} factors in the exponents between running time and space.
Our technique gives rise to a family of algorithms controlled by two parameters.
By fixing one parameter we obtain an algorithm that runs in time-space
$\big( O^*(3^{1.441(1-\epsilon)\tw(\phi)\log{|\phi|}}),\;  O^*(2^{2\epsilon\tw(\phi)})  \big)$, for every $0<\epsilon<1$.
We systematically study the limitations of this technique, and
show that these algorithmic results are the best achievable using this technique.

We also study further the computational complexity of width
parameterizations of $\sat$. We prove non-sparsification lower bounds for formulas of
path-width $\omega(\log|\phi|)$, and a separation between the complexity of path-width and tree-width
parametrized $\sat$ modulo plausible complexity assumptions.
\end{abstract}

%%% Periklis says: remove the comments from the Keywords when submitting to journals and conferences
%\noindent
%\paragraph{Keywords:} time-space tradeoffs, SAT, tree-width, path-width, fixed-parameter tractability
%\newpage

\section{Introduction}\label{sec:intro}
Satisfiability ($\sat$) is the prototypical $\np$-complete problem extensively studied in theoretical and empirical works.
Previous work in SAT-solving deals with exact algorithms, special cases, heuristics, and parameterizations. 
In particular, width-parameterizations have received significant attention. Consider a formula $\phi$
in Conjunctive Normal Form (CNF), and also fix a graph describing its structure. 
Previous research gave algorithms with running \emph{time} exponential in width parameters,
e.g. tree-width, measured on this graph. 
One reason to care about this is because many real-world instances tend to have small width.
In this paper we take a further step and we study \emph{time-space} tradeoffs for width-based SAT-solvers.

In an influential paper Alekhnovitch and Razborov \cite{alekhnovich2002satisfiability} gave algorithms that work in time $2^{O(\tw(\phi))}$ and in space $2^{O(\tw(\phi))}$, where $\tw(\phi)$ is the tree-width of a CNF formula $\phi$; assume that $\tw(\phi)=\Omega(\log |\phi|)$. 
The authors state their results in terms of the branch-width of the formula, which is within a constant factor of the tree-width. They conclude:
\begin{quote}
``
The first important problem is to overcome the main difficulty of the practical 
implementation which is the huge amount of space used by width-based algorithms... Thus we ask if one can do anything intelligent in polynomial space to check satisfiability of formulas with small branch-width?
''
\end{quote}

The question raised by Alekhnovitch and Razborov is a major issue in practical SAT-solving. 
It is well-known in the SAT-solving community that in many common cases SAT-solvers abort
due to lack of space.

We devise two baseline algorithms for $\sat$ instances in CNF. 
These two algorithms will be used later on  as building blocks
of much more involved ones. One is similar to \cite{alekhnovich2002satisfiability}, 
and it runs in time $O^*(2^{2\tw(\phi)})$ and space $O^*(2^{\tw(\phi)})$. 
The other runs in time $O^*(3^{\tw(\phi)\log{|\phi|}})$ and space $|\phi|^{O(1)}$, and it is the 
first algorithm for deciding arbitrary CNF instances that runs in space polynomial 
and time related exponentially in the tree-width. 
Unfortunately this does not fully answer the \cite{alekhnovich2002satisfiability} question 
since we suffer a $\log|\phi|$ factor in the exponent of the running time. 
In fact, our work revolves around this logarithmic factor. 

~\\
\noindent
\textbf{Conjecture.} Let $\mathcal{A}$ be an algorithm for $\sat$ that runs in time $O^*(2^{\tw(\phi) \delta(|\phi|)})$.
Consider CNF formulas where $\tw(\phi)=\omega(\log |\phi|)$ and $\tw(\phi)=O(|\phi|^{1-\epsilon})$, for arbitrary fixed $\epsilon<0$.
If $\delta(\phi)=o(\log |\phi|)$ then $\mathcal{A}$ uses space $2^{\Omega(\tw(\phi))}$. In particular, we cannot achieve 
simultaneously time exponential in the tree-width and space polynomial in the input length.\\

Under this conjecture, for all practical purposes it makes sense to devise algorithms 
that improve the constant in the base of the running time and space from $3$ and $2$ to constants
smaller than $3$ and $2$ respectively. 
At a more systematic level one might want to obtain time-space tradeoffs between constants 
in the exponents of the running time and space.
A significant part of our contribution regards families of algorithms that achieve such tradeoffs.

Throughout this paper we assume that the tree (or path) decompositions are given in the input. 
That is, we mod-out the computational difficulty of computing the decomposition. 
This is without loss of generality for our algorithmic results since there are 
constant approximation algorithms for computing such decompositions in time exponential in the tree-width and space polynomial
in the input length (e.g. \cite{alekhnovich2002satisfiability}).
Moreover, from a complexity theory point of view this is the ``correct'' thing to do. 
In particular, when the width decomposition is given in the input, under standard complexity 
assumptions we show that deciding $\sat$ of a given a tree decomposition of width $\mathcal{W}$
is harder than deciding $\sat$ of a given path decomposition of the same width value $\mathcal{W}$.

\paragraph{Related work}
Tree-width is a popular graph parameter introduced by Robertson and Seymour \cite{robertson1983graph,robertson1986graph}.
The smaller the tree-width of a graph, the more the graph looks like a tree (in some topological sense); 
for a graph of $n$ vertices tree-width $1$ means that the graph is a tree, whereas treewidth $n-1$ means that it is
the complete graph. 
There is a handful of hard computational problems on general graphs which become computationally easier when the input graph
is of small tree-width; see. e.g. \cite{bodlaender1993tourist} for a survey.
For $\sat$ instances the tree-width of a CNF formula is the tree-width of its associated graph: incidence graph, 
primal graph, intersection graph and so on. Among those graphs, the most general one is the incidence graph (a bipartite graph
where one side has variable-nodes and the other clause-nodes). In some sense, the tree-width value on the incidence graph  
upper bounds the tree-width value of the rest \cite{szeider03}.  
There is a vast literature (too large to concisely cite here) in empirical and theoretical studies in various width-parameterizations of $\sat$. 

Improving the constant in the basis of an exponential time algorithm is a well-established goal in 
the field of exact computation for $\np$-hard problems; see e.g. \cite{woeginger2003exact} for 
a survey on problems, algorithmic techniques, and see references within. 
In particular for $\ksat$ there is a line of work in algorithms that run in time $\alpha^n$ for $\alpha<2$; e.g. \cite{paturi1998improved,schoning1999probabilistic,woeginger2003exact,MS11}.
Somewhat related to the threshold phenomenon conjectured early in this section, 
there are vertex-ordering $\np$-hard problems which can be solved 
in time-space $\big(O^*(2^n), O^*(2^n)\big)$ and in time-space $\big(O^*(4^n), n^{O(1)}\big)$; 
e.g. \cite{bodlaender-formin-etal} and references within (an in particular \cite{KP10}).
There is no easy way to adapt these technique in our case (and thus to get rid of from the exponent the logarithmic factor).
A key property of these algorithms is that is that as smaller subproblems are created the smaller the parameter 
(\emph{number of nodes}) becomes. 
There is no obvious way to achieve this when the parameter is the \emph{width} of the formula. 
Compare this to our conjecture. 
Our conjecture is about the width of a $\sat$ instance per se - furthermore,
in the worst case $\sat$ can be exhaustively solved in time $O^*(2^n)$ and space $n^{O(1)}$.

Prior to our work, \cite{GP08} addressed the question of Alekhnovitch and Razborov.
The authors gave a combinatorially non-explicit algorithm only for the $\ksat$ problem, 
where the algorithm runs in time $2^{O(\tw(\phi)\log|\phi|)}$ and space $|\phi|^{O(1)}$. 
Due to the non-explicitness the constant in the exponent of the running time cannot be bounded in some easy way. 

\cite{papakonstantinou2009note} shows that the complexity of deciding 
path-width parameterized instances precisely corresponds to the streaming verification (in log-space)
of $\np$-witnesses. In particular, it is shown that deciding formulas with path decompositions of 
width $O(\log n)$ is complete for $\nl$ and it is asked whether the complexity of $\sat$ 
instances with tree decompositions of width $O(\log n)$ is more difficult. 

Lower bounds for deciding path-width parameterized $\sat$ can be easily derived under the 
Exponential Time Hypothesis (ETH) and the application of the Sparsification Lemma \cite{IPZ01}. 
For the more general case of Constraint Satisfaction Problems, ETH has been applied in 
technically beautiful developments to essentially show that the time-optimal  
results are the standard tree-width based algorithms; see e.g. \cite{grohe,marx2010can}. 

The last question on the inherent complexity of width-parameterized $\sat$ instances 
regards their sparsification. In a model where a polynomial time verifier is given a formula $\phi$ of 
pathwidth $\pw(\phi)$ and it communicates with an all-powerful oracle, how many bits can the 
verifier send to the oracle to decide $\phi$? This question has been addressed before (e.g. \cite{FS08,DM10} - 
see \cite{DM10} for references) for $\np$-hard problems and in particular for $\sat$. In particular,
for $3$-$\sat$ \cite{DM10} conditionally shows 
that if the verifier and the oracle communicate using $n^{3-\epsilon}$ bits,
then this is not sufficient to decide satisfiability.

\paragraph{Our contribution and techniques}
We give a dynamic programming (DP) algorithm for $\sat$ where given 
a tree decomposition of the incidence graph of width $\tw(\phi)$
runs in time-space $\big(O^*(2^{2\tw(\phi)}),\; O^*(2^{\tw(\phi)})\big)$, 
and a recursive algorithm that runs in 
time-space $\big(O^*(3^{\tw(\phi)\log{|\phi|}}),$ $|\phi|^{O(1)}\big)$ (Section \ref{sec:alg}). 
The latter algorithm is the first space-efficient algorithm for width-parameterized
$\sat$. In some sense, we are doing even harder work than \cite{alekhnovich2002satisfiability}, 
since the underlying graph in that paper is the primal graph.
If we combine the DP and the recursive algorithms in the obvious way,  
then we gain the worst of both worlds.
Here ``obvious'' means that we discretize the space of truth assignments 
during the execution of the recursive algorithm and combine using DP. 
Instead, we introduce an implicit infinite family of proof systems. 
We use two free parameters to specify an algorithm in this family. 
One parameter is an integer greater than $2$. 
This controls the ``complexity'' of the rules
applied, for performing an unbalanced type of recursion of some sort. 
The larger this parameter is the more space and the smaller the 
running time is. 
The second parameter is a real number in $(0,1)$ that 
controls the discretization of the truth assignment space. 
This family of algorithms is presented in Section \ref{sec:hybrid}. In the same section we 
show that all infinite pairs of values are of interest depending on 
different time-space bounds one may want to achieve. 

Section \ref{sec:complexity} contains some preliminary complexity theory results for width-parameterizations. 
We show that the problem $\sattw$, where the $\cnf$ formula is given together 
with the tree decomposition is computationally harder than the problem 
$\satpw$ where the $\cnf$ formula is given with a path decomposition of the same value. 
In particular, $\sattw$ for tree-width $\Theta(\log|\phi|)$ is harder than $\satpw$ of path-width $\Theta(\log |\phi|)$, 
unless $\nl=\sac^1$, a standard complexity assumption (e.g. \cite{BCDRT}). 
Note that this is not true in general for other width parameters.
For example, although path-width and band-width combinatorially may be off by an exponential,
under log-space transformations they behave the same \cite{GP08}. 
We also show that there is no trivial way to sparsify $\sattw$ unless a scaled and 
non-uniform version of $\np \neq \conp$ fails. 

\section{Preliminaries}\label{sec:preliminary}
We introduce notation, terminology, and conventions used throughout the paper. We also provide a rather elementary introduction on how an algorithm may exploit the structure of bounded tree-width formulas.

\subsection{Notation}
All logarithms are of base $2$, and all propositional formulas are in Conjuctive Normal Form (CNF).
$\sat$ is the decision problem where given an arbitrary CNF formula we want to decide if
it is satisfiable. $\ksat$ denotes the restriction of $\sat$ to CNFs where each clause has at most
$k$ literals. For a formula $\phi$, $m$ denotes the number of clauses, $n$ the number of variables,
and $C_i$ and $x_j$ stand for the $i$-th clause and $j$-th variable respectively.
For convenience we write $|\phi|=m+n$. The notation $O^*$, $\Omega^*$ and $\Theta^*$ suppresses polynomial factors.

\subsection{Tree-Width}

\begin{definition}
Let $G=(V,E)$ be an undirected graph. A \emph{tree decomposition} of $G$ is a tuple $(T,X)$, where $T=(W,F)$ is a tree, and $X=\{X_1,\cdots,X_{|W|}\}$ where $X_i\subseteq V$ s.t.
\begin{enumerate}[(1)]
\item $\cup_{i=1}^{|W|}X_i=V$
\item $\forall (i,j)\in E$, $\exists t\in W$, s.t. $i, j\in X_t$.
\item $\forall i$, the set $\{t:i\in X_t\}$ forms a subtree of $T$.
\end{enumerate}
each of $X_i$ is called a \emph{bag}, the width of $(T,X)$ is defined as $max_{t\in W}|X_t|-1$, and the \emph{tree-width} $\tw(G)$ of graph $G$ is defined as the minimum width over all possible tree decompositions.
\end{definition}

When the tree decomposition $T=(W,F)$ is restricted to a path, the decomposition is called \emph{path decomposition}, and the specific tree-width is called path-width $\pw(G)$. The following inequality holds(\cite{bodlaender1998partial})
\begin{eqnarray*}
\tw(G)\le\pw(G)\le O(\log{|V|}\tw(G))
\end{eqnarray*}

\begin{definition}
The \emph{incidence graph} $G_{\phi}$ of a $\sat$ instance $\phi$ is a bipartite graph, where in one side of the bipartization each node is associated
with a distinct unsigned variable, and in the other each node is associated with a clause. There is an edge between a clause-node and a
variable-node if and only if the variable appears in a literal of the clause. The tree-width of a formula $\phi$ is the tree-width
of its incidence graph, $\tw(\phi)=\tw(G_{\phi})$. \emph{When it is clear from the context we may abuse notation and write $\tw(\phi)$
to denote the width of a given decomposition of $G_\phi$}.
\end{definition}

We assume that a tree decomposition of the incidence graph of $\phi$ is given as input along with $\phi$. For convenience, we assume the input tree decompositions have the following two properties.
\begin{enumerate}[(1)]
\item $|W|=O(\tw(\phi)\cdot|V|)=O(\tw(\phi)|\phi|)$
\item The tree $T$ has bounded degree $3$.
\end{enumerate}
Tree decompositions satisfying the two properties are called
\emph{nice}. A tree decomposition can be converted to a nice one in
linear time(\cite{kloks1994treewidth}\cite{bodlaender1998partial}). Notation $d$ is used to denote the maximal degree in the tree decomposition. By the property above, $d\le 3$. When the input is given with a path decomposition, $d$ is actually upper bounded by $2$.

\begin{remark}
The parameter $d$ affects the performance of our algorithm significantly, to fully exploit the structure of the input decomposition, we prove most of our results parameterized by $d$. One may replace it by $2$ or $3$ when the structure of the input decomposition is guaranteed to be a path or a tree.
\end{remark}

\subsection{Truth assignments, assignments, and tree decompositions}    \label{subsec:warmup}
The structure of a tree decomposition is associated with the concept of separability(e.g. \cite{bodlaender1998partial}). Intuitively, the smaller the tree-width is, the easier the graph can be broken into separate components by removing nodes. It is the separability that allows us to device more efficient algorithms for small tree-width $\sat$, than for general $\sat$. In some sense, the given tree decomposition allows us to ``localize'' the exhaustive search. The following example sheds some light on how this can be done. For the sake of simplicity, we make an additional assumption on the tree decompositions given in the input, that all the variables of a clause appear in the same bag with the clauses. We will see later that removing this assumption is non-trivial.

Suppose $x_i$'s, $x'_i$'s and $x''_i$'s are different sets of variables, and the tree decomposition  is as in Figure~\ref{subfig:simple_example}. Some clauses depending only $x'_i$'s or only $x''_i$'s are not drawn explicitly but are placed in the bags as indicated.

\begin{figure}[ht]
\centering\subfloat[Input tree decomposition.]{\includegraphics[height=140pt]{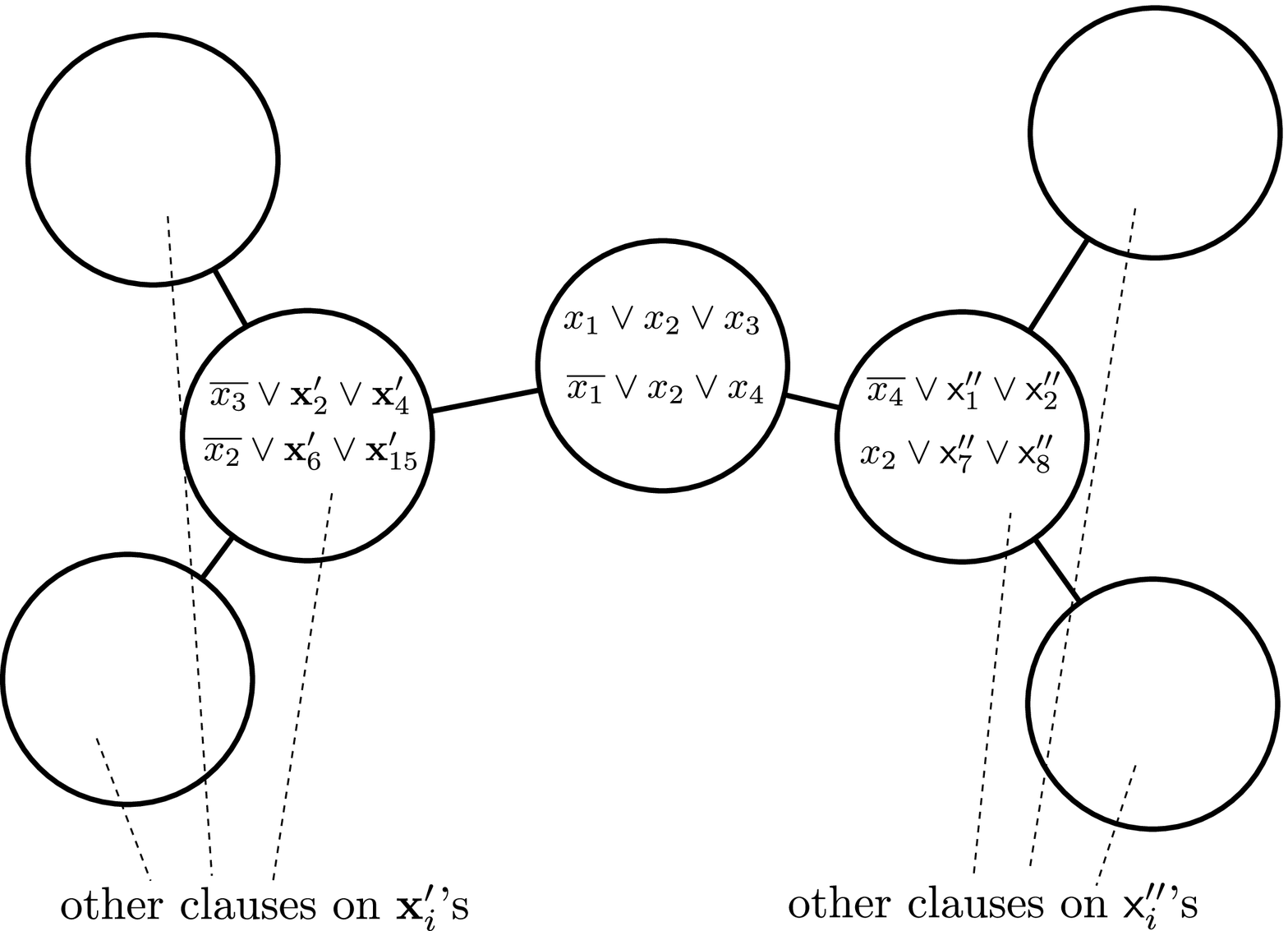}\label{subfig:simple_example}
}\hspace{40pt}\subfloat[Fixing an assignment to the variables in the middle bag results in two independent instances.]{\includegraphics[height=140pt]{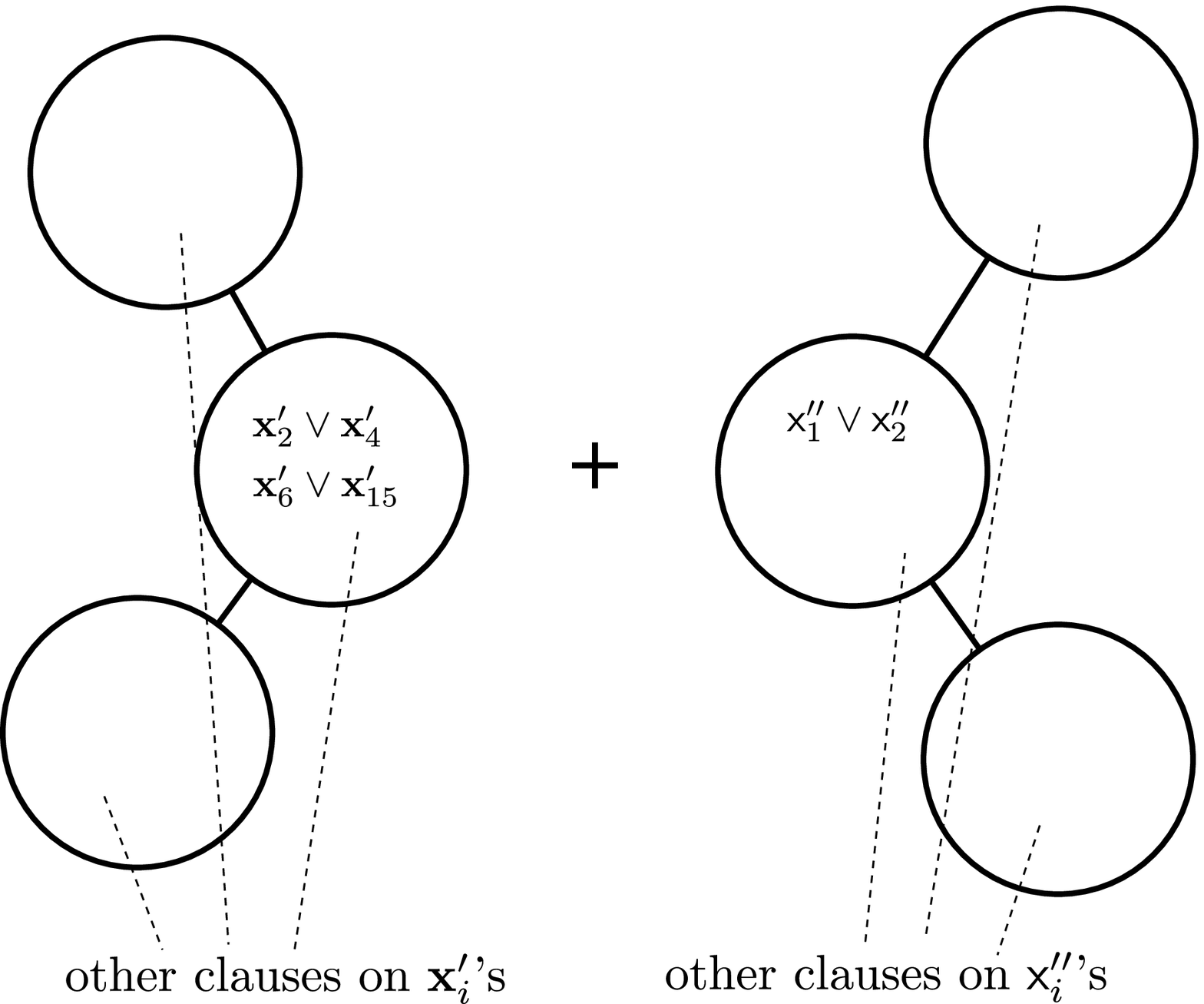}\label{subfig:after_splitting}
}

\caption{An example showing bounded tree-width $\sat$ can be solved efficiently}
\label{fig:simple_example}
\end{figure}

Suppose that we fix a truth assignment to the variables in the bag in the middle, e.g. $x_1=x_2=x_3=x_4=1$. Conditioned on this truth assignment, we can simplify the instance by removing clauses that are already satisfied, and removing literals in a clause which are set to false. This will result-in multiple subproblems as shown in Figure~\ref{subfig:after_splitting}. Assured by the property of a tree decomposition, the subproblems depend on different set of variables, i.e. they are \emph{independent}. Since if instead they shared a common variable, this variable must also wuld have appeared in the middle bag, e.g. $x_2$. At this point this variable must have been fixed to a truth assignment, thus removed in the simplification procedure.

The satisfiability of the input instance, conditioned on the truth assignment given to the middle bag, is determined by the satisfiability of the two separate subproblems. Therefore, it suffices to enumerate all truth assignments satisfying all the clauses in the middle bag without causing empty clauses in the simplification phase. Then, solve the two resulting subproblems separately to decide the satisfiability of the original instance. Furthermore, this ``splitting'' operation can be invoked recursively, by carefully choosing the ``middle'' bag.

In each recursive step, the most time-consuming part is to enumerate all the assignments satisfying all the clauses in the chosen bag, which costs $O^*(2^{\tw(\phi)\log{|\phi|}})$ time, and the total running time is $O^*(2^{\tw(\phi)\log{|\phi|}})$, which is much better than the currently best algorithms for general $\sat$ which run in the exponential in $n$. The algorithm described above will be formalized as the space-efficient algorithm in Section~\ref{subsec:space-efficient}.

\paragraph{The subtle additional assumption}
The assumption that all variables of a clause appear in the same bag with the clause is not a mild one (especially for CNFs of large cardinality). In general, we may have to delay the decision to satisfy a clause. In the above algorithm, we only store the truth assignments to the variables. The following example shows that only storing this information is not enough, when aiming at removing the assumption.

\begin{figure*}[h]
\centering{\subfloat[$\phi_1$]{\includegraphics[height=70pt]{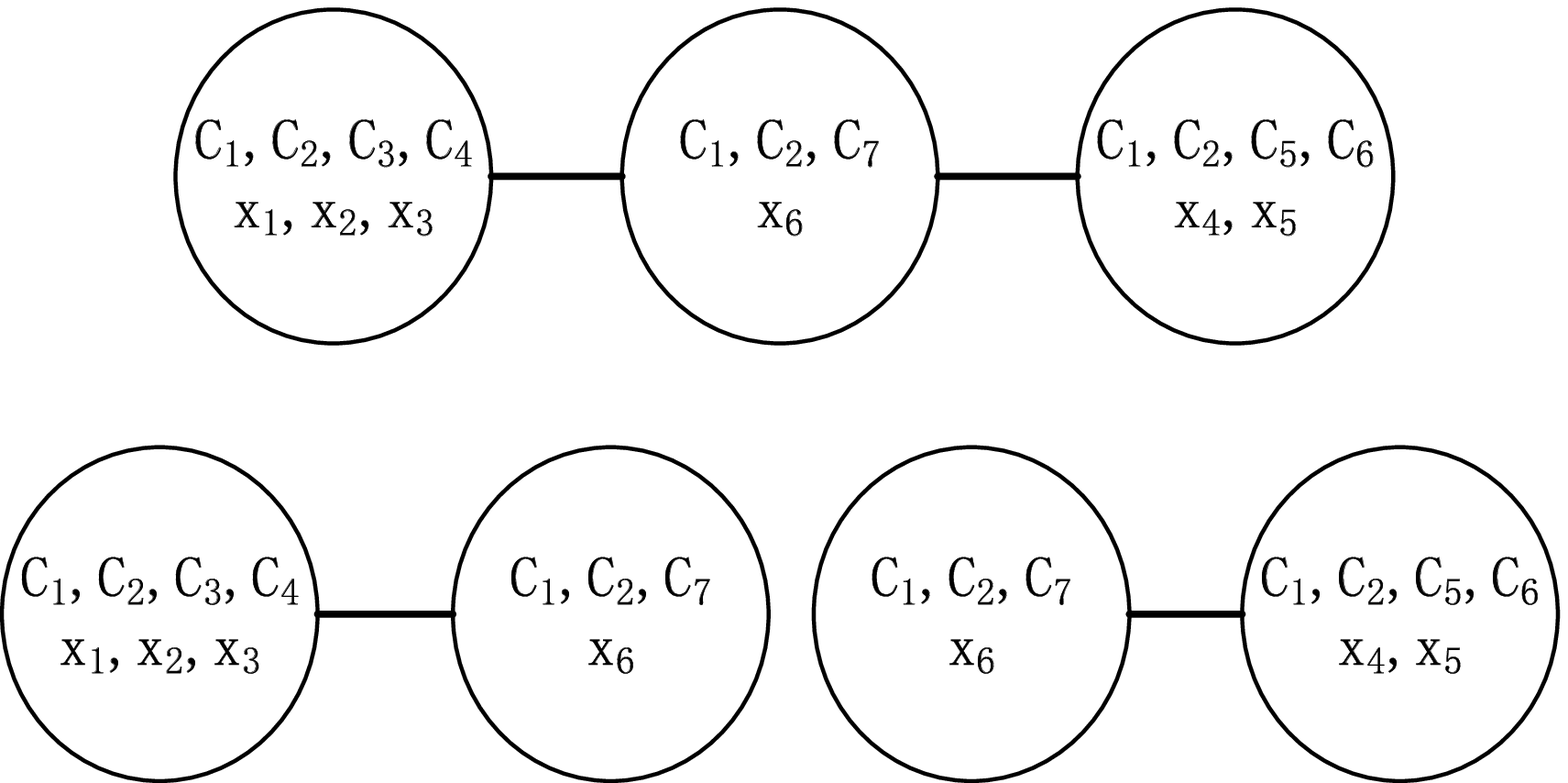}}}
\hspace{10pt}{\subfloat[$\phi_2$]{\includegraphics[height=70pt]{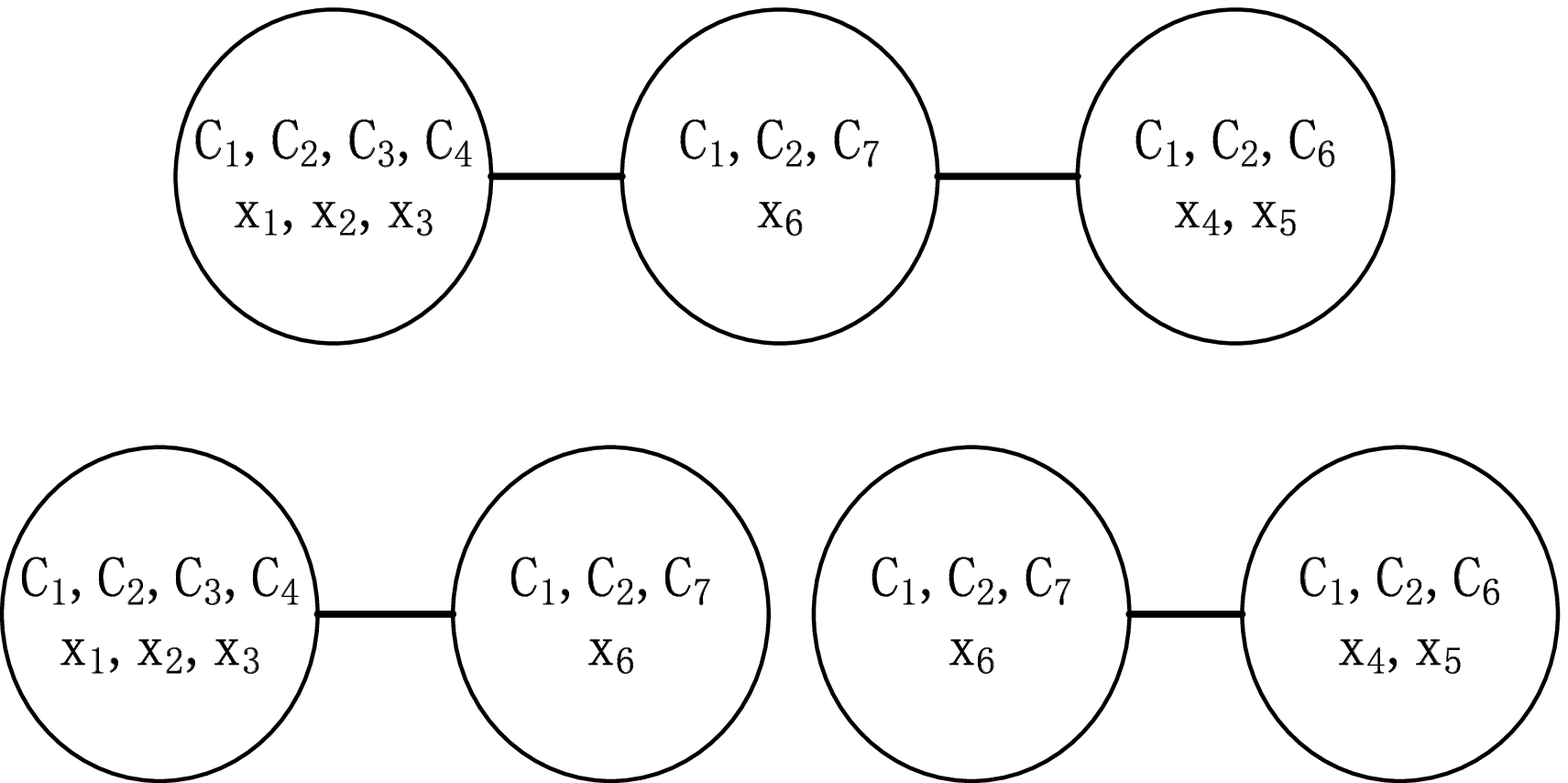}}}
\hspace{10pt}{\subfloat[$\phi_3$]{\includegraphics[height=70pt]{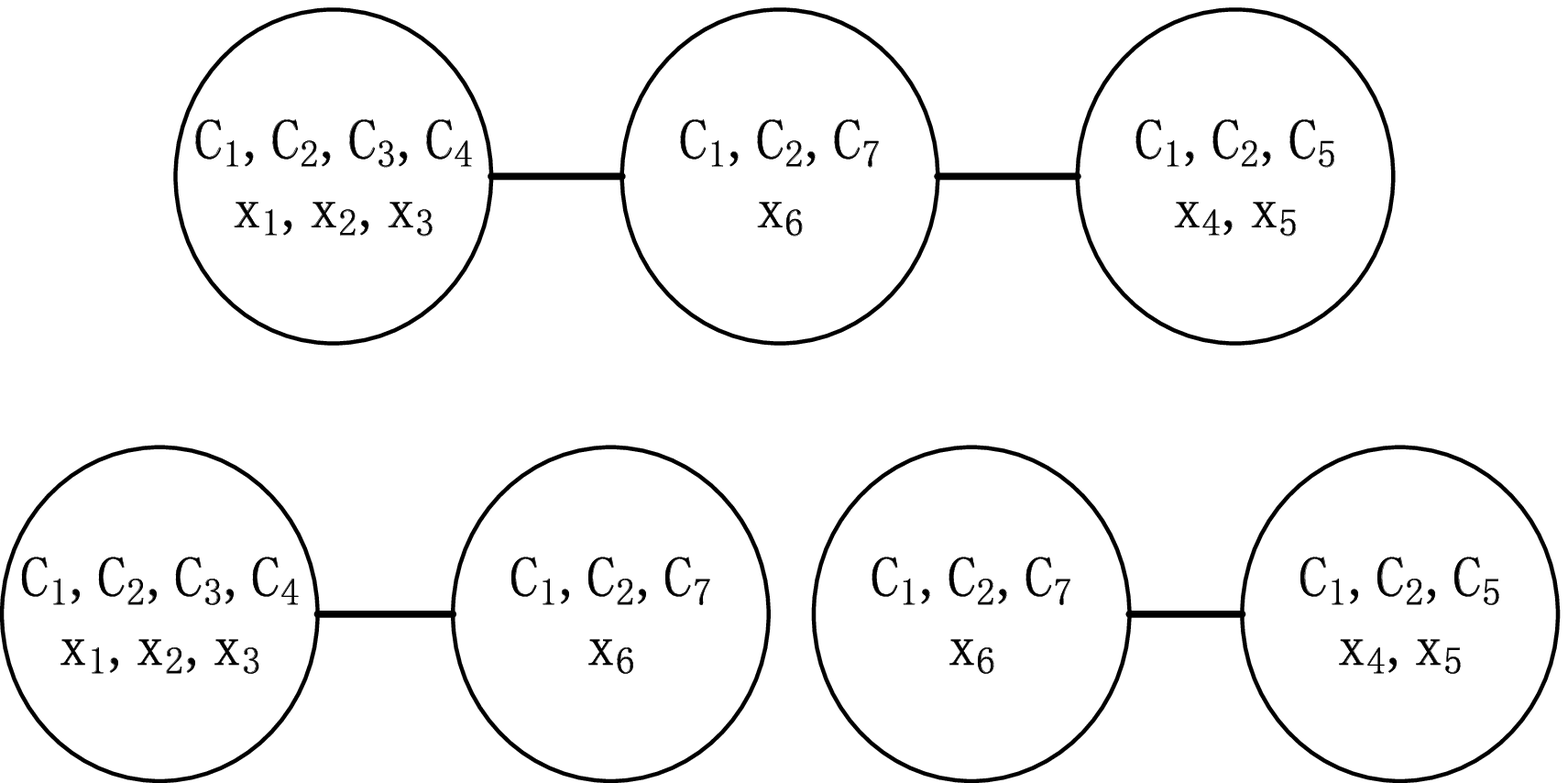}}}
\caption{Three instances used in the example. Figures on the top are
the input tree decompositions, the bottom figures are the two
components after fixing assignment to the variables in the middle
bag.}\label{fig:example_in_preliminary}
\end{figure*}

Suppose $C_1=x_1\vee x_2\vee x_4\vee x_6$,
$C_2=\overline{x_1}\vee x_3\vee x_5$, $C_3=\overline{x_2}$,
$C_4=\overline{x_3}$, $C_5=\overline{x_4}$, $C_6=\overline{x_5}$ and
$C_7=\overline{x_6}$. Three instances $\phi_1$, $\phi_2$ and $\phi_3$ along with their tree decompositions are given in Figure~\ref{fig:example_in_preliminary}, where $\phi_1=C_1\wedge\cdots\wedge C_7$, $\phi_2=C_1\wedge \cdots \wedge C_5\wedge C_7$(i.e. $C_6$ is missing), and $\phi_3=C_1\wedge\cdots\wedge C_4\wedge C_6\wedge C_7$(i.e. $C_5$ is missing). We say that a clause is satisfied by a literal under a truth assignment if the literal appears in the clause and is set to $1$. If an instance is satisfiable, then there is a truth assignment where every clause is satisfied by one of its literals.

Now, consider the splitting operation on the middle bag by fixing a truth assignment to it as above. For all three instances, the only possible assignment for $x_6$ is $0$, since $C_7$ must be satisfied by $x_6=0$. Similarly, in the left bag, we
must assign $x_2=0$ and $x_3=0$ to satisfy $C_3$ and $C_4$. In the left bag, the only variable left is $x_1$, which can satisfy either $C_1$ or $C_2$, but not both. The three instances differ in the right part, where two variables $x_4$ and $x_5$ are left.

Satisfying $C_5$ requires $x_4=0$, then $C_1$ can not be
satisfied by $x_4$. Similarly, satisfying $C_6$ requires $x_5=0$, then $C_2$ can not be satisfied by $x_5$. In order to find a satisfying truth assignment, when processing the right part, we need the information which of $C_1$, $C_2$ is already satisfied in the left part. $\phi_1$ is not satisfiable, so whichever does not affect the result. $\phi_2$ is satisfied only when $C_1$ is already satisfied, while $\phi_3$ is satisfied only when $C_2$ is already satisfied. This piece of information is not carried through the middle bag by just the truth assignment to the variables.

To overcome this issue we are going to use ``clause-bits''. In fact, the semantics of these bits is a non-obvious issue which significantly affects the running time of our algorithms. 
%When we get to the specifics we will see how we can achieve (exponential) running 
%time where the basis is $3$, as opposed to $8$ which is the more natural choice of how to interpret these clause-bits. 

\paragraph{Notation and terminology}
We introduce terminology and notation to talk about truth assignment on bags. Let $X$ be a bag in the tree decomposition, $\mathcal{V}$ be the variables and $\mathcal{C}$ be the clauses appear in $X$. Also, $n_\mathcal{V}=|\mathcal{V}|$ and $m_\mathcal{C}=|\mathcal{C}|$. An \emph{assignment} $R_X$ for $X$ is a binary vector of length $n_\mathcal{V}+m_\mathcal{C}$. The first $n_V$ bits indicate the truth values of the corresponding variables. Note that the term ``assignment'' does not correspond only to a ``truth assignment'' on the variables in $X$. It is an assignment of bit values both to variables and to clauses.

What values the last $m_C$ bits have is a subtle issue explained in Section \ref{sec:alg}. For the first, dynamic programming algorithm, things are pretty clear. However, for the space-efficient and trade-off algorithms, things become more subtle. Intuitively, a bit corresponding to a clause $C$ is $1$ if we ``have decided'' to satisfy this clause (this has to do at which part of the execution of the algorithm we are), and it is different for different algorithms.

Actually, the most straightforward way of defining the clause bits is to let it denote whether the corresponding clause ``is'' satisfied. To ensure that a clause is satisfied in one of the branches in the tree decomposition, we need to enumerate all $2^d-1$ combinations that on which branches the clause is satisfied. However, if one is interested in only in the satisfiability problem (and not e.g. in $\#\sat$) we observe that only $d$ combinations can do the job.

\section{Basic algorithmic results}\label{sec:alg}
We give the two basic algorithms. These serve as building blocks for the
algorithms in Section \ref{sec:hybrid}. The first baseline algorithm (Section \ref{subsec:dp-alg})
is doing dynamic programming, and it runs simultaneously in time-space $\left(O^*(2^{2\tw(\phi)}),O^*(2^{\tw(\phi)})\right)$.
The way this algorithm goes is standard in the literature of algorithms that
compute using a tree decomposition \cite{bodlaender1993tourist}.
The second algorithm is recursive and it runs in time-space $\left(O^*(3^{\tw(\phi)\log{|\phi|}}), |\phi|^{O(1)}\right)$.
Before presenting this space-efficient algorithm we introduce more terminology and lemmas (Sections \ref{subsec:splitting}, \ref{subsec:splitting-assign})
dealing with truth assignments on tree-decompositions. This level of generality
is not necessary if it is only used for the space-efficient algorithm. Full use of this generality
is made in Section \ref{sec:hybrid} where we give the time-space tradeoff algorithms.

\subsection{Time-efficient algorithm} \label{subsec:dp-alg}
A binary array  $\satisfiability[\cdot, \cdot]$ indexed by
$X$ and $R_X$ is defined, where $X$ is the root of the subtree in
the tree decomposition, $R_X$ is an assignment to $X$.
$\satisfiability[X, R_X]=1$ means that there exists a satisfying assignment to the subtree rooted at $X$, such that the assignment to $X$ is $R_X$.

The values of the array can be computed in a bottom-up fashion. When computing values for a bag $X$, let $X_i$'s be its children, $\satisfiability[X, R_X]$ is set to $1$, if there exist $R_{X_i}$'s \emph{consistent} with $R_X$ such that $\satisfiability[X_i, R_{X_i}]=1,\forall i$. Consistent means: bits corresponding to the same variable in $R_X$ and in $R_{X_i}$'s are the same; if a bit corresponding to a clause $C$ in $R_X$ is assigned $0$, or is assigned $1$ and $C$ is satisfied by a variable in $X$, the bits in all $R_{X_i}$'s for $C$ are set to $0$, otherwise, the bits in all $R_{X_i}$'s for $C$ are set to $1$. Note that the notion of consistency here is different than the notion introduced in the next section.

Suppose $X_r$ is the root of the tree decomposition, if $\satisfiability[X_r, R_{X_r}]=1$ for some $R_{X_r}$, then the instance is satisfiable. This can be proved by induction on depth of the tree , together with the construction of the $\satisfiability$ table and the property of a tree decomposition. $\satisfiability$ table requires $O^*(2^{\tw(\phi)})$ space. Filling the entries as described above requires
$O^*(2^{(d-1)\tw(\phi)})$ time, where $d$ is the maximum degree in the tree decomposition. Recall that we have assumed a normal form on the tre decomposition where $d=3$.

\subsection{Splitting Node}   \label{subsec:splitting}
The operation of splitting the tree at a node is an essential step for the space-efficient and
tradeoff algorithms.

\begin{definition}[splitting operation]
Let $T=(V,E)$ be a tree, and $v\in V$. \emph{Splitting $T$ at $v$} is the
following operation. Let $T_1,\dots,T_k$ be the trees after removing $v$ from $T$.
The splitting operation results-in a forest $\{v\}\cup T_1,\dots,\{v\}\cup T_k$, where $\{v\}\cup T_i$
is the subtree induced by the nodes in $T_i$ together with $v$.
We call $v$ the \emph{splitting node} of this operation.
\end{definition}

The node at which we split a tree is labelled as a \emph{splitting node}.
Given a tree $T$ together with a sequence of splitting operations results-in
a forest where each subtree in the forest in general has many splitting nodes.

\begin{figure}[h]
\begin{center}
\includegraphics[width=240pt]{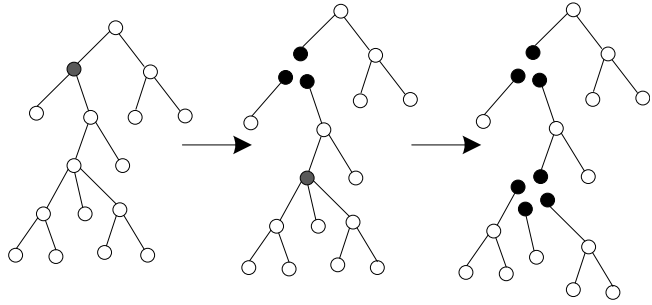}
\end{center}
\caption{Two splitting operations at the black shaded nodes}\label{fig:cutting_node}
\end{figure}

The following Lemma \ref{lemma:cutting} is somewhat reminiscent to the well-known ``$\frac{1}{3}$ - $\frac{2}{3}$
lemma'' for binary trees.
Lemma \ref{lemma:cutting} together with Corollary \ref{lemma:half},
ensures that it is possible to efficiently choose a \emph{balancing splitting node}
in a tree of constant degree;
i.e. a splitting node where the sizes of the trees in the resulted forest is linear in the
number of nodes of the tree.

\begin{lemma}\label{lemma:cutting}
Consider a tree of size $N$, a leaf $s$ and $0<\alpha<1$. Then,
there is a node $p$, where one of the trees containing $s$ resulted
after splitting at $p$ is of size $\leq \lceil\alpha N\rceil$ and
each of the rest is of size $\leq\lceil(1-\alpha) N\rceil$. $p$ is called an $\alpha$-splitting node.
Furthermore, such a $p$ can be found in time polynomial in $N$.
\end{lemma}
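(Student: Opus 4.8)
The plan is to root $T$ at the distinguished leaf $s$, label each node with its subtree size, and then walk greedily toward the heaviest child until the induced split becomes balanced in the required asymmetric sense. Concretely: root $T$ at $s$, and for a node $v$ let $w(v)$ be the number of nodes in the subtree below $v$, so $w(s)=N$, the values $w$ strictly decrease along every root-to-leaf path, and $w(v)=1+\sum_{c}w(c)$ where $c$ ranges over the children of $v$. If we split $T$ at a node $p\neq s$, then exactly one resulting tree contains $s$ — the part above $p$ together with $p$, of size $N-w(p)+1$ — and the remaining trees are $\{p\}\cup(\text{subtree of }c)$, one per child $c$ of $p$, of size $w(c)+1$.

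Now the descent. Starting from $s$, repeatedly move to the child with the largest $w$-value (ties broken arbitrarily), producing a path $s=v_0,v_1,\dots,v_L$ ending at a leaf, with $N=w(v_0)>w(v_1)>\dots>w(v_L)=1$. Take $p=v_j$ where $j$ is the \emph{last} index on this path with $w(v_j)\geq N+1-\lceil\alpha N\rceil$ — equivalently, the deepest node $p$ for which $N-w(p)+1\leq\lceil\alpha N\rceil$; such a $j$ exists because $w(v_0)=N\geq N+1-\lceil\alpha N\rceil$ (as $\lceil\alpha N\rceil\geq 1$). By this choice the tree containing $s$ has size $N-w(p)+1\leq\lceil\alpha N\rceil$. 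For the other trees, the point of always descending to the heaviest child is that \emph{every} child $c$ of $p$ satisfies $w(c)\leq w(v_{j+1})$; and since $j$ is the last index above threshold, $w(v_{j+1})<N+1-\lceil\alpha N\rceil$, hence $w(v_{j+1})\leq N-\lceil\alpha N\rceil$, so each such tree has size $w(c)+1\leq N+1-\lceil\alpha N\rceil\leq\lceil(1-\alpha)N\rceil$ — the last step using $\lceil\alpha N\rceil+\lceil(1-\alpha)N\rceil\geq N+1$. For the time bound: the values $w(\cdot)$ are obtained by one depth-first traversal, and the descent makes at most $L\leq N$ moves, each scanning at most $N$ children, so $p$ is produced in $O(N^2)$ time.

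The main subtlety I expect is the bookkeeping with the ceilings, not anything structural. The inequality $\lceil\alpha N\rceil+\lceil(1-\alpha)N\rceil\geq N+1$ holds whenever $\alpha N$ is not an integer, but in the boundary case $\alpha N\in\mathbb{Z}$ the sum is only $N$; there a perfectly balanced split of, e.g., a long path is genuinely impossible, and one absorbs the forced ``$+1$'' either by an infinitesimal perturbation of $\alpha$ or by reading the size bounds up to an additive constant — immaterial in all later uses, where $\alpha$ is a fixed constant and $N$ grows polynomially. The extreme regimes $\alpha\leq 1/N$ and $\alpha\geq 1-1/N$, in which the rule would select $p=s$ or a leaf, are handled by direct inspection of the constantly many possible component sizes. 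The one genuinely load-bearing idea is the heavy-child rule: descending to an arbitrary child could leave a large sibling subtree dangling off $p$ and violate the $\lceil(1-\alpha)N\rceil$ bound, whereas taking the heaviest child makes $w(v_{j+1})$ a single upper bound for all sibling subtrees at once, which is exactly what the argument needs.
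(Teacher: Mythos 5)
Your proof is correct and follows essentially the same route as the paper's: root at $s$, descend greedily to the heaviest child, and split at the last node before the $s$-side component crosses the $\lceil\alpha N\rceil$ threshold, using the heavy-child rule to bound all sibling subtrees at once. You are in fact more careful than the paper about the ceiling arithmetic — the off-by-one at integer $\alpha N$ that you flag (e.g.\ an even-length path with $\alpha=1/2$) is a genuine boundary failure of the statement as written that the paper's proof silently glosses over, and your observation that it is immaterial for all later uses is the right resolution.
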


\begin{proof}
Here is an algorithm for finding $p$. Root the given tree at $s$
and construct a path $\langle s\equiv v_1,v_2,\dots,v_l \rangle$ as follows. At step $i$,
among the children of $v_{i-1}$ let $v_1$ be the root of the largest
subtree. We claim the there in $v_j$ in this path with the desired properties.

Denote by $a_i$ the size of the subtree containing $s$ after splitting at
$v_i$. It is obvious to see that $a_1=1$, $a_l=N$, and $a_i$
strictly increases as $i$ increases. Therefore, there must be a $j$,
such that, $a_j\le\alpha N$ and $a_{j+1}>\alpha N$. We claim that
$v_j$ is the node we need. If $a_{j+1}-a_j=1$, then cutting $v_j$
will result in two components, where the size of the component containing
$s$ is $\lceil\alpha N\rceil$, while the other one is of size
$\lceil (1-\alpha)N\rceil$. If $a_{j+1}-a_j>1$, then there must be a
branch at $v_j$, meaning that $v_j$ has at least two children.
Splitting at $v_j$ results-in at least three components, the one
containing $s$ is smaller than $\alpha N$, and the largest one among
the others is smaller than $(1-\alpha)N$.
\end{proof}

\begin{corollary}       \label{lemma:half}
On a bounded-degree tree of size $N$, there exists a node $p$, such that after splitting at $p$ each subtree is of size at most $\lceil N/2\rceil$.\end{corollary}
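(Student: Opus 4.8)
The plan is to derive Corollary~\ref{lemma:half} as the special case $\alpha = 1/2$ of Lemma~\ref{lemma:cutting}, together with a short argument that the degree bound makes \emph{all} resulting subtrees small, not just the two that the lemma directly controls. First I would fix an arbitrary leaf $s$ of the given bounded-degree tree of size $N$ and apply Lemma~\ref{lemma:cutting} with $\alpha = 1/2$. This produces a node $p$ such that, upon splitting at $p$, the component containing $s$ has size at most $\lceil N/2 \rceil$ and every other component has size at most $\lceil (1-1/2) N \rceil = \lceil N/2 \rceil$. Since the bound $\lceil N/2 \rceil$ applies uniformly to the $s$-component and to each of the other components, every subtree in the resulting forest has size at most $\lceil N/2 \rceil$, which is exactly the claim.

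The one point that needs a sentence of care is that the statement of Lemma~\ref{lemma:cutting} is phrased as ``one of the trees containing $s$'' has the $\lceil \alpha N\rceil$ bound and ``each of the rest'' has the $\lceil(1-\alpha)N\rceil$ bound; in the $\alpha=1/2$ instantiation these two bounds coincide, so there is no asymmetry left to worry about — every component is bounded by $\lceil N/2\rceil$. I should also note that Lemma~\ref{lemma:cutting} does not itself require bounded degree (it holds for any tree), so the bounded-degree hypothesis in the corollary is not strictly needed for this size statement; I would either drop the mention of it or keep it only because the corollary is invoked in the bounded-degree setting later. If I wanted to be fully faithful to how the corollary is used — namely to find such a $p$ efficiently — I would add that the ``furthermore'' clause of Lemma~\ref{lemma:cutting} gives a polynomial-time procedure for locating $p$, so the corollary inherits constructivity for free.

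There is essentially no obstacle here: the corollary is a direct specialization and the proof is two lines. The only thing to watch is the ceiling arithmetic, $\lceil (1-\alpha)N \rceil = \lceil N/2\rceil$ when $\alpha = 1/2$, which is immediate. So the proof I would write is: ``Apply Lemma~\ref{lemma:cutting} with $\alpha = 1/2$ and any leaf $s$. Then the component containing $s$ has size at most $\lceil N/2\rceil$ and each other component has size at most $\lceil N/2\rceil$; hence every subtree in the resulting forest has size at most $\lceil N/2 \rceil$. Such a node is found in polynomial time by the same lemma.''
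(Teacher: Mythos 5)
Your proposal is correct and matches the paper's intent exactly: the corollary is stated without proof precisely because it is the $\alpha = 1/2$ instantiation of Lemma~\ref{lemma:cutting}, where the two ceiling bounds coincide at $\lceil N/2\rceil$. Your side remarks (that bounded degree is not actually needed for this statement, and that constructivity is inherited from the lemma's ``furthermore'' clause) are both accurate.
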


\subsection{Splitting nodes and assignments}   \label{subsec:splitting-assign}
Consider a tree decomposition and a sequence of splitting operations. This process breaks the original tree to a forest where each subtree ha its splitting nodes. \emph{We refer to an assignment on a subtree as the assignment that corresponds only to its splitting nodes}.
Let $\mathcal{T}$ be a subtree with splitting nodes $S$.
%An \emph{assignment} $R_{\mathcal{T}}$ of $\mathcal{T}$ is a binary string of length $n_\mathcal{V} + m_\mathcal{C}$,
%where each of the first $n_\mathcal{V}$ bits indicates a truth assignment to
%the corresponding variable, and each of the remaining $m_\mathcal{C}$ bits
%indicates if the corresponding clause must be satisfied within the
%tree or not.
For a tree $\mathcal{T}$ with splitting nodes $S$, splitting at a node $p$ results in multiple subtrees $\{\mathcal{T}_i\}$, each $\mathcal{T}_i$
with its splitting nodes $S_i$.
Denote by $X^{*} = \cup_{v_i\in S}{X_i}$, and let $\mathcal{V}$ be the variables and $\mathcal{C}$ the clauses which appear in $X^*$; this is the set of variables and clauses on which we define assignments.
Suppose $R_{\mathcal{T}}$ is an assignment to $\mathcal{T}$, and $R_{\mathcal{T}_i}$ is an assignment to the subtree $\mathcal{T}_i$. $R_{\mathcal{T}}$ and all the $R_{\mathcal{T}_i}$'s are said to be \emph{consistent} if
\begin{enumerate}[(1)]
\item for every $i$, the bits corresponding to a variable $x$ in $R_{\mathcal{T}_i}$ is the same as in $R_{\mathcal{T}}$
\item for a clause $C$,
\begin{enumerate}[a)]
\item if $C$ appears in $X^*$ and is assigned $0$, then $\forall i$ every bit for $C$ in  $R_{\mathcal{T}_i}$ is assigned $0$
\item otherwise, $\exists$ exactly one $i$ such that in $R_{\mathcal{T}_i}$ the bit corresponding to $C$ is assigned $1$.
\end{enumerate}
\end{enumerate}

\begin{remark} \label{rem:exactly-one}
The latter point in the definition, where in exactly one of the subtrees we require that the corresponding bit equals to $1$, is somewhat subtle. Note that it has a significant effect in the running time of the algorithms. The following lemmas crucially depend on this issue.
\end{remark}

%\begin{definition}\label{def:assignment_consistent}
%\end{definition}

The following two lemmas upper bound the number of assignments in two different situations. In what follows we assume that there is an initial tree decomposition together with a sequence of splitting operations that result-in the subtrees along with their splitting nodes.

\begin{lemma}\label{lem:assignment_count}
For a tree $\mathcal{T}$ with splitting nodes $S$, the number of assignments is at most $2^{|S|\tw(\phi)}$.
\end{lemma}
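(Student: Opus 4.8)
The plan is to bound the number of assignments on $\mathcal{T}$ by counting the total number of bits that an assignment $R_\mathcal{T}$ specifies. Recall that, by the convention introduced just above, an assignment on a subtree $\mathcal{T}$ with splitting nodes $S$ records bit values only for the variables and clauses that occur in $X^* = \bigcup_{v_i \in S} X_i$. Each bag $X_i$ has size at most $\tw(\phi)+1$, so $|X^*| \le (\tw(\phi)+1)|S|$, counting variable-nodes and clause-nodes together. An assignment assigns one bit to each element of $X^*$ (a truth value to each variable, a ``clause-bit'' to each clause), hence there are at most $2^{|X^*|} \le 2^{(\tw(\phi)+1)|S|}$ assignments. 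Absorbing the $+1$ into the $\Theta^*$/constant-factor conventions used throughout (or simply noting $\tw(\phi) \ge 1$ so that $(\tw(\phi)+1)|S| \le 2\tw(\phi)|S|$, which would give a weaker but still exponential-in-$|S|\tw(\phi)$ bound), one gets the claimed $2^{|S|\tw(\phi)}$.

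Concretely, the steps I would carry out are: (1) recall the definition of an assignment on a subtree and that it ranges only over $X^*$; (2) use property that every bag has cardinality $\le \tw(\phi)+1$ to conclude $|X^*| \le |S|(\tw(\phi)+1)$; (3) observe each assignment is a binary vector of length $|X^*|$, so the count is at most $2^{|X^*|}$; (4) combine to get $2^{|S|(\tw(\phi)+1)} = O^*(2^{|S|\tw(\phi)})$, and remark that the extra additive factor in the exponent is harmless under our conventions (or is dropped by the standard abuse where $\tw(\phi)$ denotes the bag size rather than bag size minus one).

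The main subtlety — rather than a genuine obstacle — is bookkeeping about what exactly an assignment counts. One must be careful that the clause-bits do not introduce extra multiplicity: for each clause in $X^*$ there is a single bit in $R_\mathcal{T}$ (not, say, one bit per branch), which is exactly why the per-bag contribution is $2^{|X_i|}$ and not something larger; the ``exactly one branch'' bookkeeping of Remark~\ref{rem:exactly-one} concerns the \emph{consistency} relation between $R_\mathcal{T}$ and the $R_{\mathcal{T}_i}$, not the size of a single assignment. A second point to state cleanly is that bags overlap, so a variable or clause occurring in several of the $X_i$'s is counted once in $X^*$; this only helps (the bound $|X^*| \le |S|(\tw(\phi)+1)$ is an over-count), so the argument goes through. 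Everything else is immediate.
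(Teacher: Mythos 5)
Your proof is correct and follows essentially the same route as the paper's: bound $|X^*|$ by $\sum_{v_i\in S}|X_i|\le |S|\tw(\phi)$ and count binary vectors of that length. You are in fact slightly more careful than the paper, which silently ignores the $+1$ between bag size and width; your handling of that constant is fine.
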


\begin{proof}
$|X^{*}|\leq\sum_{v_i\in S}|X_i|\leq |S|\tw(\phi)$, the number of variables and
clauses in $X^{*}$ are at most $|S|\tw(\phi)$. So the number of
assignment is at most $2^{|S|\tw(\phi)}$.
\end{proof}

\begin{lemma}\label{lem:assignment_consistent_count}
For every assignment $\mathcal{R}_{\mathcal{T}}$ to the tree
$\mathcal{T}$, the number of assignments $R_{\mathcal{T}_i}$ to
subtrees $\mathcal{T}_i$'s consistent with
$\mathcal{R}_{\mathcal{T}}$ is at most $d^{\tw(\phi)}$.
\end{lemma}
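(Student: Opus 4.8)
The plan is to bound the number of choices involved in extending a fixed assignment $\mathcal{R}_{\mathcal{T}}$ on $\mathcal{T}$ to a tuple $(R_{\mathcal{T}_i})_i$ of assignments on the subtrees produced by splitting $\mathcal{T}$ at $p$. First I would fix notation: let $S$ be the splitting nodes of $\mathcal{T}$, let $p$ be the new splitting node, so each subtree $\mathcal{T}_i$ has splitting nodes $S_i \subseteq (S \cap \mathcal{T}_i)\cup\{p\}$. Set $X^{*}=\bigcup_{v\in S}X_v$ as in the text, and let $X^{**} = X^{*}\cup X_p$ be the analogous union over all splitting nodes that can occur across the $\mathcal{T}_i$'s; note $|X_p|\le \tw(\phi)+1$, but in fact the only "new" coordinates introduced, relative to what $\mathcal{R}_{\mathcal{T}}$ already pins down, live in $X_p$.

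The key observation is that once $\mathcal{R}_{\mathcal{T}}$ is fixed, almost everything in each $R_{\mathcal{T}_i}$ is forced. By consistency condition (1), every bit of $R_{\mathcal{T}_i}$ corresponding to a variable $x$ is copied from $\mathcal{R}_{\mathcal{T}}$ — for variables appearing in $X^{*}$ this is a hard constraint, and for a variable appearing only in $X_p$ (i.e. a genuinely new coordinate) its value is still common to all $i$ and hence contributes a factor of at most $2$ per such variable, for a total of at most $2^{|X_p \setminus X^*| \cap \text{vars}} \le 2^{\tw(\phi)+1}$ — but I would be more careful here and note that $p$'s variable-bits are shared, so they do not blow up as $d^{\cdot}$. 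For the clause bits: condition (2a) forces the bit to $0$ in every $\mathcal{T}_i$ whenever $C\in X^{*}$ is assigned $0$, so there is no freedom; the only freedom is condition (2b), where for each relevant clause $C$ we must pick \emph{exactly one} index $i$ among the (at most $d$, by the bounded-degree assumption and the fact that splitting at $p$ produces at most $\deg(p)\le d$ subtrees) subtrees in which the $C$-bit equals $1$. That gives at most $d$ choices per clause. The number of such clauses is at most $|X^{**}|\le |S|\tw(\phi) + \tw(\phi) = (|S|+1)\tw(\phi)$ — but the bound claimed is $d^{\tw(\phi)}$, so I must argue more tightly that the clauses on which there is genuine freedom number at most $\tw(\phi)$: these are exactly the clauses in $X_p$ (clauses already pinned down in $X^{*}$ inherit their disposition from $\mathcal{R}_{\mathcal{T}}$ via condition (2) applied at the previous level), and $|X_p|\le \tw(\phi)+1$, so there are at most $\tw(\phi)+1$, and after accounting for the normalization (or noting the width is $|X_p|-1$) at most $\tw(\phi)$ of them. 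Hence the number of consistent tuples is at most $d^{\tw(\phi)}$, which also absorbs the harmless variable factors into the $O^*/\Theta^*$ accounting if one is careful, or — cleaner — one shows the variable bits in $X_p$ are determined by $\mathcal{R}_{\mathcal{T}}$ as well because $p$ lies on the boundary shared with $\mathcal{T}$ and so its variables already appear in some bag of a splitting node of $\mathcal{T}$.

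The main obstacle is precisely this last point: pinning down that the \emph{only} source of multiplicity is the "exactly one $i$ gets a $1$" choice over the clauses in $X_p$, and that neither the variable bits of $X_p$ nor the clause bits of clauses outside $X_p$ contribute. I would handle it by carefully tracking which coordinates of the $R_{\mathcal{T}_i}$'s are "visible" to $\mathcal{R}_{\mathcal{T}}$ (i.e. lie in $X^{*}$, hence constrained to agree with or be forced by $\mathcal{R}_{\mathcal{T}}$) versus which are new (lie in $X_p\setminus X^{*}$); the separator property (tree-decomposition axiom (3)) guarantees that a clause or variable appearing in two different subtrees $\mathcal{T}_i,\mathcal{T}_j$ must appear in $X_p$, which is what makes the "new" coordinates few and the clause-choice the dominant combinatorial term. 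Once that bookkeeping is in place, the count $d^{|X_p|-1}\le d^{\tw(\phi)}$ follows immediately.
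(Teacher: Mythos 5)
Your proposal is correct and follows essentially the same route as the paper's proof: all freedom is localized to the bag $X_p$ of the new splitting node (everything in $X^{*}\setminus X_p$ being forced by consistency and the separator property), with each variable of $X_p$ contributing a factor of at most $2\le d$ and each clause contributing at most $d$ via the ``exactly one subtree gets the $1$'' rule, giving $d^{|X_p|}\le d^{\tw(\phi)+1}=O(d^{\tw(\phi)})$. One small caution: your alternative suggestion that the variable bits of $X_p$ are already determined by $\mathcal{R}_{\mathcal{T}}$ is false in general (a variable may appear in $X_p$ but in no previous splitting node's bag), but your primary accounting via $2\le d$ is the right one and matches the paper.
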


\begin{proof}
%For a tree $\mathcal{T}$ with splitting nodes $S$, cutting a node $p$
%result in several subtrees $\{\mathcal{T}_i\}$.
Let $X_p$ be the bag corresponding to the splitting node $p$.
For each variable $x$ in the bag $X_p$, there are $2$
possible assignments of $x$ in the subtrees $\{\mathcal{T}_i\}$. For each clause $C$ in $X_p$, if $C$ appears in $R_{\mathcal{T}}$
and is assigned $0$, by the definition of consistency, each appearance of $C$ in the $\{\mathcal{T}_i\}$'s is
assigned $0$. Otherwise, in exactly one $\{\mathcal{T}_i\}$ $C$ is
assigned to $1$; in this case there are at most $d$ valid assignments.
\end{proof}

\begin{definition}\label{def:satisfiable}
For a tree $\mathcal{T}$ with splitting nodes $S$, an assignment $R_{\mathcal{T}}$ is \emph{satisfying} if
there exists a truth assignment $A$ to every variable in $\mathcal{T}$, such that
\begin{enumerate}[(1)]
\item every truth value for a variable in $R_{\mathcal{T}}$ agrees with the corresponding value in $A$.
\item every clause $C$ that appears in $\mathcal{T}$ where $C$ does not appear in $S$, is satisfied by $A$.
\item every clause $C$ that appears in $S$ and is assigned $1$ by $R_{\mathcal{T}}$ is such that $C$ is satisfied by $A$.
\end{enumerate}
\end{definition}

The following lemma shows how to determine the \emph{satisfiability} of an assignment recursively.

\begin{lemma}\label{thm:satisfiable}
An assignment $R_{\mathcal{T}}$ is \emph{satisfying} if and only if
there exist assignments $R_{\mathcal{T}_i}$ to the subtrees
$\mathcal{T}_i$, such that the assignments $R_{\mathcal{T}_i}$ are
consistent with $R_{\mathcal{T}}$ and each of the $R_{\mathcal{T}_i}$ are
\emph{satisfying}.
\end{lemma}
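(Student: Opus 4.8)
The plan is to prove both directions of the biconditional by unwinding Definition~\ref{def:satisfiable} and the definition of consistency, using the separation property of tree decompositions as the workhorse. First I would set up notation: write $\mathcal{T}_1,\dots,\mathcal{T}_k$ for the subtrees obtained by splitting $\mathcal{T}$ at $p$, with $S_i$ the splitting nodes of $\mathcal{T}_i$ (so $S_i = (S \cap \mathcal{T}_i) \cup \{p\}$), and $X^\ast = \bigcup_{v_i \in S} X_i$. The key structural fact, inherited from property (3) of tree decompositions together with property (2), is that if a variable $x$ appears in two different subtrees $\mathcal{T}_i$ and $\mathcal{T}_j$, then $x$ appears in the bag $X_p$ of the splitting node $p$; likewise a clause appearing in two subtrees appears in $X_p$. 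This is precisely the ``localization'' phenomenon discussed in Section~\ref{subsec:warmup}, and it is what makes a global truth assignment decomposable into locally-consistent pieces.

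For the forward direction, suppose $R_{\mathcal{T}}$ is satisfying, witnessed by a truth assignment $A$ on all variables of $\mathcal{T}$. I would define $R_{\mathcal{T}_i}$ as follows: on each variable bit, copy the value from $A$ (this is forced if I want consistency, and it automatically agrees with $R_{\mathcal{T}}$ on the variables of $S$ since $A$ agrees with $R_{\mathcal{T}}$ there by clause~(1) of Definition~\ref{def:satisfiable}). For the clause bits I have to be careful because of Remark~\ref{rem:exactly-one}: for a clause $C$ appearing in $X^\ast$ and assigned $0$ by $R_{\mathcal{T}}$, set every bit for $C$ in every $R_{\mathcal{T}_i}$ to $0$; for every other clause $C$ that appears in some $S_i$, I need to pick \emph{exactly one} index $i$ in which the bit is $1$ — I choose an $i$ such that $A$ satisfies $C$ via a literal whose variable lies in $\mathcal{T}_i$, which exists because $A$ satisfies $C$ (either by clause~(2) or clause~(3) of the definition, depending on whether $C \in S$). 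Then I verify that each $R_{\mathcal{T}_i}$ is satisfying, using $A$ restricted to the variables of $\mathcal{T}_i$ as the witness: conditions (1)–(3) of Definition~\ref{def:satisfiable} for $\mathcal{T}_i$ follow because a clause of $\mathcal{T}_i$ not in $S_i$ is a clause of $\mathcal{T}$ not in $S$ (hence satisfied by $A$), and the clauses of $S_i$ assigned $1$ are handled by the choice above — with the one subtlety that a clause might appear in $\mathcal{T}_i$, lie in $S$ but not in $S_i$; here I use that such a clause, if it appears in $\mathcal{T}_i$ and in $\mathcal{T}_j$ for the node $v$ witnessing membership in $S$, must lie in $X_p$ hence in $X^\ast$, so its status is controlled.

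For the reverse direction, given consistent satisfying $R_{\mathcal{T}_i}$'s with witnesses $A_i$ on $\mathcal{T}_i$, I would glue the $A_i$ into a single truth assignment $A$ on $\mathcal{T}$. Well-definedness is the point where the separation property is used again: if $x \in \mathcal{T}_i \cap \mathcal{T}_j$ then $x \in X_p$, and by consistency condition~(1) both $R_{\mathcal{T}_i}$ and $R_{\mathcal{T}_j}$ carry the same bit for $x$ (namely the value in $R_{\mathcal{T}}$, or at least they agree with each other through $X_p$), so $A_i(x) = A_j(x)$; hence $A$ is consistent. Then I check $A$ witnesses that $R_{\mathcal{T}}$ is satisfying: clause~(1) of Definition~\ref{def:satisfiable} holds by consistency through $R_{\mathcal{T}_i}$; clause~(2), for a clause $C$ in $\mathcal{T}$ not in $S$, I locate a subtree $\mathcal{T}_i$ containing $C$ — if $C \notin X^\ast$ then $C \notin S_i$ so $A_i$ satisfies $C$; if $C \in X^\ast$ but $C$ is assigned $0$ I would need to rule this out (a clause of $\mathcal{T}$ not in $S$ but in $X^\ast \subseteq$ bags of $S$ — actually $X^\ast$ is a union of bags of nodes of $S$, and a clause can be in a bag without the clause-node being a splitting node, so I must handle the case where $C$ is assigned $0$ in $R_{\mathcal{T}}$: then by consistency $C$ is $0$ everywhere, but $C$ might not be in any $S_i$ either, so satisfaction of $C$ comes from condition (2) applied to the $\mathcal{T}_i$ containing it — I need $C \notin S_i$, i.e.\ the clause-node of $C$ is not among $p$ or the original splitting nodes inside $\mathcal{T}_i$; since $C$'s clause-node is not in $S$ and not equal to $p$ unless $p$'s bag contains $C$ as a non-splitting clause, this works out); clause~(3), for $C \in S$ assigned $1$, by consistency~(2b) exactly one $R_{\mathcal{T}_i}$ assigns $C$ the bit $1$, and since that $R_{\mathcal{T}_i}$ is satisfying with witness $A_i$, condition~(3) for $\mathcal{T}_i$ gives that $A_i$, hence $A$, satisfies $C$.

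I expect the main obstacle to be bookkeeping around which clauses lie in $X^\ast$ versus which lie in the various $S_i$, and making the ``exactly one'' clause-bit discipline interact correctly with gluing — in particular ruling out the pathological case where a clause is forced to $0$ in $R_{\mathcal{T}}$ yet genuinely needs to be satisfied because it is not a splitting clause anywhere. The resolution in each case is an appeal to property~(3) of tree decompositions (the subtree-of-occurrences property), which forces any multiply-appearing variable or clause into $X_p$ and thus under the control of the consistency relation. Once the separation property is invoked cleanly this becomes a routine but lengthy verification, which is why I would isolate the subtree-occurrence fact as a named sub-claim at the start of the proof.
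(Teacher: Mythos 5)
Your proposal is correct and follows essentially the same route as the paper's proof: in the forward direction you derive the subtree assignments from the witnessing truth assignment $A$ (choosing the unique subtree whose literal satisfies each clause to carry the $1$-bit), and in the reverse direction you glue the subtree witnesses $A_i$ into a global $A$, using the separation property of the tree decomposition for well-definedness. The paper's version is a two-paragraph sketch that omits the bookkeeping you spell out (which clauses lie in $X^\ast$ versus the various $S_i$), but the underlying argument is identical.
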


\begin{proof}
For a tree $\mathcal{T}$ with splitting nodes $S$, suppose that
splitting at node $p$ results-in the several subtrees $\{\mathcal{T}_i\}$.

Suppose that the assignment $R_{\mathcal{T}}$ is \emph{satisfying},
by Definition \ref{def:satisfiable},
there exists a truth assignment
on variables within $T$. Using the truth assignment, we can always
find assignments $R_{\mathcal{T}_i}$ consistent with
$R_{\mathcal{T}}$, such that for these truth assignments the
conditions in Definition \ref{def:satisfiable} are met.

For the other direction suppose that there exist assignments $R_{\mathcal{T}_i}$
of the subtrees $\mathcal{T}_i$, such that the assignments
$R_{\mathcal{T}_i}$ are consistent with $R_{\mathcal{T}}$ and all
$R_{\mathcal{T}_i}$ are \emph{satisfiable}. For each subtree
$\mathcal{T}_i$, there exists a truth assignment complying to
Definition \ref{def:satisfiable}. Since all these
truth assignments agree with on their common variables, we can get a truth
assignment from their union, which also meets the axioms in
Definition \ref{def:satisfiable}. Therefore, the assignment
$R_{\mathcal{T}}$ is \emph{satisfiable}.
\end{proof}

%To show the correctness, we have the following theorem. (TODO: maybe
%need more explanation)
%\begin{theorem}\label{the:assignment_correctness}
%Given an instance $\phi$ of $\ksat$, with tree decomposition
%$\mathcal{T}$, $\phi$ is satisfiable if and only if for each cut,
%there exist consistent assignments for all subtrees.
%\end{theorem}

\subsection{Space-efficient algorithm}  \label{subsec:space-efficient}
The space efficient algorithm is based on the observation that in every
degree-bounded tree we can always find a node, such that by
splitting at that node every subtree has size no more than half of
the original tree (Corollary~\ref{lemma:half}).

Then, the recursive algorithm works as follows: find the
splitting node, fix the assignments for all subtrees, and then recurse
on the subtrees after the splitting. The algorithm is summarized in
Algortihm~\ref{alg:divide_and_conquer}. $\mathcal{T}$ is a tree with
previous splitting nodes $S$, and $R_\mathcal{T}$ is the assignment
fixed on the tree. A subtle point that affects the running time of this algorithm
is addressed in Remark \ref{rem:exactly-one}.

\begin{algorithm}
\caption{SAT($\mathcal{T}$, $R_\mathcal{T}$)}
\label{alg:divide_and_conquer}
\begin{algorithmic}[1]
\IF {every nodes in $\mathcal{T}$ are previous splitting nodes}

\IF {every clause in $R_\mathcal{T}$ which assigned $1$ is satisfied
by some variables in $\mathcal{T}$}

\RETURN \TRUE

\ELSE

\RETURN \FALSE

\ENDIF

\ELSE

\STATE find the splitting node $s$, and cut at $s$, which result in
many subtrees $T_{i}$

\FORALL {assignments $R_{\mathcal{T}_{i}}$ consistent with $R_\mathcal{T}$}

\IF {for each subtree $\mathcal{T}_{i}$,
SAT($\mathcal{T}_{i}$, $R_{\mathcal{T}_{i}}$) = true}

\RETURN true

\ENDIF

\ENDFOR

\RETURN false

\ENDIF

\end{algorithmic}
\end{algorithm}

%\iffalse

Regarding its correctness, after the splitting, by the property of a
tree decomposition, all the variables shared by different components
must have been fixed at the splitting node, namely there will be no
consistency problem among the components. By induction, it can be
shown that $SAT(\mathcal{T}',\mathcal{A})$ corresponds to the
satisfiability of the (sub)tree decomposition $\mathcal{T}$
consistent with all assignments $\mathcal{A}$. Therefore,
$SAT(\mathcal{T}_0, \emptyset)$, where $\mathcal{T}_0$ is the input
tree decomposition, corresponds to the satisfiability of the
instance.

%\fi

This algorithm requires only $|\phi|^{O(1)}$ space, because there are only
$O(\log{|\phi|})$ assignments to be stored during the process.
The running time can be written as follows. Suppose $T(N)$ is the
running time on a decomposition with $N$ nodes. By Lemma~\ref{lem:assignment_consistent_count}

\begin{eqnarray*}
T(N)\leq O\left( d^{\tw(\phi)}\right) T\left({1\over 2}N\right)+|\phi|^{O(1)}
\end{eqnarray*}
that is, $T(|\phi|)=O^{*}(d^{\tw(\phi)\log|\phi|})$, where by the
normal form assumption $d=3$, i.e.
$T(|\phi|)=O^{*}(3^{\tw(\phi)\log|\phi|})$.

\section{Algorithms for time-space tradeoffs} \label{sec:hybrid}
If we combine the baseline algorithms of the previous section in the
obvious way by considering blocks of truth assignments, 
we obtain the worse of the two worlds (do you see why?). In this section
we establish Theorem~\ref{the:algorithm_complexity} below, by
exhibiting a family of algorithms that achieve time-space tradeoffs.
To that end, we introduce a new algorithmic technique in which we
make non-black-box use of the two simple algorithms. Each algorithm
in this family is identified by two parameters $(\epsilon, c)$.
Moreover, we show that both of these parameters are necessary to
achieve different time-space tradeoffs. Parameter $0<\epsilon<1$
corresponds to the granularity of the discretization of the
assignment space, whereas the integer parameter $c\geq 2$ has to do
with the ``complexity'' of the rule applied recursively.
\begin{theorem}\label{the:algorithm_complexity}
For every integer $c\ge 2$ and $\epsilon$, where $0<\epsilon<1$, a $\sat$
instance $\phi$ with a tree decomposition of width $\tw(\phi)$ and
$N$ nodes, can be decided in time
$O^{*}\left(3^{\left(\lambda_{c}(\log N - c) +
c\right)(1-\epsilon)\tw(\phi)}\right)$
and in space
$O^{*}\left(2^{c\epsilon\tw(\phi)}\right)$ for a constant $\lambda_c$.
\end{theorem}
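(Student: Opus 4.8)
The idea is to interpolate between the two baseline algorithms of Section \ref{sec:alg} by using a more refined splitting strategy (parameter $c$) and by discretizing the assignment space (parameter $\epsilon$). I would proceed as follows.

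\textbf{Step 1: A $c$-way recursion via iterated balancing cuts.}
First I would observe that Corollary~\ref{lemma:half} can be iterated: instead of splitting a tree of $N$ nodes at a single balancing node, I apply the splitting operation $c-1$ times in succession, each time to the largest current subtree, via the $\alpha$-splitting guarantee of Lemma~\ref{lemma:cutting}. A careful choice of the $\alpha$'s (shrinking the largest piece each time) guarantees that after these $c-1$ cuts every subtree in the resulting forest has at most roughly $\frac{1}{2}N + O(c)$ nodes; more precisely, one sets things up so that $\lambda_c$ is the contraction factor achieved per ``super-step'' of $c$ cuts. The key accounting point is that this super-step introduces at most $c$ new splitting nodes into any subtree, so by Lemma~\ref{lem:assignment_consistent_count} the number of consistent child-assignments at a super-step is at most $d^{c\,\tw(\phi)}$ (one factor $d^{\tw(\phi)}$ per splitting node), while the number of assignments stored is governed by Lemma~\ref{lem:assignment_count}: a subtree carrying $O(c)$ splitting nodes has at most $2^{O(c\,\tw(\phi))}$ assignments. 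This already gives a family with space $O^*(2^{\Theta(c\,\tw(\phi))})$ and time $O^*(3^{\Theta((\log N/\text{something})\,c\,\tw(\phi))})$; setting $\epsilon$ aside ($\epsilon$ absent), the count of super-steps needed to shrink $N$ to $O(c)$ is $\lambda_c(\log N - c) + c$ up to lower-order terms, which is exactly the exponent shape in the statement with the $(1-\epsilon)$ removed.

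\textbf{Step 2: Discretizing the assignment space with parameter $\epsilon$.}
Next I would incorporate $\epsilon$. Rather than recursing all the way down until each subtree is entirely made of previous splitting nodes, I stop the recursion early: once a subtree has size below a threshold (chosen so that a $(1-\epsilon)$-fraction of the $\log N$ ``depth budget'' has been consumed), I switch to the dynamic-programming algorithm of Section~\ref{subsec:dp-alg} on that subtree. The DP on a subtree with $\tw(\phi)$ width and the accumulated $O(c)$ splitting-node boundary runs in time $O^*(2^{2\tw(\phi)} \cdot \text{poly})$ but, crucially, its space is $O^*(2^{\tw(\phi)})$ plus the storage for the boundary assignment, i.e. $O^*(2^{c\,\tw(\phi)})$ overall — this is where the $2^{c\epsilon\tw(\phi)}$ space bound comes from, since $\epsilon$ controls how much of the boundary we carry into the DP phase versus how deep the recursion goes. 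Carrying out the recursion for only a $(1-\epsilon)$-fraction of the depth scales the exponent of the recursive part's running time by exactly $(1-\epsilon)$, giving the $3^{(\lambda_c(\log N - c)+c)(1-\epsilon)\tw(\phi)}$ term; the DP leaves contribute only a multiplicative $O^*(2^{O(c\,\tw(\phi))})$ which is absorbed into the stated bounds for the relevant range (and into the $O^*$).

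\textbf{Step 3: Correctness and bookkeeping.}
Correctness follows by the same induction as for Algorithm~\ref{alg:divide_and_conquer}, using Lemma~\ref{thm:satisfiable} to glue satisfying assignments across the $c$-way cuts, and using the correctness of the DP algorithm at the leaves of the recursion (its consistency notion from Section~\ref{subsec:dp-alg} must be reconciled with the splitting-node consistency of Section~\ref{subsec:splitting-assign}, which is a routine check since a leaf subtree's boundary is exactly its set of splitting nodes). The running-time recurrence becomes
\begin{equation*}
T(N) \;\le\; O^*\!\left(d^{c\,\tw(\phi)}\right)\, T\!\left(\lambda_c N + O(c)\right) \;+\; |\phi|^{O(1)},
\end{equation*}
which I would unfold for $(1-\epsilon)\log N / (\text{per-superstep depth})$ levels and then add the DP cost at the base; solving it yields the claimed time bound, and the simultaneous space bound is read off from the maximum over the recursion stack of (stored assignments) $+$ (DP working space).

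\textbf{Main obstacle.}
The hard part will be Step 1: pinning down the constant $\lambda_c$ and verifying that iterating Lemma~\ref{lemma:cutting} $c-1$ times genuinely yields a per-super-step contraction of $\lambda_c$ \emph{while introducing only $O(c)$ new splitting nodes per subtree}. One must choose the sequence of $\alpha$-values so that the largest piece shrinks geometrically at each of the $c-1$ internal cuts, yet control the fact that each splitting node, once created, persists in all the subtrees it touches and thus inflates the assignment count of every descendant (Lemma~\ref{lem:assignment_count}); balancing ``how fast the tree shrinks'' against ``how fast the boundary grows'' is the crux, and it is exactly this balance that forces the two parameters $c$ and $\epsilon$ to appear the way they do in the theorem. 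A secondary subtlety is ensuring the ``exactly one $i$'' clause-bit convention of Remark~\ref{rem:exactly-one} is respected throughout the $c$-way split, so that the $d^{c\,\tw(\phi)}$ (and not $(2^d)^{c\,\tw(\phi)}$) bound on consistent assignments survives.
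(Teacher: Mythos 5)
Your proposal has two genuine gaps, and both concern the mechanisms that actually produce the stated exponents. First, the role of $\epsilon$. In the paper's construction the recursion is \emph{not} truncated: at every split only $(1-\epsilon)\tw(\phi)$ of the bits in the new splitting node's bag are fixed and enumerated, while the remaining $\epsilon\tw(\phi)$ bits per splitting node are left unfixed and handled by a table $M$ indexed by the $\epsilon$-assignment group; since every subtree carries at most $c$ splitting nodes, the table has $2^{c\epsilon\tw(\phi)}$ entries, which is the entire source of the space bound, and Lemma~\ref{lem:assignment_consistent_count_important} gives branching factor $d^{(1-\epsilon)\tw(\phi)}$ per split, which is the entire source of the $(1-\epsilon)$ in the time. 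Your Step~2 --- recursing for a $(1-\epsilon)$ fraction of the depth and then switching to the full DP of Section~\ref{subsec:dp-alg} --- uses space $O^*(2^{\tw(\phi)})$ at the leaves, which already exceeds $2^{c\epsilon\tw(\phi)}$ whenever $\epsilon<1/c$; the claimed space bound therefore fails, and the time exponent does not acquire a multiplicative $(1-\epsilon)$ against the whole expression $\lambda_c(\log N-c)+c$ either.

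Second, the splitting strategy and its accounting. You charge $d^{c\,\tw(\phi)}$ per ``super-step'' and claim $\lambda_c(\log N-c)+c$ super-steps; multiplied out this gives exponent $c\bigl(\lambda_c(\log N-c)+c\bigr)\tw(\phi)$, a factor of $c$ too large (and it is inconsistent with your earlier claim that each super-step roughly halves $N$, which would give $\log N$ super-steps). The correct accounting is one split per recursion level with branching $d^{(1-\epsilon)\tw(\phi)}$ --- only the single new splitting node's bag is enumerated, the previous splitting nodes' bits having been fixed or grouped higher up --- so the time is $d^{(1-\epsilon)\tw(\phi)\cdot\mathrm{depth}}$ and everything reduces to bounding the splitting depth by $\lambda_c(\log N-c)+c+O(1)$. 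That bound is exactly the content you defer as the ``main obstacle'': the paper's splitting algorithm $\mathcal{H}_c$ keeps every subtree at $type_{j}$ with $j\le c$ by a case analysis (split at an $\alpha_{c,i}$-balancing node if it avoids the paths between previous splitting nodes, creating one small $type_{i+1}$ tree of size at most $\alpha_{c,i}N$; otherwise split on such a path, which cannot raise the type), with $\alpha_{c,c}=0$ forcing the latter case for $type_c$ trees; optimizing the $\alpha_{c,i}$ in the resulting depth recurrences (Lemmas~\ref{lem:algorithm_complexity} and~\ref{lem:algorithm_complexity2}) yields $\sum_{l=1}^{c}2^{-l/\lambda_c}=1$, i.e.\ $\lambda_c=-\log x_c$ for the largest root of $X^c-X^{c-1}-\cdots-1=0$. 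Your iterated-balancing-cut scheme neither caps the number of splitting nodes per subtree at $c$ nor produces this characteristic equation, so the crux of the theorem is left unproved.
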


$\lambda_c$ is a constant depending on $c$. To be more specific,
$\lambda_c$ is defined as $-\log{x_c}$, where $x_c$ is the root with
largest absolute value of the polynomial equation:
$X^c-X^{c-1}-X^{c-2}-\dots-1=0$.  Values of $\lambda_c$ for small
$c$'s are listed in Table~\ref{fig:lambda_c}.

\begin{table}[h]
\centering
\begin{tabular}{|c|c|c|c|c|c|}
\hline
$c$&2&3&4&5&6\\
\hline
$\lambda_c$&1.441&1.138&1.057&1.026&1.013\\
\hline
\end{tabular}
\caption{$\lambda_c$ for small $c$'s}
\label{fig:lambda_c}
\end{table}

\subsection{Splitting Depth}
A \emph{splitting algorithm} $\mathcal{A}$ on a tree, computes a function where given a tree $\mathcal{T}$ together with previous splitting nodes $S$, it returns a node where the next splitting operation is going to be performed.

\begin{definition}\label{def:cutting_depth}
\emph{$c$-splitting depth} $\mathsf{SD}_{c}(\mathcal{A},
\mathcal{T}, S)$ of a \emph{splitting algorithm} $\mathcal{A}$ on
tree $T$ with previous splitting nodes $S$ is inductively defined as
follows:
\begin{eqnarray*}
\mathsf{SD}_c(\mathcal{A}, \mathcal{T}, S)=\left\{
\begin{array}{ll}
\max_{(\mathcal{T}_{0}, S_{0})\in C_{\mathcal{T}, S,
p}}\mathsf{SD}_c(\mathcal{A}, \mathcal{T}_{0}, S_{0}) + 1 &
|S|\le c, |S|<|\mathcal{T}|\\
0 & |S|\le c, |S|=|\mathcal{T}|\\
\infty & |S|>c
\end{array}
\right.
\end{eqnarray*}
where $p$ is the output of $\mathcal{A}$ on $\mathcal{T}$ and $S$, $C_{\mathcal{T}, S, p}$ is
the set of subtrees by splitting at $p$ in tree $\mathcal{T}$ with
previous splitting nodes $S$
\end{definition}
Intuitively, splitting depth is the recursion depth of the splitting
algorithm. \emph{$c$-minimal splitting depth}
$\mathsf{MSD}_c(\mathcal{T}, S)$ is the minimum value of
$\mathsf{SD}_c(\mathcal{A}, \mathcal{T}, S)$, over all splitting
algorithms. The case of $\infty$ is for well-definiteness.

%\begin{definition}\label{def:minimal_cutting_depth}
%\end{definition}

\subsection{Assignment Group}
For a tree $\mathcal{T}$ with previous splitting at nodes $S$,
splitting a node $p$ results in several subtrees
$\{\mathcal{T}_i\}$, each $\mathcal{T}_i$ with splitting nodes
$S_i$. By Lemma \ref{lem:assignment_count}, there are at most
$2^{|S|\tw(\phi)}$ different assignments. An
\emph{$\epsilon$-assignment group} $(\epsilon$-$GR_{\mathcal{T}})$
is a set of binary strings of length at most $|S|\tw(\phi)$, in
which $(1-\epsilon)\tw(\phi)$ entries corresponding to each previous
splitting node are fixed to some constant value.

Consider the case of a tree $\mathcal{T}$ with previous splitting
nodes $S$. Suppose $S$ contains $4$ variables $x_1$, $x_2$, $x_3$,
$x_4$ and $3$ clauses $C_1$, $C_2$, $C_3$. An example of
$\epsilon$-$GR_{\mathcal{T}}$ is as follows : $x_1$, $x_4$, $C_2$,
$C_3$ are fixed to $1$,$1$,$0$,$1$, and $x_2$, $x_3$, $C_1$ are not
fixed. The $\epsilon$-$GR_{\mathcal{T}}$ contains $8 = 2^3$ binary
strings.

$\epsilon$-$GR_{\mathcal{T}}$ and
$\epsilon$-$GR_{\mathcal{T}_i}$'s are said \emph{consistent} if and
only if there exists a way to assign a value to each of the unfixed
values in $\mathcal{T}$(as $R_{\mathcal{T}}$) and $\mathcal{T}_i$'s(as
$R_{\mathcal{T}_i}$'s), such that $R_{\mathcal{T}}$ and
$R_{\mathcal{T}_i}$'s are consistent.

For a tree $\mathcal{T}$ and $\epsilon$-$GR_{\mathcal{T}}$, fix
$(1-\epsilon)\tw(\phi)$ bits correspond to variables and clauses
contained in the splitting node $p$, one can derive
$\epsilon$-$GR_{\mathcal{T}_i}$ for each subtree $\mathcal{T}_{i}$.
Note that the fixed entries for the splitting node $p$ may be
different among subtrees, and the unfixed entries in $\mathcal{T}$
need not to be fixed in subtrees. For the number of different
combinations, the following important lemma holds.

\begin{lemma}\label{lem:assignment_consistent_count_important}
The number of distinct of $\epsilon$-$GR_{\mathcal{T}_i}$'s
consistent with $\epsilon$-$GR_{\mathcal{T}}$ is at most
$d^{(1-\epsilon)\tw(\phi)}$.
\end{lemma}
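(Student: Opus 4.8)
The plan is to mirror the proof of Lemma~\ref{lem:assignment_consistent_count}, but carefully tracking which coordinates are \emph{fixed} in an $\epsilon$-assignment group versus which are free, and showing that the fixed/unfixed distinction for the splitting node $p$ is the only source of multiplicity. First I would recall that an $\epsilon$-$GR_{\mathcal{T}_i}$ is determined by the constraints it imposes on the coordinates indexed by the splitting nodes $S_i = (S\cap \mathcal{T}_i)\cup\{p\}$. For the coordinates coming from $S$ itself (the old splitting nodes lying in $\mathcal{T}_i$), consistency with $\epsilon$-$GR_{\mathcal{T}}$ forces: a variable that was fixed in $\mathcal{T}$ stays fixed to the same value, an unfixed variable stays unfixed, and similarly the fixed clause bits are inherited verbatim (using clause-rule (1) when the clause bit is $0$, and rule (2) — the ``exactly one $i$'' rule — when it is $1$). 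Hence there is \emph{no freedom at all} in the $S$-part of $\epsilon$-$GR_{\mathcal{T}_i}$: it is a deterministic function of $\epsilon$-$GR_{\mathcal{T}}$. So the number of distinct tuples $(\epsilon\text{-}GR_{\mathcal{T}_i})_i$ is exactly the number of ways the $(1-\epsilon)\tw(\phi)$ fixed coordinates attached to the new splitting node $p$ can be distributed among the subtrees.

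Next I would count those choices coordinate by coordinate over the (at most) $(1-\epsilon)\tw(\phi)$ fixed entries associated with $p$. For a fixed variable $x$ in $X_p$: once we decide its truth value (and that value is the same across all $\mathcal{T}_i$ in which $x$ occurs, by consistency condition (1) of Section~\ref{subsec:splitting-assign}), there is a \emph{single} resulting configuration, so a fixed variable contributes a factor of $1$ to the number of consistent derived groups — the value is already pinned down in $\epsilon$-$GR_{\mathcal{T}}$. For a fixed clause bit of $C$ in $X_p$: if it is $0$, then by clause-rule (a) every copy of $C$ in the $\mathcal{T}_i$'s is forced to $0$, again a single configuration; if it is $1$, then by clause-rule (b) exactly one subtree $\mathcal{T}_i$ gets the $1$, so there are at most $d$ choices (one per branch at $p$ — and since $p$ has degree at most $d$ in the decomposition, only $\le d$ subtrees contain $C$). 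The product over all fixed coordinates of $p$ is therefore at most $d^{(1-\epsilon)\tw(\phi)}$, since there are at most $(1-\epsilon)\tw(\phi)$ such coordinates and each contributes a factor of at most $d$ (with factor $1$ for the variable coordinates and the $0$-clause coordinates). That is precisely the claimed bound.

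The main obstacle, and the point that needs the most care, is the interaction between the ``fixed'' status of a coordinate in $\epsilon$-$GR_{\mathcal{T}_i}$ and the derivation rule: one must argue that choosing how to split the \emph{fixed} clause-$1$ bits of $p$ among the subtrees does not secretly also multiply the count through the \emph{unfixed} coordinates. The resolution is the observation already flagged in the statement of the construction — in $\epsilon$-$GR_{\mathcal{T}_i}$ we only fix the $(1-\epsilon)\tw(\phi)$ bits per old splitting node that were already fixed, plus the $(1-\epsilon)\tw(\phi)$ bits of $p$, and an unfixed bit of $\mathcal{T}$ is never promoted to fixed in a subtree; so an $\epsilon$-$GR_{\mathcal{T}_i}$ is uniquely specified by its pattern of fixed values, and the unfixed part carries no information that could branch the count. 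Once this is pinned down, the argument is a direct adaptation of the proof of Lemma~\ref{lem:assignment_consistent_count} restricted to the fixed coordinates, with the $2$ for free variables replaced by $1$ because fixed variables are already determined, which is what allows the exponent base to drop from the na\"ive $(2^{d})$-type count down to $d^{(1-\epsilon)\tw(\phi)}$.
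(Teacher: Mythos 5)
Your overall strategy -- reduce the count to a per-coordinate product over the $(1-\epsilon)\tw(\phi)$ entries that get newly fixed at the splitting node $p$, and bound each factor by $d$ -- is exactly the paper's strategy. But two steps in your accounting are wrong as stated. First, you claim a fixed variable of $X_p$ contributes a factor of $1$ because ``the value is already pinned down in $\epsilon$-$GR_{\mathcal{T}}$.'' It is not: $\epsilon$-$GR_{\mathcal{T}}$ only constrains coordinates indexed by the \emph{previous} splitting nodes $S$, and $p\notin S$, so a variable of $X_p$ is generically unconstrained and its truth value is a genuine binary choice made when deriving the subtree groups. The paper counts a factor of $2\;(\leq d)$ per variable. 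Your final bound survives only because $2\leq d$, not because of the mechanism you describe; in particular your closing claim that replacing $2$ by $1$ ``is what allows the exponent base to drop to $d$'' is a misdiagnosis -- the base was already $d$ in Lemma~\ref{lem:assignment_consistent_count}, and what the $\epsilon$-grouping buys is the smaller exponent $(1-\epsilon)\tw(\phi)$, not a smaller base.

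Second, your clause analysis conditions on the bit of $C$ ``being $0$ or being $1$'' as if that value were supplied by $\epsilon$-$GR_{\mathcal{T}}$. For a clause of $X_p$ this is only true in one of three cases. The paper's proof distinguishes: (i) $C$ appears in no previous splitting node, in which case consistency rule (2)(b) forces \emph{exactly one} subtree to receive a $1$ (the all-zero option you invoke is not available), giving $d_0\leq d$ configurations; (ii) $C$ appears in a previous splitting node with its bit fixed, which is the case your argument actually covers; and (iii) $C$ appears in a previous splitting node with its bit \emph{unfixed}, where $C$ stays unfixed in the $e_0\geq 1$ subtrees inheriting that occurrence and one must count the $d_0-e_0+1\leq d_0$ ways of setting its newly fixed copies in the remaining subtrees. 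Case (iii) is precisely the interaction between fixed and unfixed coordinates that you flag as ``the point needing the most care,'' but your resolution (``the $S$-part is a deterministic function of $\epsilon$-$GR_{\mathcal{T}}$, so unfixed bits carry no multiplicity'') does not handle it: the multiplicity there comes from the newly fixed copies of $C$ at $p$ coexisting with its still-unfixed copies elsewhere. With these two corrections your argument becomes the paper's proof.
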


\begin{proof}
%For a tree $\mathcal{T}$ with splitting nodes $S$, cutting a node $p$
%result in several subtrees $\{\mathcal{T}_i\}$.

%To proof this lemma, we want to show that, for each variable or
%clause, there are at most $d$ possible combination of assignments of
%it in all subtrees $\mathcal{T}_i$.

For each variable $x$, there are $2$($\leq d$) possible values. For each clause $C$, let $d_0(\leq d)$ be the number of subtrees created by splitting at $p$. There are three different cases.

Suppose that $C$ does not appear in any previous splitting node. This implies that
$C$ only appears in $\mathcal{T}$, then there are $d_0$ possible
way of assigning values to the bit for $C$, s.t. exactly one of $\mathcal{T}_i$'s whose bit for $C$ is set to $1$.

Suppose that $C$ appears in some previous splitting nodes, and its value is fixed
in $\epsilon$-$GR_{T}$. If the bit for $C$ is assigned $1$, then there are
$d_0$ possible assignments to $C$ similar as above, otherwise,
the only possible way is to set all bits for $C$ to $0$.

Suppose that $C$ appears in some previous splitting nodes, but its
value is unfixed. $C$ must appear as unfixed in at least one
subtree. Without loss of generality, we assume that $C$ appears in
subtrees $\mathcal{T}_{1}, \mathcal{T}_{2}\cdots \mathcal{T}_{e_0}$,
where $e_0\geq 1$. The values of $C$ in $\mathcal{T}_{1},
\mathcal{T}_{2}\cdots \mathcal{T}_{e_0}$ are still unfixed, so there
are $d_0 - e_0 + 1\leq d_0$ possible assignments of $C$ in the
subtrees $T_{e_0 + 1}, \cdots, T_{d_0}$, the first one sets all to
$0$, and the $i(\geq 2)$-th one sets the bit of $C$ in the subtree
$\mathcal{T}_{i + e_0 -1}$ to $1$ and the rest to $0$.

Since there are at most $(1-\epsilon)\tw(\phi)$ unfixed values in
$p$, the number of different combinations of
$\epsilon$-$GR_{\mathcal{T}_i}$ consistent with
$\epsilon$-$GR_{\mathcal{T}}$ is at most $d^{(1-\epsilon)\tw(\phi)}$
\end{proof}

\subsection{Warm-up: a tradeoff algorithm for a given path decomposition}
Here we are dealing with the simpler case
where we are given a \emph{path} decomposition of the incidence graph of a
formula, together with the formula (the main complication occurs for \emph{tree} decompositions). 
Algorithm~\ref{alg:hybrid}
depicts the framework of a tradeoff algorithm, where $\mathcal{T}$
is a path decomposition, $\mathcal{S}$ is a set of previous
splitting nodes, $\epsilon$-$GR_{\mathcal{T}}$ is an assignment for
$\mathcal{T}$ with $\mathcal{S}$. This family of algorithms is
parameterized with (i) $\epsilon$ and (ii) the splitting algorithm in
line~\ref{line:splitting}. 

%\begin{remark}
%When we later on prove the ``optimality with respect to the given technique'', we refer 
%to a lower bound of only on how we implement line~\ref{line:splitting}.  
%\end{remark}

Let $\pw(\phi)$ be the width of the given decomposition.
The splitting algorithm chooses the node in the center
of the corresponding path segment such that the splitting operation results-in
two almost same-length segments.

Each segment will contain at most $2$ previous splitting nodes
throughout this procedure. By
Lemma~\ref{lem:assignment_consistent_count}, the number of
assignments in each segment is at most $2^{2\pw(\phi)}$. Now, we
consider a $\epsilon$-$GR_{\mathcal{T}}$ of $\mathcal{T}$. There are
at most $2^{2\epsilon\pw(\phi)}$ assignments
$R_{\mathcal{T}}\in\epsilon$-$GR_{\mathcal{T}}$. Number them from
$1$ to $|\epsilon$-$GR_{\mathcal{T}}|$. Denote by $M(\mathcal{T},
\epsilon$-$GR_{\mathcal{T}})$ a $2^{2\epsilon\pw(\phi)}$ array,
where the $i$-th entry indicates whether the $i$-th assignments of
$\epsilon$-$GR_{\mathcal{T}}$ can be \emph{satisfied}.
$M(\mathcal{T}, \epsilon$-$GR_{\mathcal{T}})$ can be computed by
Algorithm~\ref{alg:hybrid}.

This algorithm has structure similar to
the recursive algorithm, but instead of fixing
an assignment of two sub-segments it fixes only
$(1-\epsilon)\pw(\phi)$ bits and recurses on the two components with
two $\epsilon$-assignment groups. Intermediate results are stored to save time as in the dynamic-programming algorithm.

\begin{algorithm}[h]
\caption{SAT-hybrid($\mathcal{T}$, $\mathcal{S}$, $\epsilon$-$GR_{\mathcal{T}})$}
\label{alg:hybrid}
\begin{algorithmic}[1]
\STATE $M(\mathcal{T}, \epsilon$-$GR_{\mathcal{T}}) \leftarrow$ all zero matrix
\IF {all nodes in $\mathcal{T}$ are previous splitting nodes}
  \FOR {$j\leftarrow 1$ \TO $|\epsilon$-$GP_{\mathcal{T}}|$}
  \STATE Let $R_{\mathcal{T}}$ be the $j$th assignment in $\epsilon$-$GR_{\mathcal{T}}$
    \IF {all entries for a clause in $R_{\mathcal{T}}$ assigned $1$ are satisfied by some variables in ${\mathcal{T}}$}
      \STATE $M(\mathcal{T}, \epsilon$-$GR_{\mathcal{T}})_{j} \leftarrow 1$
    \ENDIF
  \ENDFOR
\ELSE
  \STATE \vspace{5pt}\framebox{\textsf{Split at the node returned by a splitting algorithm given $\mathcal{T}$, $\mathcal{S}$}}\label{line:splitting}\vspace{5pt}
  \STATE Denote the subtrees after the splitting as $\mathcal{T}_i$'s
  \FORALL {$\epsilon$-group assignments $\epsilon$-$GR_{\mathcal{T}_i}$ $\forall i$, which are all consistent with $\epsilon$-$GR_\mathcal{T}$ by fixing $(1 - \epsilon)\tw(\phi)$ entries}
    \STATE $\forall i$, $M(\mathcal{T}_i, \epsilon$-$GR_{\mathcal{T}_i}) \leftarrow\textrm{SAT-hybrid}(P, \mathcal{T}_i, \epsilon$-$GR_{\mathcal{T}_i})$
  \FOR {$j\leftarrow 1$ \TO $|\epsilon$-$GP_{\mathcal{T}}|$}
    \STATE Let $R_{\mathcal{T}}$ be the $j$th assignment in $\epsilon$-$GR_{\mathcal{T}}$
      \FORALL {$R_{\mathcal{T}_i}\in \epsilon$-$GR_{\mathcal{T}_i}$, $\forall i$}
        \IF {$M(\mathcal{T}_i,R_{\mathcal{T}_i}) = 1, \forall i$ \AND $R_{\mathcal{T}_i}$'s are all consistent with $R_{\mathcal{T}}$}
          \STATE $M(\mathcal{T}, \epsilon$-$GR_{\mathcal{T}})_{j} \leftarrow 1$
        \ENDIF
      \ENDFOR
    \ENDFOR
  \ENDFOR
\ENDIF
\RETURN $M(\mathcal{T}, \epsilon$-$GR_{\mathcal{T}})$
\end{algorithmic}
\end{algorithm}

The algorithm will recurse $\log |\phi|$ times. In each recursive
step, we only need $O(2^{2\epsilon\pw(\phi)})$ bits to store the
array $M$. Hence, we use space $O^{*}(2^{2\epsilon\pw(\phi)})$. At
every step of the recursion, we need to call the algorithm
recursively $O(2^{(1-\epsilon)\pw(\phi)})$ times. So, the running
time is $O(2^{\log {|\phi|}(1-\epsilon)\pw(\phi)})$.

%\REMARK{Periklis says: Write one sentence about the correctness}
The correctness follows from the correctness of the recursive
algorithm given in Section~\ref{subsec:space-efficient}. Basically,
this algorithm enumerates all the assignments as in the recursive
algorithm except that some intermediate results are stored to reduce
running time.

\subsection{A tradeoff algorithm for a given tree decomposition}  \label{subsec:hybrid}
The tradeoff algorithm for the given tree decomposition is again Algorithm \ref{alg:hybrid}, where we implement line~\ref{line:splitting} differently than in the case of a given path decomposition.

\paragraph{The main technical complication.} The idea of splitting at the center node as in the previous section does not work any more. Such splittings may create subtrees with more than $2$ previous splitting nodes. In fact, the way we dealt with this in Section \ref{subsec:space-efficient} was not to do anything. But now, we have to deal with assignment groups and thus the space requirement may be as bad as $\Omega^*(2^{c\epsilon\tw(\phi)})$, where $c$ is the maximum number of previous splitting nodes in the recursion. Overcoming this issue requires a new idea. Implementing this idea results-in an algorithm with the property that throughout its execution the number of splitting nodes in the recursive procedure is $\leq2$, or more generally a constant.
%Now, we want to introduce a hybrid for this issue, which can archive the running time $O^{*}(2^{1.441(1-\epsilon)\tw(\phi)\log {|\phi|}})$, and space $O^{*}(2^{\epsilon\tw(\phi)})$.

Let $\alpha$ be a parameter satisfying $0 < \alpha < 1/2$. A
$type_{\ell}$ tree is defined to be a tree with $\ell$ previous
splitting nodes. Here is a splitting algorithm $\mathcal{H}_2$
satisfying the restriction mentioned above. This specific splitting algorithm is an implementation of line~\ref{line:splitting} (i.e. replace the box 
in line~\ref{line:splitting} with Algorithm \ref{alg:line10-first}).

\begin{figure*}
\centering{\includegraphics[height=120pt]{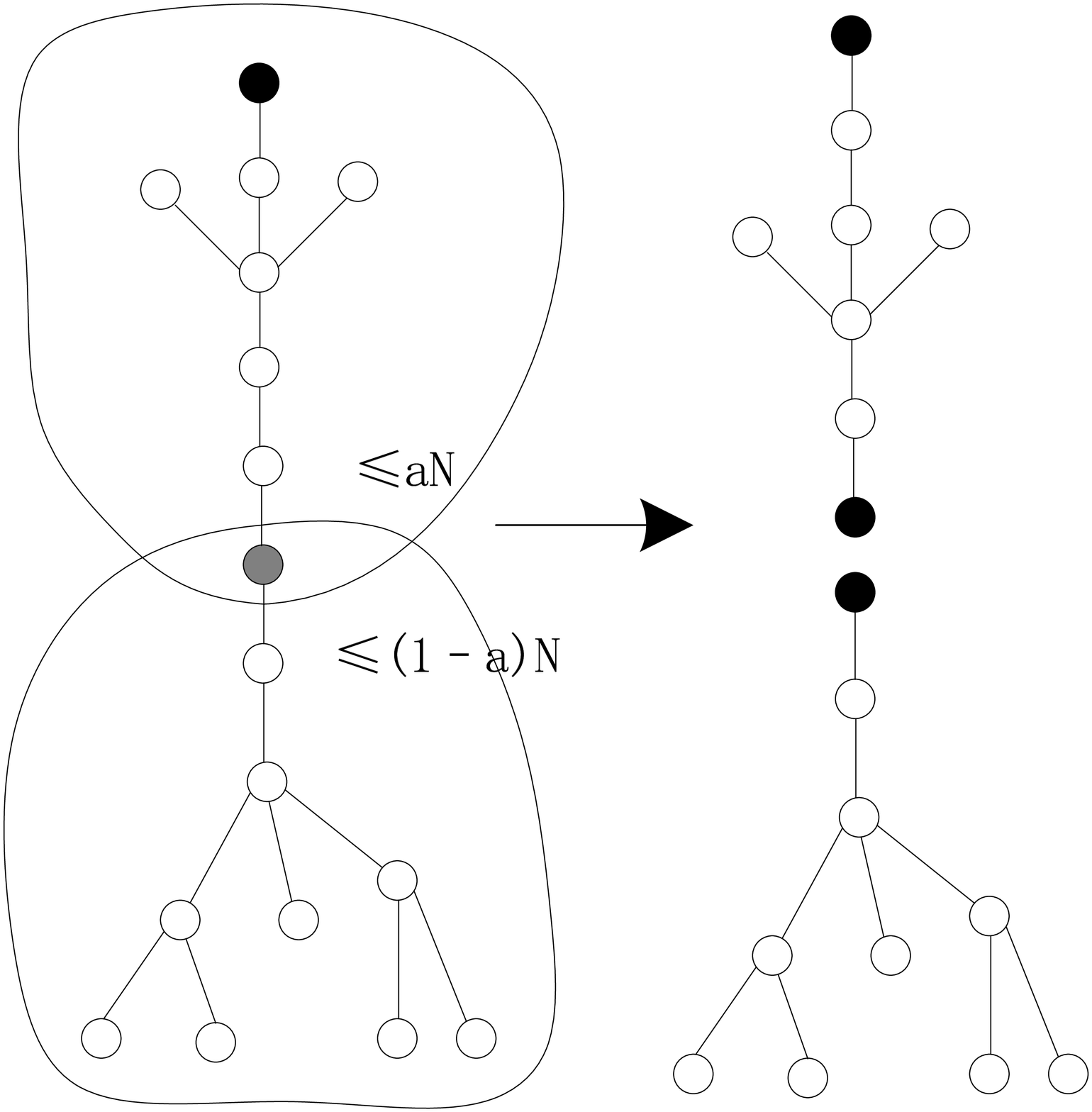}}
\hspace{20pt}{\includegraphics[height=120pt]{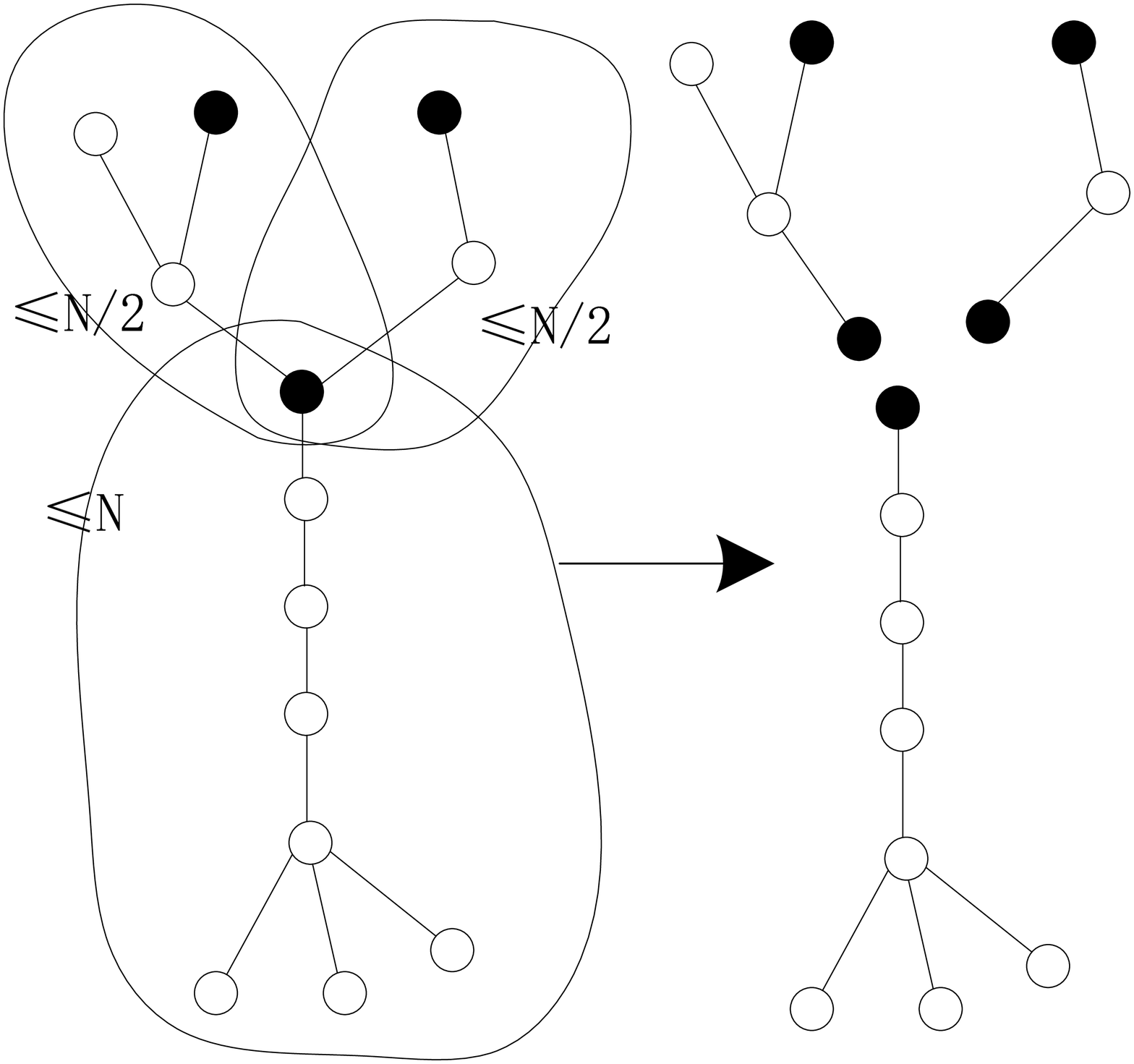}}
\hspace{20pt}{\includegraphics[height=120pt]{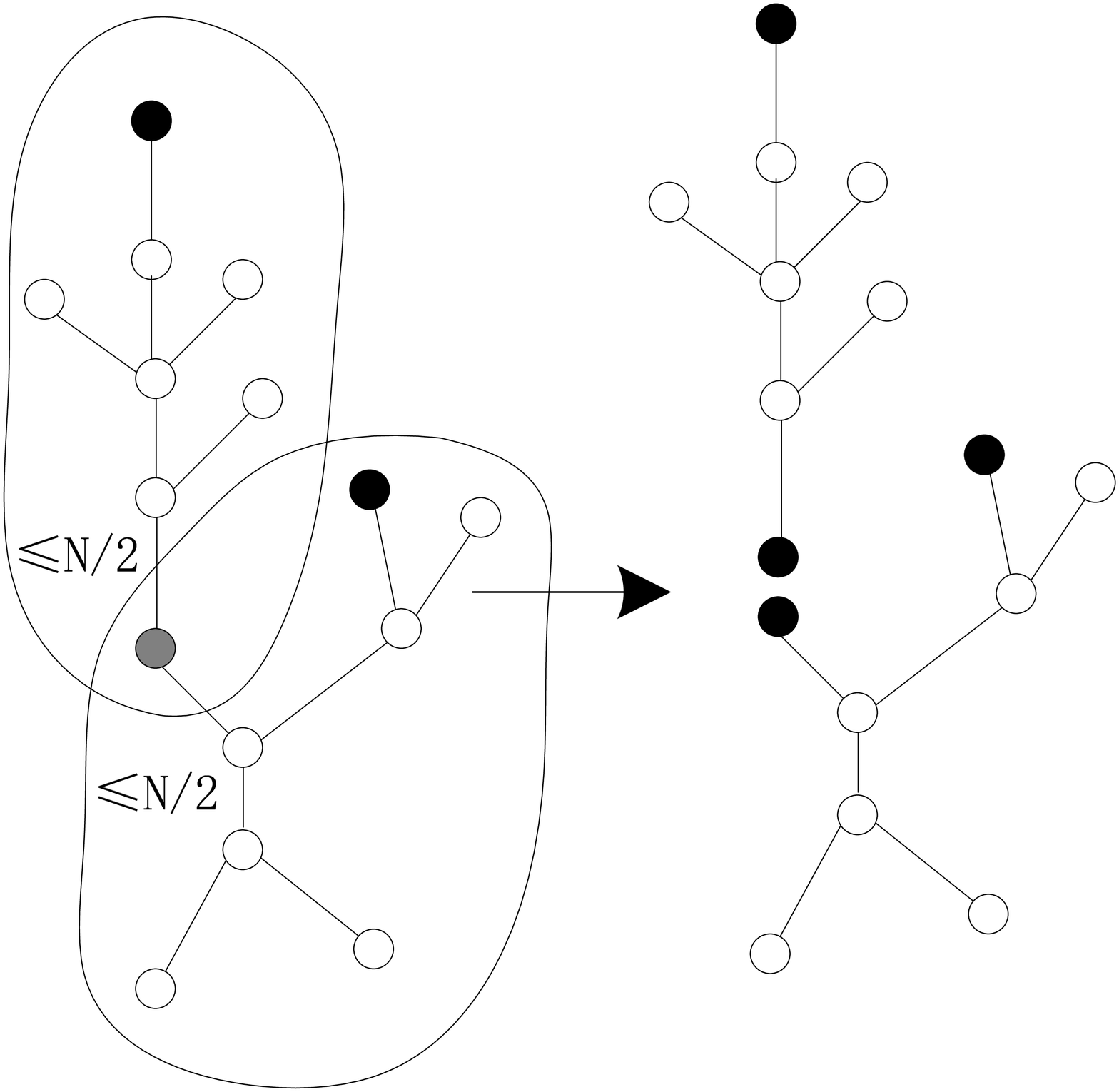}}
\caption{Finding the splitting node in three different
cases.}\label{fig:algorithm_des3_simple}
\end{figure*}

\begin{algorithm} 
\caption{Splitting algorithm $\mathcal{H}_2$ with $\mathcal{T},\mathcal{S}$ as parameters}
\begin{algorithmic}[1] \label{alg:line10-first}
\IF {$\mathcal{T}$ with $\mathcal{S}$ is a $type_{0}$ tree}
\RETURN the $1/2$-splitting node
\ELSIF {$\mathcal{T}$ with $\mathcal{S}$ is a $type_{1}$ tree}
\STATE consider the previous splitting node
as the root
\RETURN the $\alpha$-splitting node
\ELSIF {$\mathcal{T}$ with $\mathcal{S}$ is a $type_{2}$ tree}
\STATE suppose the two splitting nodes
are $p_1$ and $p_2$
\STATE consider $p_1$ as the root and compute the
$1/2$-splitting node $m$
\IF {$m$ is on the path between $p_1$ and $p_2$}
\RETURN $m$
\ELSE
\RETURN the least common ancestor
$c$ of $m$ and $p_2$
\ENDIF
\ENDIF
\end{algorithmic}
\end{algorithm}
\paragraph{Putting everything together}Utilizing this splitting algorithm, we obtain an algorithm for tree decompositions.
%Its running time is given in the following lemma.
%\begin{lemma}\label{lem:2d-hybrid}
Denote by $T_1(N)$, $T_2(N)$ the running time of
$\mathcal{H}_2$ on $type_{1}$ or $type_{2}$ tree each of $N$ nodes respectively.
%, then
%\begin{eqnarray*}
%T_{1}(N) &\leq& O(d^{(1-\epsilon)\tw(\phi)})\left(T_{1}\left((1 -
%\alpha)N\right) + T_{2}\left(\alpha N\right)\right) +
%2^{O(\tw(\phi))}
%\end{eqnarray*}
%and
%\begin{eqnarray*}
%T_{2}(N) &\leq& O(d^{(1-\epsilon)\tw(\phi)})\left(T_{1}(N) + T_{2}(N
%/ 2)\right) + 2^{O(\tw(\phi))}
%\end{eqnarray*}
%\end{lemma}

%\begin{proof}
Splitting a $type_{1}$ tree will result in multiple $type_{1}$ trees
with size at most $(1 - \alpha) N$ and one $type_{2}$ tree with size
at most $\alpha N$, so we have
\begin{eqnarray*}
T_{1}(N) &\leq& O(d^{(1-\epsilon)\tw(\phi)})\left(T_{1}\left((1 -
\alpha)N\right) + T_{2}\left(\alpha N\right)\right) +
2^{O(\tw(\phi))}
\end{eqnarray*}
Splitting a $type_{2}$ tree, when the $1/2$-splitting is on the path between $p_1$ and $p_2$ will result in two $type_{2}$ trees with size at most $N / 2$ and multiple $type_{1}$ trees. Otherwise, the splitting operation will result in two $type_{2}$ trees with size at most $N / 2$ and several $type_{1}$ trees. So, we have
\begin{eqnarray*}
T_{2}(N) &\leq& O(d^{(1-\epsilon)\tw(\phi)})\left(T_{1}(N) +
T_{2}(N / 2)\right) + 2^{O(\tw(\phi))}
\end{eqnarray*}
%\end{proof}

By solving these recurrences, the running time and space of the
hybrid algorithm can be summarized as follows.
\begin{theorem}\label{the:2d-time-complexity}
$\sat$ of tree-width $\tw(\phi)$ can be solved in simultaneously
$O^{*}(d^{1.441(1-\epsilon)\tw(\phi)\log N})$ time and
$O^{*}(2^{2\epsilon\tw(\phi)})$ space, where $\epsilon$ is a free
parameter, $0<\epsilon<1$.
\end{theorem}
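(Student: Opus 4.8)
The plan is to solve the coupled recurrences for $T_1(N)$ and $T_2(N)$ stated just before Theorem~\ref{the:2d-time-complexity}, and to verify the space bound by counting the number of $\epsilon$-assignment groups that must simultaneously be held in memory along any root-to-leaf path of the recursion. For the space bound, the key observation is that the splitting algorithm $\mathcal{H}_2$ (Algorithm~\ref{alg:line10-first}) maintains the invariant that every subtree occurring in the recursion is a $type_0$, $type_1$, or $type_2$ tree, so by Lemma~\ref{lem:assignment_count} the array $M(\mathcal{T},\epsilon\text{-}GR_{\mathcal{T}})$ has at most $2^{2\epsilon\tw(\phi)}$ entries; since the recursion depth is $O(\log N)$ and at each level only the $M$-arrays of the current node and its (constantly many) children need be stored, the total space is $O^{*}(2^{2\epsilon\tw(\phi)})$.

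For the running time I would first argue that the recursion terminates in $O(\log N)$ levels: a $type_2$ split always halves the size, and a $type_1$ split produces a $type_2$ child of size $\le \alpha N$ together with $type_1$ children of size $\le(1-\alpha)N$, so after at most one $type_1$ step the size is controlled; the worst case is a run of $type_1$ steps, whose depth is governed by the shrink factor $(1-\alpha)$. The clean way to package this is to set $T(N)=\max\{T_1(N),T_2(N)\}$ and observe from the two displayed recurrences that $T(N)\le O(d^{(1-\epsilon)\tw(\phi)})\big(T((1-\alpha)N)+T(N/2)\big)+2^{O(\tw(\phi))}$. Unrolling, the number of leaves in the recursion tree is the number of sign sequences of ``$\times(1-\alpha)$'' and ``$\times\tfrac12$'' steps that bring $N$ down to $O(1)$; optimizing $\alpha$ to balance the two branch lengths, the recursion-tree size is $d^{(1-\epsilon)\tw(\phi)\cdot L}$ where $L$ is the depth, and the best choice of $\alpha$ yields $L=\lambda_2\log N$ with $\lambda_2$ the quantity $-\log x_2$ from Theorem~\ref{the:algorithm_complexity} ($x_2$ the dominant root of $X^2-X-1=0$, i.e. the golden-ratio reciprocal, giving $\lambda_2\approx 1.441$). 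Hence $T(|\phi|)=O^{*}(d^{1.441(1-\epsilon)\tw(\phi)\log N})$, and since every branch factor in Algorithm~\ref{alg:hybrid} is bounded by $d^{(1-\epsilon)\tw(\phi)}$ via Lemma~\ref{lem:assignment_consistent_count_important}, this is the whole cost up to polynomial factors.

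Correctness is inherited: the per-node processing in Algorithm~\ref{alg:hybrid} enumerates exactly the consistent tuples of $\epsilon$-assignment groups and sets the $M$-entry according to Lemma~\ref{thm:satisfiable}, so the $M$-array at the root correctly records which assignments to the (empty) splitting-node set are satisfying, i.e. whether $\phi$ is satisfiable. The only genuinely new ingredient over the path-decomposition warm-up is that $\mathcal{H}_2$ keeps the number of previous splitting nodes bounded by $2$; this is exactly what makes the $M$-arrays small and the recurrences above valid, and it is verified by the case analysis in Algorithm~\ref{alg:line10-first} (a $type_2$ split either lands on the $p_1$--$p_2$ path, leaving each child with $\le 2$ splitting nodes, or is replaced by the least common ancestor of the $1/2$-splitting node and $p_2$, which again produces only $type_1$ and $type_2$ children).

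The main obstacle I anticipate is the exponent optimization — proving that balancing the $(1-\alpha)$ and $\tfrac12$ branch lengths yields precisely the constant $\lambda_2=-\log x_2$ and, more importantly, that the $type_1$/$type_2$ distinction in the recurrences does not cost more than this. One must show that a $type_1$ subtree cannot generate an unboundedly long chain of $type_1$ subtrees without size decay, and then verify that the generating function of the recursion tree is dominated by the characteristic equation $X^c-X^{c-1}-\dots-1=0$ at $c=2$; getting the bookkeeping exactly right (rather than a slightly worse constant) is the delicate part, and it is also the step that generalizes to arbitrary $c\ge 2$ for Theorem~\ref{the:algorithm_complexity}.
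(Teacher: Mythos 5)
Your overall strategy --- bound the space by the invariant that $\mathcal{H}_2$ never creates $type_i$ trees for $i\ge 3$ (so each $M$-array has at most $2^{2\epsilon\tw(\phi)}$ entries and the $O(\log N)$-deep recursion stack holds only constantly many of them per level), and get the time by solving the two displayed recurrences with an optimized $\alpha$ --- is exactly the paper's, and the space half of your argument is fine. But the specific recurrence manipulation you propose for the time bound does not work. Setting $T(N)=\max\{T_1(N),T_2(N)\}$ and claiming $T(N)\le O(d^{(1-\epsilon)\tw(\phi)})\bigl(T((1-\alpha)N)+T(N/2)\bigr)+2^{O(\tw(\phi))}$ is invalid: the $type_2$ recurrence is $T_2(N)\le O(d^{(1-\epsilon)\tw(\phi)})\bigl(T_1(N)+T_2(N/2)\bigr)+2^{O(\tw(\phi))}$, and the $T_1(N)$ term there is at \emph{full} size $N$, so the max-combination just bounds $T(N)$ by $O(d^{(1-\epsilon)\tw(\phi)})\,T(N)+\cdots$, which is circular. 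Consequently the balancing you describe --- matching the $(1-\alpha)$-branch against the $1/2$-branch --- is not the optimization that produces $1.441$; it would push you toward $\alpha=1/2$, which is not even in the allowed range and has nothing to do with the golden ratio.

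The correct route (the paper's) is to substitute the $T_2$ bound into the $T_1$ recurrence, obtaining a recurrence in $T_1$ alone with two branches of \emph{different costs}: $T_1(N)\le O(d^{(1-\epsilon)\tw(\phi)})\,T_1((1-\alpha)N)+O(d^{2(1-\epsilon)\tw(\phi)})\,T_1(\alpha N)+2^{O(\tw(\phi))}$. One then balances cost against shrinkage by requiring $\alpha=(1-\alpha)^2$, i.e.\ $\alpha=\frac{3-\sqrt{5}}{2}$ and $1-\alpha=1/\varphi$ with $\varphi$ the golden ratio; writing $N=(1-\alpha)^{-k}$ turns this into a Fibonacci-type recursion with characteristic equation $x^2=Ax+A^2$ (for $A=d^{(1-\epsilon)\tw(\phi)}$), whose solution is $O^{*}\bigl(A^{\log N/\log\varphi}\bigr)=O^{*}\bigl(d^{1.441(1-\epsilon)\tw(\phi)\log N}\bigr)$ since $1/\log_2\varphi\approx 1.441$. (Note also that $\lambda_2$ equals $1/\log_2 x_2$ for $x_2=\varphi$ the dominant root of $X^2-X-1=0$, not $-\log x_2$ of its reciprocal as you wrote; your numerical value is right but the formula is not.) Two smaller points: the $2^{2\epsilon\tw(\phi)}$ bound on an $\epsilon$-group's size comes from the definition of assignment groups (only $\epsilon\tw(\phi)$ bits per splitting node are free), not from Lemma~\ref{lem:assignment_count} alone, which gives $2^{2\tw(\phi)}$; and the dropped $T_2(\alpha N/2)$ term in the substitution needs the (routine) remark that the resulting geometric tail is dominated.
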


\begin{proof}
Set $\alpha = \frac{3-\sqrt{5}}{2}$, we have
\begin{eqnarray*}
T_{1}(N) &\leq& O(d^{(1-\epsilon)\tw(\phi)})\left(T_{1}\left((1 -
\alpha)N\right) + T_{2}\left(\alpha N\right)\right) +
2^{O(\tw(\phi))}\\
&\leq& O(d^{(1-\epsilon)\tw(\phi)})T_{1}\left((1 - \alpha)N\right) +
O(d^{2(1-\epsilon)\tw(\phi)})T_{1}\left(\alpha N\right) +
2^{O(\tw(\phi))}
\end{eqnarray*}
%Applying the Master Theorem,
Therefore, we have
\begin{eqnarray*}
T_{1}(N) &\leq&
O^{*}(d^{\frac{1}{-\log{(1-\alpha)}}(1-\epsilon)\tw(\phi)\log N})
\end{eqnarray*}
Since $type_i$, $i\ge 3$ trees are not allowed, the space requirement is $O^{*}(2^{2\epsilon\tw(\phi)})$
\end{proof}

\paragraph{Optimality of the splitting procedures}
The splitting algorithm presented above is a specific one without
creating $type_i,\forall i\ge 3$ trees. Actually, it can be shown
that this specific splitting algorithm is optimal modulo our
technique.
\begin{definition}
Denote by $\mathfrak{A}_c$ ($\forall c\ge 2$) the family of
algorithms for $\sat$ with bounded tree-width following the
framework in Algorithm~\ref{alg:hybrid} which use a splitting
algorithm without creating $type_i$ trees $\forall i>c$.
\end{definition}

We lower bound the running time of all algorithms in
$\mathfrak{A}_2$. The hard instance comes from fibonacci trees.

\begin{definition}
For any positive integer $h$, a \emph{$h$-fibonacci tree}(denoted as
$F_{h}$) is recursively defined as following,
\begin{enumerate}[(1)]
\item if $h = 1$, $F_{h}$ contains only $1$ node;
\item if $h = 2$, $F_{h}$ contains $2$ nodes and one edge between them;
\item if $h > 2$, $F_{h}$ is constructed by a root connecting two subtrees $F_{h-2}$ and $F_{h-1}$.
\end{enumerate}
An \emph{extended $(h,r)$-fibonacci tree} (denote as $F^{*}_{h, r}$)
is constructed by adding one edge between the root $r$ and the root
of subtree $F_{h}$.
\end{definition}

\begin{figure}[h]
\centering{\includegraphics[height=80pt]{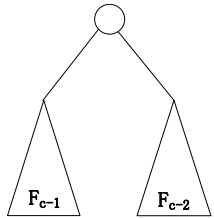}}
\caption{An \emph{$h$-fibonacci tree} ($F_{
h}$).}\label{fig:optimality_of_the_hybrid_simple}
\end{figure}

An $h$-fibonacci tree is indeed
the worst-case for splitting algorithms without creating $type_i$ trees $\forall i\ge 3$. Formally, we have the following lemma.
\begin{lemma}\label{lem:worst_case_bound_simple}
For each $h\geq 1$, in an \emph{extended $(h,r)$-fibonacci tree},
suppose $r$ is a splitting node, let $N$ be the size of the tree.
Then every splitting algorithm which does not create $type_i$ trees
$\forall i\ge 3$ runs in
$\Omega^{*}(d^{1.441(1-\epsilon)\tw(\phi)\log N})$ time.
\end{lemma}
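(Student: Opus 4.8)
\textbf{Proof proposal for Lemma~\ref{lem:worst_case_bound_simple}.}
The plan is to show that on an extended $(h,r)$-fibonacci tree, \emph{any} splitting algorithm that never produces a $type_i$ tree with $i\ge 3$ is forced, essentially, into the same recursion pattern that governs $T_1(N)$ in Theorem~\ref{the:2d-time-complexity}, so that its running time satisfies a recurrence whose solution is $\Omega^{*}(d^{1.441(1-\epsilon)\tw(\phi)\log N})$. The exponent $1.441 = \lambda_2 = \frac{1}{-\log(1-\alpha)}$ with $\alpha=\frac{3-\sqrt5}{2}$, so what we really must prove is that on fibonacci trees one cannot beat the split ratio $(1-\alpha):\alpha$ — i.e. every legal split of $F^*_{h,r}$ (viewed as a $type_1$ tree with root $r$ as its single previous splitting node) either leaves one child tree of size $\ge (1-\alpha)N$ that is again a $type_1$ extended-fibonacci-like tree, or forces the creation of a $type_2$ subtree of size $\ge \alpha N$. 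Combined with the fact that each split multiplies the number of recursive calls by $\Theta(d^{(1-\epsilon)\tw(\phi)})$ per subtree (Lemma~\ref{lem:assignment_consistent_count_important}), this yields the matching lower bound on the recursion-tree size, hence on time.

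First I would fix notation: let $T_1^{F}(N)$ (resp. $T_2^{F}(N)$) denote the minimum number of leaves in the recursion tree of \emph{any} $\mathfrak{A}_2$-splitting-algorithm when run on an extended fibonacci $type_1$ (resp. $type_2$) tree of size $N$ built from fibonacci pieces. I would establish, by structural analysis of where a splitting node $p$ can sit in $F^*_{h,r}$, the lower-bound recurrences
\begin{eqnarray*}
T_1^{F}(N) &\ge& T_1^{F}\big((1-\alpha)N\big) + T_2^{F}\big(\alpha N\big),\\
T_2^{F}(N) &\ge& T_1^{F}(N') + T_2^{F}(N/2),
\end{eqnarray*}
valid up to $\pm O(1)$ in the arguments and ignoring the $d^{(1-\epsilon)\tw(\phi)}$ branching factor which I reinstate at the end. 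The key combinatorial claim is that the fibonacci branching — each internal node splitting its subtree into an $F_{h-1}$ part and an $F_{h-2}$ part, whose sizes are in the golden ratio — means the heaviest "spine" one can isolate by an $\alpha$-split already saturates the $(1-\alpha)$ bound of Lemma~\ref{lemma:cutting}, and moreover the residual piece that carries the second previous splitting node is a $type_2$ tree of size exactly $\Theta(\alpha N)$ that is again (extended-)fibonacci-structured, so the recursion reproduces itself. Here the choice $\alpha = \frac{3-\sqrt5}{2} = 1-\varphi^{-1}$ (with $\varphi$ the golden ratio) is exactly what makes $F_{h-1}$ play the role of the $(1-\alpha)N$ child and $F_{h-2}$ the role of the $\alpha N$ child.

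Then I would solve the recurrences: unrolling the $T_2^F$ recurrence into $T_1^F$ gives
\begin{eqnarray*}
T_1^{F}(N) &\ge& T_1^{F}\big((1-\alpha)N\big) + c\cdot d^{(1-\epsilon)\tw(\phi)}\, T_1^{F}\big(\alpha N\big),
\end{eqnarray*}
(the $type_2$ subtree being resolved within a constant number of further splits into $type_1$ pieces of comparable size), and the standard Akra–Bazzi / generating-function argument shows the solution grows like $d^{(1-\epsilon)\tw(\phi)\cdot \log_{1/(1-\alpha)} N} = d^{\,\lambda_2 (1-\epsilon)\tw(\phi)\log N}$, since $-\log(1-\alpha) = -\log\varphi^{-1} = \log\varphi$ and $1/\log\varphi \approx 1.441$. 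Reinstating the factor $d^{(1-\epsilon)\tw(\phi)}$ per subtree (a lower bound on the number of consistent $\epsilon$-assignment groups that the algorithm must enumerate at each split, which can be made tight by padding the fibonacci-tree bags with a full width-$\tw(\phi)$ set of fresh variables) gives the claimed $\Omega^{*}(d^{1.441(1-\epsilon)\tw(\phi)\log N})$.

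\textbf{Main obstacle.} The delicate point is the adversary argument showing that \emph{every} legal split is as bad as the golden-ratio split, rather than just exhibiting one bad split of our chosen algorithm. A splitting algorithm in $\mathfrak{A}_2$ has freedom in \emph{which} node it picks, and one must rule out that some clever choice of $p$ — say, deep inside one fibonacci sub-branch — produces two children that are both small $type_1$ or $type_2$ trees, giving a better balance. I expect this is handled by observing that because $r$ (the extra root) is already a previous splitting node, any split node $p$ whose removal separates $r$ from the fibonacci bulk creates a $type_2$ tree on the $r$-side, and that the only way to avoid a giant $type_2$ or $type_3$ piece is to cut along the fibonacci spine, where the size ratios are forced to be golden; making this exhaustive over the $O(1)$ topological cases for the position of $p$ relative to the two previous splitting nodes is the technical heart. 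A secondary subtlety is ensuring the fibonacci pieces can be equipped with a CNF formula of tree-width exactly $\tw(\phi)$ realizing the $d^{(1-\epsilon)\tw(\phi)}$ branching at every split; this is a routine but necessary gadget construction.
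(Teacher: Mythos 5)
Your overall strategy — fibonacci trees force golden-ratio splits, and $N=\Theta(\varphi^h)$ converts recursion depth $h$ into the exponent $\lambda_2\log N = 1.441\log N$ — is exactly the paper's. But the step you yourself flag as the ``main obstacle'' is the entire content of the lemma, and your proposal does not close it. The paper closes it by inducting on the fibonacci index $h$ rather than on the size $N$: it introduces two auxiliary families, $\mathcal{T}_{1,h}$ (one splitting node attached to an $F_h$) and $\mathcal{T}_{2,h}$ (two splitting nodes both attached to a single node that carries an $F_h$), and proves by induction that their processing times are $\Omega^*(d^{(1-\epsilon)\tw(\phi)h})$ and $\Omega^*(d^{(1-\epsilon)\tw(\phi)(h+1)})$ respectively. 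The adversary analysis then reduces to two cases for $\mathcal{T}_{1,h}$: splitting at the root of $F_h$ leaves a $\mathcal{T}_{1,h-1}$ (cost $1+(h-1)$), while splitting anywhere inside $F_{h-1}$ or $F_{h-2}$ leaves a $\mathcal{T}_{2,h-2}$ on the other side (cost $1+(h-2)+1$); either way the depth is at least $h$. For $\mathcal{T}_{2,h}$ the algorithm is \emph{forced} to split at the unique node on the path between the two previous splitting nodes (anything else creates a $type_3$ tree), which burns a level without touching $F_h$ and yields $h+1$. This exhaustive-but-finite case analysis, parameterized by $h$, is what replaces your unproven claim that ``every legal split is as bad as the golden-ratio split.''

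A second, related problem: your lower-bound recurrences $T_1^F(N)\ge T_1^F((1-\alpha)N)+T_2^F(\alpha N)$ and $T_2^F(N)\ge T_1^F(N')+T_2^F(N/2)$ are transplanted from the algorithm's \emph{upper}-bound analysis, not derived from the hard instance. On the fibonacci instance the $type_2$ configurations that actually arise are always of the degenerate form $\mathcal{T}_{2,h}$, where the only legal move immediately converts back to a $type_1$ tree; there is no ``$T_2^F(N/2)$'' branch to speak of, and the split points available to the adversary are not constrained to the ratios $(1-\alpha):\alpha$. Moreover the paper's time bound is multiplicative along a single root-to-leaf path of the recursion (time $\ge$ product of the $d^{(1-\epsilon)\tw(\phi)}$ branching factors down the deepest path), so a lower bound on \emph{depth} suffices and no leaf-counting or Akra--Bazzi solving is needed. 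Your secondary point — that one must attach a width-$\tw(\phi)$ CNF gadget to each bag so that the $d^{(1-\epsilon)\tw(\phi)}$ enumeration is actually forced — is a fair observation that the paper itself glosses over, so I do not count it against you.
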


Before getting into the proof, we define two special $type_{1}$ and
$type_{2}$ trees : $\mathcal{T}_{1,h}$ and $\mathcal{T}_{2,h}$. A
$\mathcal{T}_{1,h}$ tree is constructed by a splitting node
connected to a subtree $F_{h}$, and a $\mathcal{T}_{2,h}$ tree
constructed by two splitting nodes connected to another node which
has a subtree $F_{h}$. %Now, we claim the following.

\begin{claim}\label{lem:worst_case_bound_simple_prev}
The lower bound of processing time for tree $\mathcal{T}_{1,h}$ is
$\Omega^{*}(d^{(1-\epsilon)\tw(\phi)h})$, and for tree
$\mathcal{T}_{2,h}$, the lower bound is
$\Omega^{*}(d^{(1-\epsilon)\tw(\phi)(h + 1)})$.
\end{claim}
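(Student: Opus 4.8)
The plan is to prove the two lower bounds simultaneously by induction on $h$, mirroring the way the Fibonacci trees are built and the way the splitting algorithm (Algorithm~\ref{alg:line10-first}) is forced to behave on them. The key observation is structural: any splitting algorithm in $\mathfrak{A}_2$, when run on $\mathcal{T}_{1,h}$ or $\mathcal{T}_{2,h}$, is \emph{forced} — on pain of creating a $type_i$ tree with $i\ge 3$ — to pick a splitting node that decomposes the Fibonacci subtree $F_h$ essentially along its recursive $F_{h-1}/F_{h-2}$ structure. So the first step is to establish this ``forcing'' lemma: starting from $\mathcal{T}_{1,h}$ (one previous splitting node attached to $F_h$), splitting anywhere inside $F_h$ either immediately makes a $type_2$ piece that contains (a copy of) $\mathcal{T}_{2,h'}$ for some $h'$ close to $h$, or makes a $type_1$ piece containing $\mathcal{T}_{1,h''}$; and the recursion-depth/size bookkeeping forces $h',h''$ to be at least $h-2$ in the worst branch. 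The analogous statement holds for $\mathcal{T}_{2,h}$: splitting it produces a $type_2$ subtree that again contains $\mathcal{T}_{2,h-1}$ (or better) plus $type_1$ leftovers.

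The second step is the counting. By Lemma~\ref{lem:assignment_consistent_count_important}, at each splitting step the number of consistent $\epsilon$-assignment groups that must be recursed on is $\Omega^*(d^{(1-\epsilon)\tw(\phi)})$ (we only need a \emph{lower} bound on the branching factor, and the construction should arrange that the splitting node's bag actually has $\Theta(\tw(\phi))$ free variables so that $d^{(1-\epsilon)\tw(\phi)}$ recursive calls genuinely occur). Writing $L_1(h)$ and $L_2(h)$ for the running time lower bounds on $\mathcal{T}_{1,h}$ and $\mathcal{T}_{2,h}$, the forcing lemma yields recurrences of the shape
\begin{eqnarray*}
L_1(h) &\geq& \Omega^*(d^{(1-\epsilon)\tw(\phi)}) \cdot \big(L_1(h-1) + L_2(h-2)\big),\\
L_2(h) &\geq& \Omega^*(d^{(1-\epsilon)\tw(\phi)}) \cdot \big(L_1(h) + L_2(h-1)\big),
\end{eqnarray*}
with base cases $L_1(1),L_2(1)=\Omega^*(1)$ (and small-$h$ cases checked by hand). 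Substituting the second into the first and unrolling, the exponents satisfy a Fibonacci-type recursion, so $L_1(h) = \Omega^*(d^{(1-\epsilon)\tw(\phi)\cdot e_1(h)})$ where $e_1(h)$ grows like the index of a Fibonacci number; the right normalization shows $e_1(h)\ge h$ and $e_2(h)\ge h+1$, which is exactly the claim. (The constant $1.441$ in Lemma~\ref{lem:worst_case_bound_simple} then comes out because $|F_h| = \Theta(\phi_g^{\,h})$ with $\phi_g$ the golden ratio, so $h = \Theta(\log N / \log \phi_g) = 1.441\ldots \log N$; this conversion is done in the proof of the lemma itself, not in the claim.)

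The third step is just to verify that the forcing lemma is tight in the sense needed — i.e. that the adversary's best move really does lose a factor of $d^{(1-\epsilon)\tw(\phi)}$ per level and that no clever choice of splitting node (within the $type_{\le 2}$ restriction) avoids recursing on a subtree of size $\ge (1-\alpha_{\mathrm{opt}})N$ with $\ge h-O(1)$ Fibonacci-depth. I expect \textbf{this forcing step to be the main obstacle}: one has to enumerate the possible positions of the chosen splitting node relative to the $F_{h-1}/F_{h-2}$ split point and the existing splitting node(s), show each case either violates the $type$-restriction or produces one of the two canonical residual trees $\mathcal{T}_{1,h'}$, $\mathcal{T}_{2,h'}$ with $h'$ large, and argue this uniformly over \emph{all} algorithms in $\mathfrak{A}_2$ rather than a fixed one. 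The rest — setting up the recurrence and reading off the Fibonacci exponent — is routine once the case analysis is in place.
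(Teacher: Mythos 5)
Your plan matches the paper's proof: a simultaneous induction on $h$ in which the $type_{\le 2}$ restriction forces where the split can go, each level contributes a factor $d^{(1-\epsilon)\tw(\phi)}$ via Lemma~\ref{lem:assignment_consistent_count_important}, and the case analysis you defer is exactly the paper's short one --- for $\mathcal{T}_{1,h}$ either split at the root of $F_h$ (leaving a $\mathcal{T}_{1,h-1}$ piece, exponent $1+(h-1)$) or inside $F_{h-1}$/$F_{h-2}$ (leaving a $type_2$ piece containing $F_{h-2}$, exponent $1+(h-2)+1$), both giving $h$, while $\mathcal{T}_{2,h}$ is forced to split at the connecting node, giving $1+h$. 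The only correction is that your recurrences should take a minimum over the algorithm's admissible splits rather than a sum of both residual trees (only one split actually happens), which changes nothing here since both branches yield the same exponent.
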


\begin{proof}
We proceed by induction. Base cases are trivial where $h\leq 2$.
Suppose the statement is correct for any $h_0 < h$. For tree
$\mathcal{T}_{1,h}$, by inductive hypothesis, if we split at the root of $F_h$, the processing
time is at least $\Omega^{*}(d^{(1-\epsilon)\tw(\phi)(1 + (h -
1))})$, if we split at some node inside the subtrees $F_{h-1}$ or $F_{h-2}$, the processing
time is at least $\Omega^{*}(d^{(1-\epsilon)\tw(\phi)(1 + (h - 2) +
1)})$. So the lower bound for tree $\mathcal{T}_{1,h}$ is
$\Omega^{*}(d^{(1-\epsilon)\tw(\phi)h})$. For tree
$\mathcal{T}_{2,h}$, we must split at the node connecting two splitting
nodes, so again by inductive hypothesis the lower bound is $\Omega^{*}(2^{(1-\epsilon)\tw(\phi)(1 + h)})$.
\end{proof}

\begin{proof}(Proof of Lemma~\ref{lem:worst_case_bound_simple})
Set $\alpha = \frac{3-\sqrt{5}}{2}$, since the number of nodes in
$F_h$ is $\Omega\left(\left({\frac{1+\sqrt{5}}{2}}\right)^{h}\right)$, we have a lower bound for
running time
$\Omega^{*}\left(d^{\frac{1}{-\log{(1-\alpha)}}{(1-\epsilon)\tw(\phi)\log
N}}\right)$.
\end{proof}

%\REMARK{Periklis says: be a little more formal defining the family of algorithms where optimality is achieved!}

Combining the above lemma and the algorithmic result, the following theorem can be proved, which states that our algorithm is optimal within $\mathfrak{A}_2$.
\begin{theorem}\label{the:worst_case_bound_simple}
For fixed $\epsilon$, $0<\epsilon<1$, the running time of an optimal
algorithm in $\mathfrak{A}_2$ is
$\Theta^{*}(3^{1.441(1-\epsilon)\tw(\phi)\log N})$.
\end{theorem}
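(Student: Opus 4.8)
\textbf{Proof proposal for Theorem~\ref{the:worst_case_bound_simple}.}
The plan is to combine the two bounds established in the preceding material: Theorem~\ref{the:2d-time-complexity} gives the upper bound $O^*(3^{1.441(1-\epsilon)\tw(\phi)\log N})$ for an algorithm in $\mathfrak{A}_2$ (using $d\le 3$ for a tree decomposition and the choice $\alpha=\frac{3-\sqrt5}{2}$, for which $\frac{1}{-\log(1-\alpha)}=\lambda_2=1.441$), while Lemma~\ref{lem:worst_case_bound_simple} furnishes the matching $\Omega^*(3^{1.441(1-\epsilon)\tw(\phi)\log N})$ lower bound on a family of hard inputs, namely extended $(h,r)$-fibonacci trees. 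So at the high level the theorem is just the conjunction ``upper bound $=$ lower bound $\Rightarrow$ $\Theta^*$,'' and the real content has already been pushed into Lemma~\ref{lem:worst_case_bound_simple}.

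First I would make explicit the reduction from a graph-theoretic tree to an actual $\sat$ instance: given $F^*_{h,r}$, one attaches to each node of the tree a bag of $\tw(\phi)+1$ fresh variables/clauses so that the incidence-graph tree decomposition has exactly this shape and width $\tw(\phi)$, and so that the per-step branching factor $d^{(1-\epsilon)\tw(\phi)}$ of Lemma~\ref{lem:assignment_consistent_count_important} is actually realized (i.e.\ the instance is chosen so that a constant fraction of the consistent $\epsilon$-group combinations must genuinely be explored — this is where one needs the formula to encode ``hard'' local constraints rather than trivially-satisfied ones, so that the recursion tree of Algorithm~\ref{alg:hybrid} is not prunable). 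Then I would invoke Claim~\ref{lem:worst_case_bound_simple_prev}, whose inductive proof shows that on $\mathcal{T}_{1,h}$ any splitting algorithm avoiding $type_{\ge 3}$ trees must pay $\Omega^*(d^{(1-\epsilon)\tw(\phi)h})$: the key point there is that on a $type_2$ tree you are forced to cut the unique node joining the two previous splitting nodes, which is exactly what makes fibonacci trees the extremal case, and the recursion $a(h)=a(h-1)+a(h-2)$ produces the golden-ratio growth. Converting $h$ to $\log N$ via $N=\Omega((\tfrac{1+\sqrt5}{2})^h)$, i.e.\ $h=\Omega(\log N / \log\frac{1+\sqrt5}{2})=\Omega(\log N\cdot\lambda_2/\text{(const)})$, and checking that $\frac{1}{-\log(1-\alpha)}$ with $1-\alpha=\frac{\sqrt5-1}{2}$ coincides with $\log_{\frac{1+\sqrt5}{2}}2=1.441$, yields the stated exponent.

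The main obstacle I anticipate is not the fibonacci arithmetic (that is routine) but justifying that the lower bound applies to \emph{every} algorithm in $\mathfrak{A}_2$ rather than just to the specific splitting algorithm $\mathcal{H}_2$: one must argue that \emph{any} splitting strategy that never creates a $type_{\ge 3}$ tree, when run on the extended fibonacci tree, is forced into the same recursion profile captured by $T_1,T_2$ and Claim~\ref{lem:worst_case_bound_simple_prev}. This requires a careful case analysis of where a splitting node can lie in a $type_1$ or $type_2$ fibonacci subtree subject to the ``no $type_{\ge 3}$'' constraint, showing that each legal choice either re-creates the recursion or does strictly worse; I would organize this exactly as the induction in Claim~\ref{lem:worst_case_bound_simple_prev} (split at the root of $F_h$ vs.\ split inside $F_{h-1}$ or $F_{h-2}$) and observe that both alternatives give a recursion whose solution is $\Omega^*(d^{(1-\epsilon)\tw(\phi)h})$. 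With that in hand, the $\Omega^*$ from Lemma~\ref{lem:worst_case_bound_simple} and the $O^*$ from Theorem~\ref{the:2d-time-complexity} meet, and since $d=3$ for tree decompositions we conclude the running time of an optimal algorithm in $\mathfrak{A}_2$ is $\Theta^*(3^{1.441(1-\epsilon)\tw(\phi)\log N})$, as claimed.
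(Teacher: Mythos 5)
Your proposal matches the paper's argument: the theorem is obtained exactly by combining the upper bound of Theorem~\ref{the:2d-time-complexity} (with $\alpha=\frac{3-\sqrt{5}}{2}$ and $d=3$) with the lower bound of Lemma~\ref{lem:worst_case_bound_simple}, whose proof via the induction in Claim~\ref{lem:worst_case_bound_simple_prev} on extended fibonacci trees already quantifies over \emph{every} splitting algorithm avoiding $type_{\ge 3}$ trees, which is precisely the ``main obstacle'' you identify. Your additional remark about instantiating the fibonacci tree as a concrete width-$\tw(\phi)$ formula that forces the full branching factor is a subtlety the paper leaves implicit, but the overall route is the same.
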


\begin{figure}
\centering{\includegraphics[height=120pt]{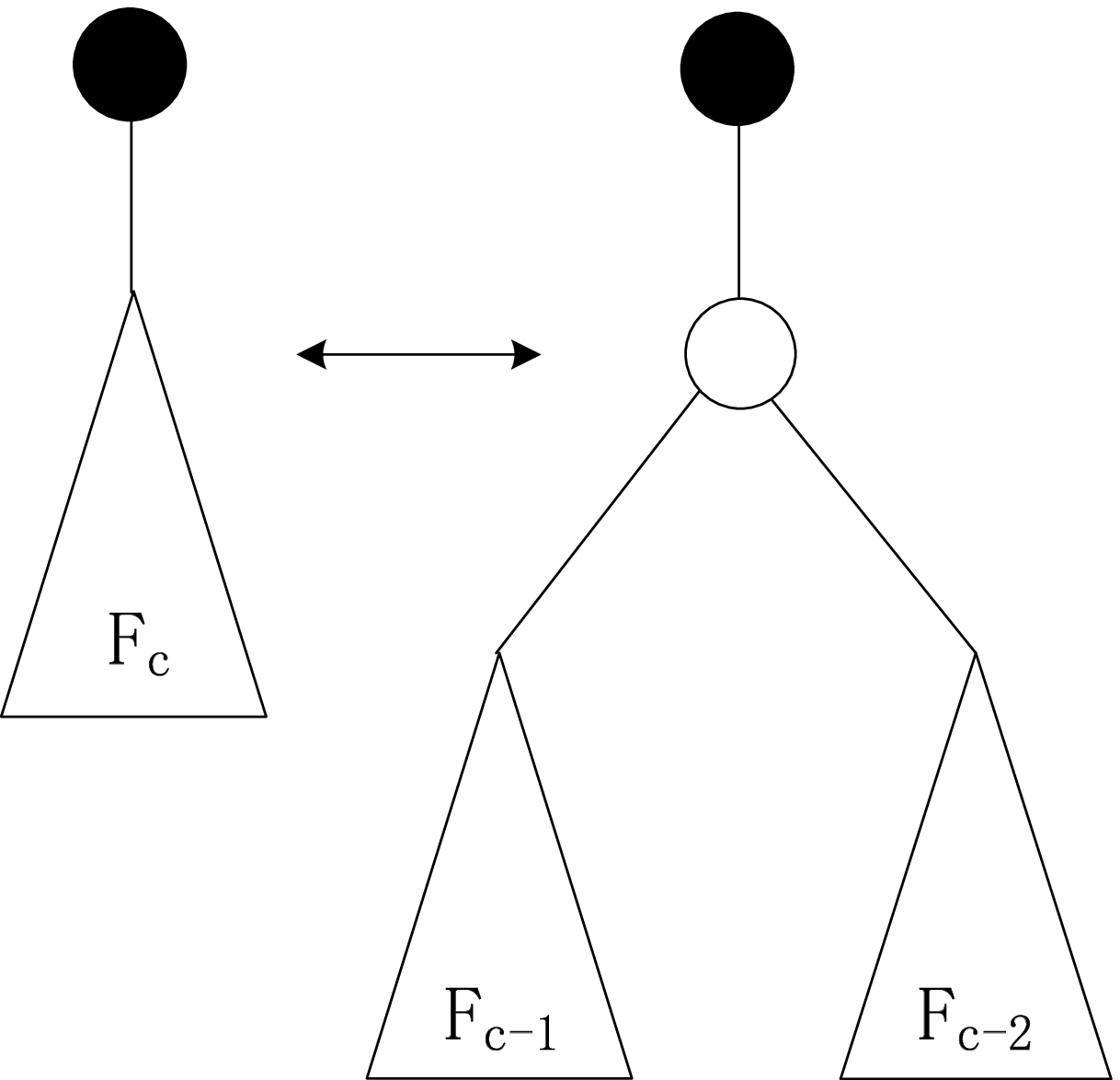}}
\hspace{40pt}{\includegraphics[height=120pt]{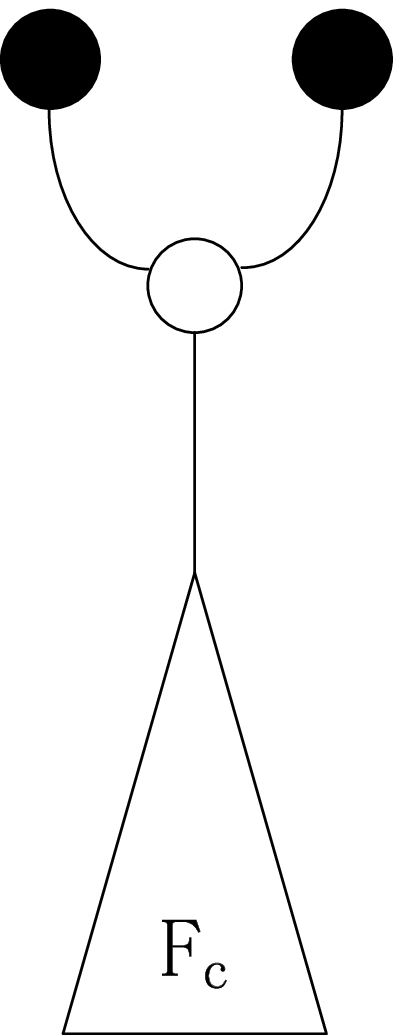}}
\caption{Illustrations of two cases for the proof of Lemma
\ref{lem:worst_case_bound_simple_prev}.}\label{fig:worst_case_proof_simple}
\end{figure}

%The proofs are deferred to the appendix due to space restrictions.

\subsection{Generalized tradeoff algorithm for a given tree decomposition}
It is natural to ask if the algorithm can be generalized to allow up
to $type_c$ trees for some $c\ge 3$. Indeed, as $c$ increases the
running time decreases while the space requirement increases. If we want
the tradeoff to make sense, the parameter $\epsilon$ must be
restricted to some specific range.

First, we need to generalize the splitting algorithm to allow
$type_i$ trees for $i$ up to $c$. For arbitrary $1\leq i\leq c$,
consider splitting a $type_{i}$ tree: suppose the splitting node is
$p$. If $p$ is on the path between some pair of previous splitting
nodes, splitting at it will result-in several $type_{j}(j\leq i)$
trees, otherwise, splitting will result-in several $type_{1}$ trees
and one $type_{i+1}$ tree. Formally, when splitting a $type_{i}$
tree, we invoke algorithm $\mathcal{H}_c$ to determine the splitting
node, which can be seen as an implementation of line~\ref{line:splitting} in Algorithm~\ref{alg:hybrid}.

\begin{figure*}[h]
\centering{\includegraphics[height=120pt]{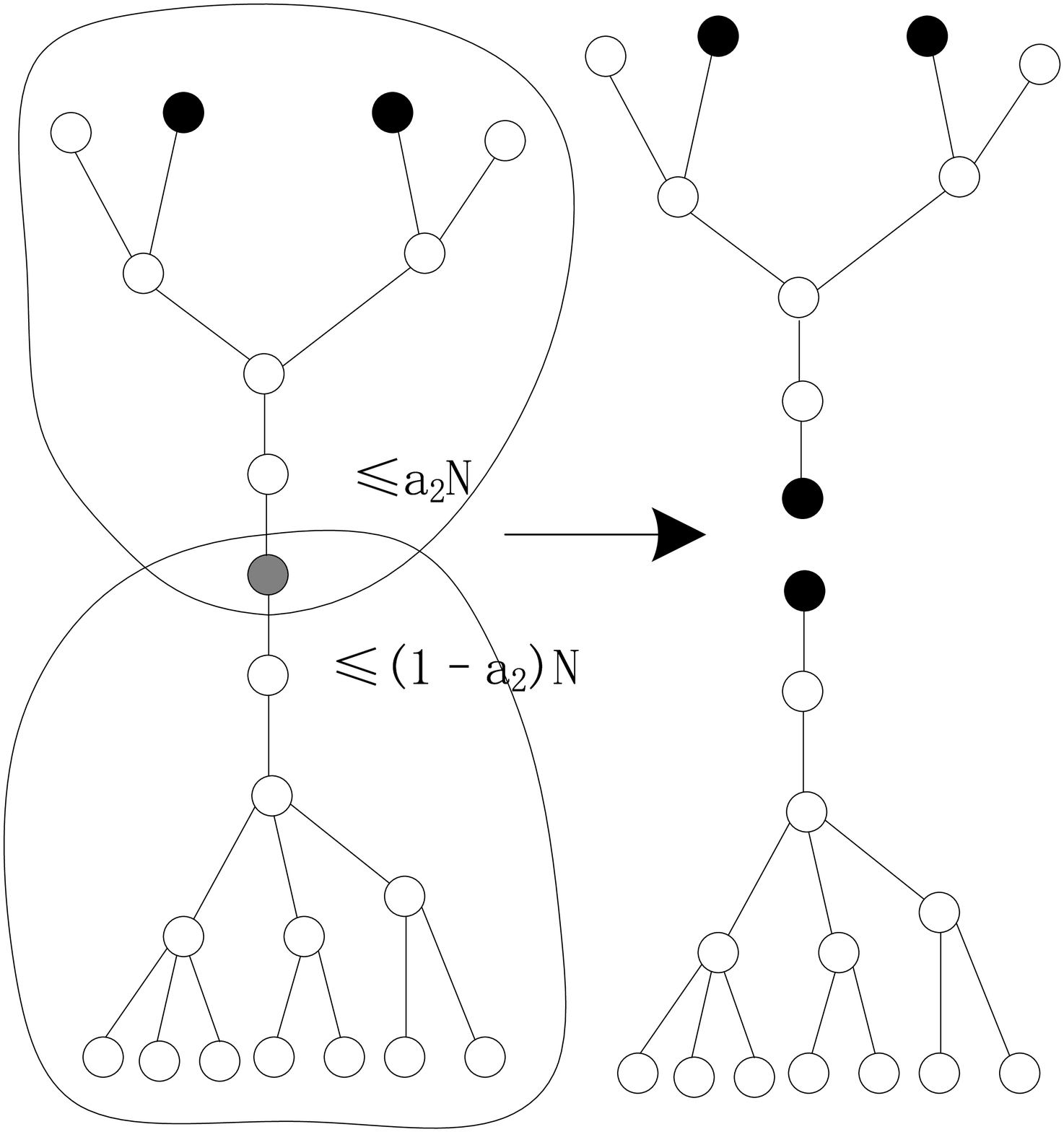}}
\hspace{20pt}{\includegraphics[height=120pt]{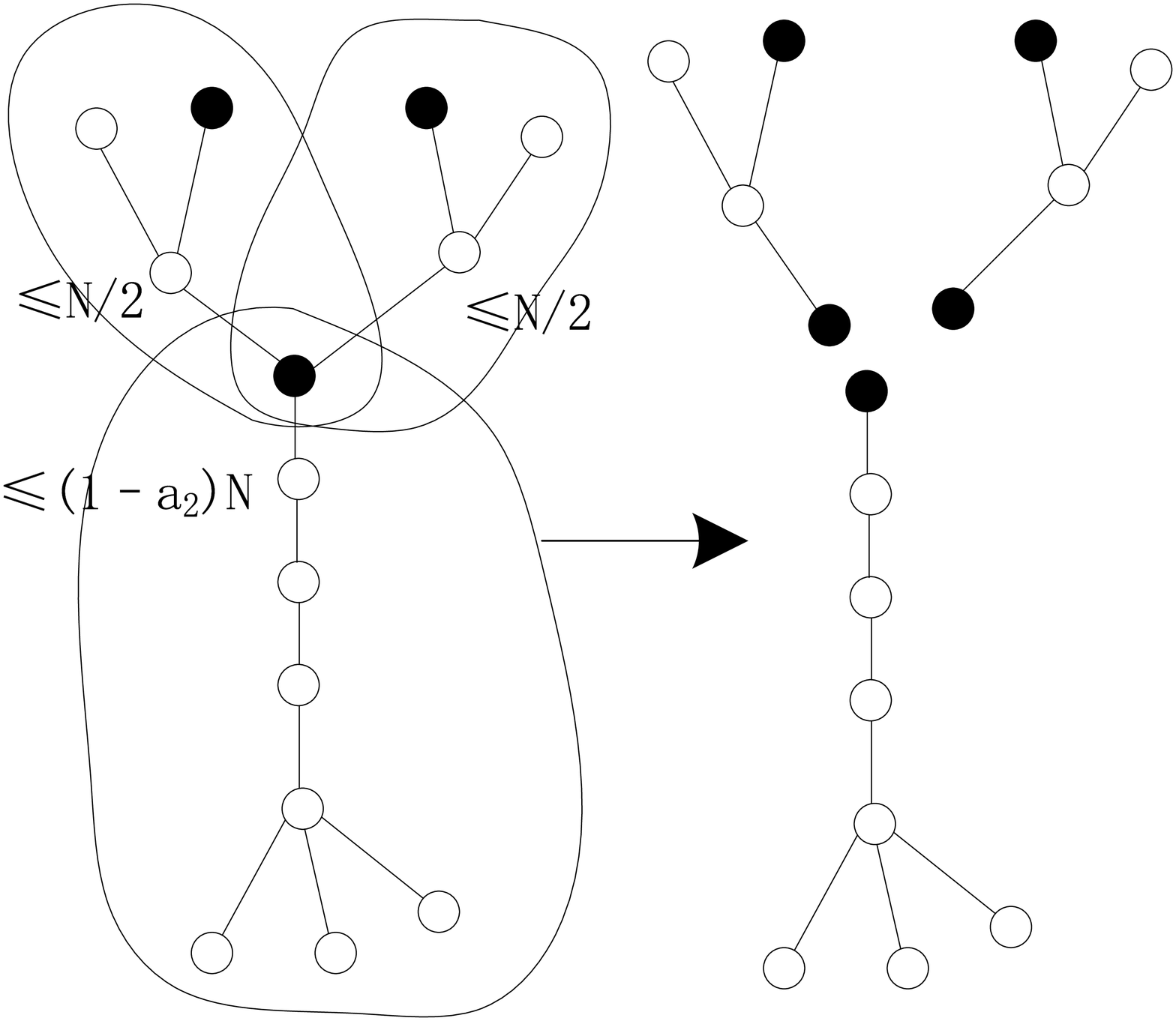}}
\hspace{20pt}{\includegraphics[height=120pt]{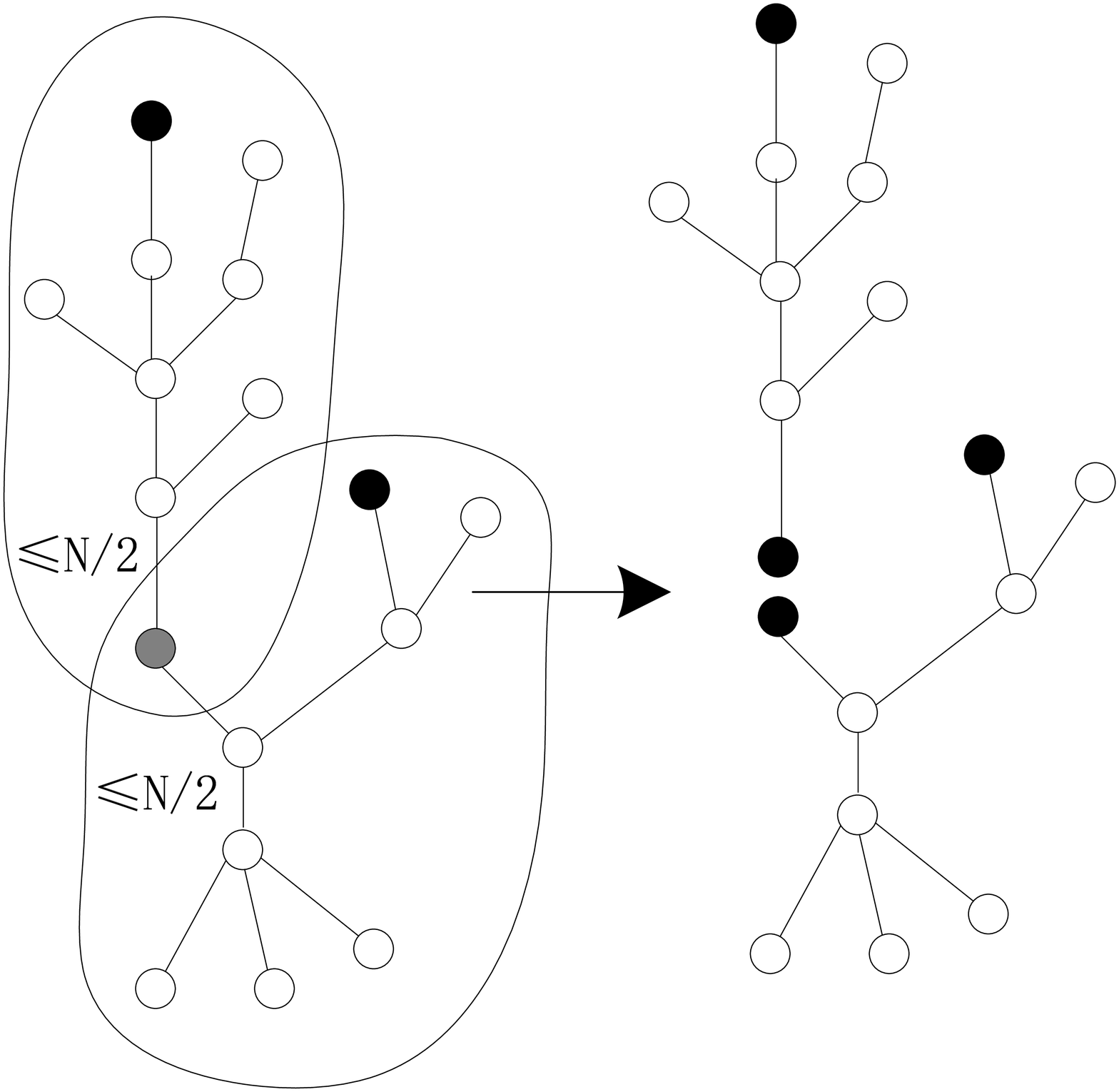}}
\caption{Finding the splitting node in three different
cases.}\label{fig:algorithm_des3}
\end{figure*}

%\REMARK{Periklis says: $m$ is reserved to denote the number of clauses. Change this consistently.}

\begin{algorithm}
\caption{Splitting algorithm $\mathcal{H}_c$ with
$\mathcal{T},\mathcal{S}$ as parameters}
\begin{algorithmic}[1]
\IF {$\mathcal{T}$ with $\mathcal{S}$ is a $type_{0}$ tree}
    \RETURN the $1/2$-splitting node
\ELSE
    \STATE suppose $\mathcal{T}$ with $\mathcal{S}$ is a $type_{i}$ tree
    \IF {the number of nodes in $\mathcal{T}$ is less than $2^{c - i}$}
        \RETURN the $1/2$-splitting node
    \ELSE
        \STATE arbitrarily pick a previous splitting node as root
        \STATE compute a $\alpha_{c,i}$-splitting node $q_1$
        \IF {$q_1$ is not on the path between any pair of previous splitting nodes}
            \RETURN $q_1$
        \ELSE
            \STATE compute a $1/2$-splitting node $q_2$.
            \IF {$q_2$ is not on the path between any pair of previous splitting nodes}
                \RETURN the least common ancestor of $q_2$ and all previous cutting nodes
            \ELSE
                \RETURN $q_2$
            \ENDIF
        \ENDIF
    \ENDIF
\ENDIF
%\item Arbitrarily pick a previous splitting node if any, otherwise choose a leaf as root.
%\item If the number of nodes is less than $2^{c - i}$, compute a
%$1/2$-splitting-node $c_{0}$, and split at $c_{0}$.
%item Compute a $\alpha_{c, i}$-splitting-node $q$.
%\item If $q$ is not on the path between any pair of previous
%splitting nodes, split on $q$.
%\item Compute a $1/2$-splitting-node $c_1$ in this tree.
%\item If $c_1$ is on the path between some pair of previous splitting
%nodes, split on $c_1$
%\item If $c_1$ is not on the path between any pairs of previous splitting nodes,
%split on the least common ancestor $p$ of $c_1$ and all previous cutting
%nodes
\end{algorithmic}
\end{algorithm}

Each $\alpha_{c, i}$ for any $1\leq i< c$ is a parameter satisfying
$0\le\alpha_{c,i}\le 1 / 2$. To prevent $type_{c + 1}$ trees,
splitting nodes of $type_{c}$ trees must be on the path between some
pair of existing splitting nodes, this is assured by setting
$\alpha_{c, c} = 0$. For a fixed $c$, the running time and space of
the algorithm solving $\sat$ of bounded tree-width utilizing the
splitting algorithm $\mathcal{A}$ are summarized in
Theorem~\ref{the:algorithm_complexity}(see page~\pageref{the:algorithm_complexity}). The following lemmas are the
building blocks used to conclude the theorem.

\begin{lemma}\label{lem:algorithm_complexity}
For every $c\geq 2$, tree $\mathcal{T}$ with $N$ nodes and splitting
nodes $S$, let $D_{c, |S|}(N) =
\max_{\mathcal{T}}\{\mathsf{SD}_{c}(\mathcal{T})\}$. Then for each
$1 \leq i < c$,
\begin{eqnarray*}
D_{c, i}(N) &\leq& \max\{D_{c,1}\left((1 - \alpha_{c, i})N\right),
D_{c,i + 1}\left(\alpha_{c, i}N\right), D_{c,i}\left(N / 2\right)\}+
1
\end{eqnarray*}
and
\begin{eqnarray*}
D_{c, c}(N) &\leq& \max\{D_{c,1}(N) , D_{c,c}(N / 2)\} + 1
\end{eqnarray*}
\end{lemma}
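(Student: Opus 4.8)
\textbf{Proof proposal for Lemma~\ref{lem:algorithm_complexity}.}

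The plan is to analyze the behavior of the splitting algorithm $\mathcal{H}_c$ on a $type_i$ tree of size $N$ and track how the splitting depth $\mathsf{SD}_c$ recurses, case by case, exactly mirroring the branching structure already described for $\mathcal{H}_c$. First I would dispose of the trivial case: if the tree has fewer than $2^{c-i}$ nodes, $\mathcal{H}_c$ returns the $1/2$-splitting node (Corollary~\ref{lemma:half}), so every resulting subtree has size at most $\lceil N/2\rceil$; however, since the number of nodes has dropped below the threshold, after at most a further $c-i = O(1)$ such halvings all nodes become previous splitting nodes, contributing only an additive constant to the depth, which is absorbed into the ``$+1$'' after adjusting constants (or one notes that in this regime $D_{c,i}(N)$ is already bounded by $\log N$ up to a constant). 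The substantive case is when $N \ge 2^{c-i}$.

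In the substantive case, $\mathcal{H}_c$ picks a previous splitting node as root and computes an $\alpha_{c,i}$-splitting node $q_1$. By Lemma~\ref{lemma:cutting}, splitting at $q_1$ yields a forest in which the piece containing the chosen root has size $\le \lceil \alpha_{c,i} N\rceil$ and every other piece has size $\le \lceil(1-\alpha_{c,i})N\rceil$. If $q_1$ is not on the path between any pair of previous splitting nodes, then all the previous splitting nodes (together with the new one) end up in a single subtree, so that subtree is $type_{i+1}$ and has size $\le \lceil\alpha_{c,i}N\rceil$, while all other subtrees receive only the new splitting node $q_1$ and hence are $type_1$ of size $\le \lceil(1-\alpha_{c,i})N\rceil$; taking the max over the children and adding $1$ gives the $\max\{D_{c,1}((1-\alpha_{c,i})N), D_{c,i+1}(\alpha_{c,i}N)\}+1$ terms. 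If $q_1$ \emph{is} on such a path, the algorithm falls back to a $1/2$-splitting node $q_2$; whether or not $q_2$ lies on a path between previous splitting nodes, the key point (to be argued from the tree-decomposition structure, as in the $\mathcal{H}_2$ analysis of Section~\ref{subsec:hybrid}) is that the previous splitting nodes get distributed so that no subtree is $type_{i+1}$ or higher — each subtree is $type_j$ for some $j \le i$ — and by Corollary~\ref{lemma:half} each has size $\le \lceil N/2\rceil$; bounding all these $type_j$ depths crudely by $D_{c,i}(N/2)$ (using monotonicity of $D_{c,\cdot}$ in the type index, which should be recorded as an easy observation) yields the $D_{c,i}(N/2)+1$ term. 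Combining the cases gives the first displayed inequality. For $i=c$, we have $\alpha_{c,c}=0$, so $q_1$ is forced to lie on a path between previous splitting nodes (the root and some other), the $\alpha_{c,i}$-branch is never taken productively, and only $type_j$ ($j\le c$) subtrees of size $\le \lceil N/2\rceil$ arise together with $type_1$ subtrees of size $\le N$ off the ``spine'' (these are the pendant pieces hanging off the path, each getting a single new splitting node); this yields $D_{c,c}(N)\le \max\{D_{c,1}(N), D_{c,c}(N/2)\}+1$.

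The main obstacle I anticipate is the careful combinatorial bookkeeping in the ``$q_1$ on a path'' / ``$q_2$'' subcases: one must verify that choosing the least common ancestor of $q_2$ and all previous splitting nodes genuinely prevents the type index from increasing, and that the sizes of all produced subtrees (including the many $type_1$ pendant pieces) are correctly bounded — this is precisely the geometric argument sketched for $\mathcal{H}_2$ generalized to $c$ previous splitting nodes, and getting the case split exhaustive and the size bounds uniform is where the real work lies. A secondary subtlety is justifying that one may replace the depths of the various lower-type subtrees by a single $D_{c,i}(\cdot)$ term; I would handle this by first proving the monotonicity $D_{c,j}(N)\le D_{c,i}(N)$ for $j\le i$ as a short preliminary remark, so that the recurrences can be stated cleanly in the form above. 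The ceilings $\lceil\cdot\rceil$ introduce only lower-order slack and can be silently absorbed, as is standard for such divide-and-conquer recurrences.
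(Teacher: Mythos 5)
Your overall strategy is the same as the paper's: a case analysis on the branches of the splitting algorithm $\mathcal{H}_c$, reading off the type and size of each resulting subtree and converting that into the recurrence. The first branch (when the $\alpha_{c,i}$-splitting node $q_1$ is not on a path between previous splitting nodes, giving one $type_{i+1}$ tree of size $\le\lceil\alpha_{c,i}N\rceil$ and $type_1$ trees of size $\le\lceil(1-\alpha_{c,i})N\rceil$) and your treatment of $i=c$ both match the paper.

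There is, however, one concrete misstep in the branch you describe as the hardest. When $q_1$ lies on a path between previous splitting nodes, the algorithm does not always split at the $1/2$-splitting node $q_2$: if $q_2$ is itself off every such path, it returns the \emph{least common ancestor} of $q_2$ and the previous splitting nodes, and that node is in general not a $1/2$-splitting node. Your claim that in this branch ``each subtree has size $\le\lceil N/2\rceil$'' is therefore false, and bounding everything by $D_{c,i}(N/2)$ does not cover the pendant $type_1$ pieces hanging off the chosen node, which can be larger than $N/2$. The paper's accounting is finer here: only the subtrees that inherit previous splitting nodes (the $type_j$, $j\le i$, ones) are bounded by $\lceil N/2\rceil$ and charged to $D_{c,i}(N/2)$, while the new $type_1$ pieces are bounded by $\lceil(1-\alpha_{c,i})N\rceil$ (using $1-\alpha_{c,i}>1/2$) and charged to the $D_{c,1}((1-\alpha_{c,i})N)$ term already present in the max. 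You in fact make exactly this distinction for $i=c$ (where you allow $type_1$ pieces of size up to $N$), so the fix is to apply the same bookkeeping uniformly for $1\le i<c$; with that correction the recurrence you state is established, and the target inequality is unchanged because its first term absorbs the large $type_1$ pieces.
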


\begin{proof}
Without loss of generality, suppose $N\geq 2^{c}$. Consider
splitting a $type_{i}$ tree with splitting nodes $S$, $1\leq i<c$.
If the $\alpha_{c, i}$-splitting-node $m$ is not on the path between
any pair of previous splitting nodes, splitting at $m$ will result
in multiple $type_1$ trees of size at most
$\lceil(1-\alpha_{c,i})N\rceil$ and one $type_{i+1}$ tree of size at
most $\lceil{\alpha_{c,i}N}\rceil$. Otherwise, since
$1-\alpha_{c,i}>1/2$, the maximal possible size of a $type_1$ tree
created by any splitting node will not exceed
$\lceil{(1-\alpha_{c,i})N}\rceil$. Splitting at the
$1/2$-splitting-node $c$ will result in multiple $type_{j}(j\leq i)$
trees of size at most $\lceil{N/2}\rceil$, otherwise, splitting at
the least common ancestor of $c$ and all previous splitting nodes as
$p$, will result in multiple $type_{1}$ tree of size at most
$\lceil{(1 - \alpha_{c, i})N}\rceil$ and many $type_{j}(j\leq i)$
trees with size at most $\lceil{N/2}\rceil$. In summary,
$$D_{c,i}(N) \leq \max\{D_{c,1}\left((1 - \alpha_{c, i})N\right), D_{c,i
+ 1}\left(\alpha_{c, i}N\right), D_{c,i}\left(N / 2\right) \} + 1$$

Now, consider splitting a $type_{c}$ tree with splitting nodes $S$.
Since $\alpha_{c, c} = 0$, we always ignore the $(1 - \alpha_{c,
i})$-splitting-node $m$. Splitting at the $1/2$-splitting-node $c$
will result in multiple $type_{j}(j\leq i)$ trees of size at most
$\lceil{N/2}\rceil$. Splitting at the least common ancestor of $c$
and all previous splitting nodes will result in multiple $type_{1}$
tree with size at most $N$ and multiple $type_{j}(j\leq i)$ trees
with size at most $\lceil{N/2}\rceil$. Namely, $$D_{c, c}(N) \leq
\max\{D_{c, 1}(N) , D_{c, c}(N / 2)\} + 1$$
\end{proof}

\begin{lemma}\label{lem:algorithm_complexity2}
For any $c\geq 2$, the $c$-splitting depth of a tree with $N$ nodes
by $\mathcal{A}$ is $\lambda_{c}(\log N - c) + c + O(1)$.
\end{lemma}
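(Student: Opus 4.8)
The plan is to solve the coupled recurrences of Lemma~\ref{lem:algorithm_complexity} for the quantities $D_{c,i}(N)$, after first pinning down the free parameters $\alpha_{c,i}$ of the splitting algorithm $\mathcal{A}=\mathcal{H}_c$. It is convenient to pass to the inverse quantities: for a target depth $D$ let $g_i(D)$ be the largest $N$ for which some $type_i$ tree on $N$ nodes has $c$-splitting depth $\le D$ under $\mathcal{A}$. Since $D_{c,i}$ is non-decreasing in $N$, the recurrences of Lemma~\ref{lem:algorithm_complexity} translate directly into
\[
g_i(D)\ \ge\ \min\Big\{\frac{g_1(D-1)}{1-\alpha_{c,i}},\ \frac{g_{i+1}(D-1)}{\alpha_{c,i}},\ 2\,g_i(D-1)\Big\}\qquad(1\le i<c),
\]
\[
g_c(D)\ \ge\ \min\{\,g_1(D-1),\ 2\,g_c(D-1)\,\},
\]
together with trivial base values coming from the ``small tree'' branch of $\mathcal{H}_c$ (for a $type_i$ tree on fewer than $2^{c-i}$ nodes one simply $1/2$-splits, which raises the type by at most one per level and exhausts the tree within $O(c)$ levels).

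The crucial step is the choice of $\alpha_{c,i}$. For $1\le i<c$ I would pick $\alpha_{c,i}$ so that the first two terms in the $i$-th recurrence coincide, i.e.\ $\alpha_{c,i}=\dfrac{g_{i+1}}{g_1+g_{i+1}}$; since (as verified below) all the $g_j(D)$ grow like $\Theta(x_c^{D})$, this ratio tends to a constant, and we simply fix $\alpha_{c,i}$ to that constant. As in $\mathcal{H}_c$ we also set $\alpha_{c,c}=0$, which is exactly what forbids $type_{c+1}$ trees. With this choice the first two terms both equal $g_1(D-1)+g_{i+1}(D-1)$, and the system collapses to
\[
g_i(D)\ \ge\ \min\{\,g_1(D-1)+g_{i+1}(D-1),\ 2\,g_i(D-1)\,\},\qquad g_c(D)\ \ge\ \min\{\,g_1(D-1),\ 2\,g_c(D-1)\,\}.
\]

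Next I would plug in the ansatz $g_i(D)=b_i\,x^{D}$. Using $x<2$ — true for every $c\ge 2$, since the relevant root satisfies $1<x_c<2$ — the ``$2\,g_i$'' branch is never binding, so for $1\le i<c$ one gets $b_i\,x=b_1+b_{i+1}$, i.e.\ $b_{i+1}=x\,b_i-b_1$, whence $b_i=q_i(x)\,b_1$ with $q_i(X)=X^{i-1}-X^{i-2}-\cdots-1$; and the $i=c$ equation $b_c\,x=b_1$ reads $q_c(x)=1/x$, which is precisely the statement that $x$ is a root of $X^{c}-X^{c-1}-\cdots-1=0$. Taking $x=x_c$, the root of largest absolute value, makes everything consistent, and the side conditions $q_j(x_c)>0$ and $0\le\alpha_{c,i}=q_{i+1}(x_c)/(1+q_{i+1}(x_c))\le 1/2$ (equivalently $0<q_j(x_c)<1$ for $2\le j\le c$) follow from $1<x_c<2$ via the recursion $q_{i+1}=x_c q_i-1$ and $q_c(x_c)=1/x_c$; for $c=2$ this recovers $\alpha_{2,1}=\tfrac{3-\sqrt5}{2}$ as used in $\mathcal{H}_2$. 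Hence $g_i(D)=\Theta(x_c^{D})$ for every $i$, and inverting gives $D_{c,i}(N)=\log_{x_c}N+O(1)=\lambda_c\log N+O(1)$ by the definition of $\lambda_c$; in particular $D_{c,0}(N)\le D_{c,1}(\lceil N/2\rceil)+1=\lambda_c\log N+O(1)$, which is the asserted bound — the shape $\lambda_c(\log N-c)+c$ differs from $\lambda_c\log N$ only by $c(1-\lambda_c)=O(1)$ and conveniently also swallows the $O(c)$ base-case overhead. To see the bound is tight (so that the splitting depth of $\mathcal{A}$ \emph{equals}, not merely is at most, this value), I would run the $c$-ary analogue of Lemma~\ref{lem:worst_case_bound_simple} and Claim~\ref{lem:worst_case_bound_simple_prev}: build ``$c$-bonacci'' trees $F_h$ (a root attached to $F_{h-1},\dots,F_{h-c}$) of size $\Theta(x_c^{h})$ and show, by induction on $h$ on the corresponding extended trees, that $\mathcal{H}_c$ is forced into recursion depth $h-O(c)$.

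The main obstacle is the middle step — showing that the multi-variable $\min$-recurrence really has the clean exponential solution $b_i\,x_c^{D}$. Concretely one must (i) justify that the ``$2\,g_i(D-1)$'' branch is dominated, which is exactly the fact $x_c<2$; (ii) verify $0<q_j(x_c)<1$ for all $2\le j\le c$, so that the constants $\alpha_{c,i}$ we feed to $\mathcal{H}_c$ are legal parameters in $[0,1/2)$ and the $b_i$ are positive; and (iii) control the error caused by the fact that the fixed constants $\alpha_{c,i}$ are only the asymptotic optima, together with the base-case overhead, keeping everything within the additive $O(1)$. Items (i)–(ii) are elementary facts about the characteristic polynomial $X^c-X^{c-1}-\cdots-1$ and its dominant root, and item (iii) is routine once the exponential solution is established.
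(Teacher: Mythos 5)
Your proposal is correct and follows essentially the same route as the paper: you balance the two nontrivial branches of the recurrence from Lemma~\ref{lem:algorithm_complexity} by the choice of $\alpha_{c,i}$, arrive at the characteristic polynomial $X^{c}-X^{c-1}-\cdots-1$ (the paper's equivalent form is $\sum_{i=1}^{c}(1-\alpha_{c,1})^{i}=1$ with $1-\alpha_{c,1}=1/x_c$), and use $x_c<2$ to dominate the halving branch --- the only difference is that you work with the inverse quantities $g_i(D)$ and an exponential ansatz, while the paper manipulates the balanced depth functions $D'_{c,i}(N)$ directly and then checks $D'_{c,i}(N)\ge D'_{c,i}(N/2)+1$. (One minor quibble: $g_i(D)$ should be the largest $N$ such that \emph{every} $type_{i}$ tree on at most $N$ nodes has splitting depth $\le D$, not ``some'' such tree; the recurrences you then write down are for this correct version, and the matching lower bound you sketch at the end is handled in the paper separately via Theorem~\ref{the:worst_case_bound}.)
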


\begin{proof}
%When $N\geq 2^{c}$, for any $c\geq 2$, $1\leq i<c$, we have
%$$D_{c,i}(N) \leq \max\{D_{c,1}((1-\alpha_{c, i}) N), D_{c,i+1}(\alpha_{c, i}N),
%D_{c,i}(N/2)\} + 1$$
% $$D_{c,c}(N) \leq \max\{D_{c,1}(N),
%D_{c,c}(N/2)\} + 1$$

Let $D'_{c,i}(N)$ be a function satisfying the following equations,
\begin{eqnarray*}
D'_{c,i}(N) &=& D'_{c,1}((1 - \alpha_{c,i}) N) + 1 = D'_{c,i + 1}(\alpha_{c,i}N) + 1, \textrm{for $1\le i<c$}\\
D'_{c,c}(N) &=& D'_{c,1}(N) + 1
\end{eqnarray*}
For each $1< i\le c$, we have $D'_{c,i}(N) =
D'_{c,1}((1 - \alpha_{c, i - 1})N /\alpha_{c, i - 1})$. For each $1\leq i< c$, since $D'_{c,i}(N) = D'_{c,1}((1 - \alpha_{c,
i})N) + 1 $, $D'_{c,i}(N) = D'_{c,i+1}(\alpha_{c,
i}N) + 1 = D'_{c,1}(\alpha_{c, i}(1 - \alpha_{c, i + 1})N) + 2$, thus,
$$D'_{c,1}(N) = D'_{c,1}(\alpha_{c, i}(1 - \alpha_{c, i + 1}) / (1 - \alpha_{c, i})N) +
1$$ Since $D'_{c,1}(N) = D'_{c,1}((1 - \alpha_{c, 1})N) + 1$, we
have $(1 - \alpha_{c, 1})(1 - \alpha_{c, i})N = \alpha_{c, i}(1 -
\alpha_{c, i + 1})$, and $\alpha_{c, i + 1} = 1 - (1 - \alpha_{c,
1})(1 - \alpha_{c, i})/\alpha_{c, i}$. Therefore,
$$\alpha_{c, i} = 1 - \frac{\alpha_{c, 1}(1-\alpha_{c, 1})^{i}}{2\alpha_{c, 1} -
1 + (1 - \alpha_{c, 1})^{i}}$$
On the other hand, $D'_{c,c}(N) = D'_{c,1}(N) + 1 = D'_{c,1}((1 -
\alpha_{c, c-1})N/\alpha_{c, c-1})$, so we have $\alpha_{c, c-1}/(1
- \alpha_{c, c}) = 1 - \alpha_{c, 1}$ and $\alpha_{c, c-1} = (1 -
\alpha_{c, 1})/(2 - \alpha_{c, 1})$. Thus, $(1 - \alpha_{c,
1})/(2 - \alpha_{c, 1}) = 1 - \frac{\alpha_{c, 1}(1-\alpha_{c,
1})^{c - 1}}{2\alpha_{c, 1}- 1 + (1 - \alpha_{c, 1})^{c - 1}}$, and
$\sum_{i=1}^{c}(1 - \alpha_{c, 1})^{i} = 1$.

By setting $\lambda_{c} = \frac{1}{\log (1 - \alpha_{c, 1})}$, for
each $1\leq i\leq c$, $D'_{c,i}(N) \geq D'_{c,i}(N / 2) + 1$. And
then $$D'_{c,i}(N) = \max\{D'_{c,1}((1-\alpha_{c, i}) N),
D'_{c,i+1}(\alpha_{c, i}N), D'_{c,i}(N/2)\} + 1$$ and $$D'_{c,c}(N)
= \max\{D'_{c,1}(N), D'_{c,c}(N/2)\} + 1$$ Combining with the
previous lemma, it can be easily proved by induction that
$D'_{c,i}(N)$ upper bounds $D_{c,i}(N)$ for all $c,i$. Specifically,
$$D_{c,1}(N) \leq D'_{c,1}(N) \leq \lambda_{c}(\log N - c) +
D_{c,1}(2^{c})+ O(1)$$ Since $D_{c,1}(2^c) = c$, the $c$-splitting
depth of a tree with $N$ nodes by the algorithm $\mathcal{A}$ is
upper bounded by $\lambda_{c}(\log N - c) + c + O(1)$, where
$\lambda_{c}$ satisfies $\sum_{l=1}^{c}2^{-\frac{l}{\lambda_{c}}} =
1$.
\end{proof}

\begin{proof}(Proof of Theorem~\ref{the:algorithm_complexity})
For every $c \geq 2$, we solve the above recurrences where $N <2^{c}$. The running time is
$O^{*}\left(d^{\log N (1-\epsilon)\tw(\phi)}\right)$
when $N \ge 2^{c}$, the running time is
$O^{*}\left(d^{\left(\lambda_{c}(\log N - c) + c\right) (1-\epsilon)\tw(\phi)}\right)$.
The space is upper bounded by $O^{*}(2^{c\epsilon\tw(\phi)})$ since only $type_i$($i\le c$) trees are allowed.
\end{proof}

Furthermore, it can be proved that the space resource can be
fully exploited to minimize the running time, which is of practical
importance. More specifically, the following holds true.

\begin{corollary}[of Theorem~\ref{the:algorithm_complexity}]\label{cor:hybrid_opt_space}
For any $\epsilon'>0$ there exists an algorithm which runs in space
$O^{*}(2^{\epsilon'\tw(\phi)})$ and time
$O^{*}(d^{\delta\tw(\phi)\log_{2}{|\phi|}})$ time for a constant
$\delta<1$.
\end{corollary}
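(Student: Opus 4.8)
The plan is to deduce the corollary from Theorem~\ref{the:algorithm_complexity} by making an appropriate choice of the parameters $c$ and $\epsilon$ as functions of the target space exponent $\epsilon'$. Recall that Theorem~\ref{the:algorithm_complexity} gives, for every integer $c\ge 2$ and every $0<\epsilon<1$, an algorithm running in time $O^{*}\big(d^{(\lambda_c(\log N-c)+c)(1-\epsilon)\tw(\phi)}\big)$ and space $O^{*}\big(2^{c\epsilon\tw(\phi)}\big)$. Since $N=O^*(|\phi|)$ and $\lambda_c\to 1$ as $c\to\infty$ (as recorded in Table~\ref{fig:lambda_c} and Lemma~\ref{lem:algorithm_complexity2}, where $\lambda_c$ is the unique value with $\sum_{l=1}^{c}2^{-l/\lambda_c}=1$), the time exponent is $\big(\lambda_c+o(1)\big)(1-\epsilon)\tw(\phi)\log|\phi|$ up to lower-order additive terms absorbed into the $O^*$ notation. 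So it suffices to choose $c$ large enough that $\lambda_c$ is close to $1$, and then choose $\epsilon$ so that $\lambda_c(1-\epsilon)<1$ while keeping the space bound under control.

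Concretely, given $\epsilon'>0$, I would first pick a fixed integer $c$ large enough that $\lambda_c<1+\epsilon'/4$, say; this is possible because $\lambda_c\downarrow 1$. Then set $\epsilon=\epsilon'/(2c)$, which is in $(0,1)$ for all $\epsilon'$ small enough (and the statement is only interesting for small $\epsilon'$, since larger space is trivially available). With these choices the space bound from Theorem~\ref{the:algorithm_complexity} is $O^{*}\big(2^{c\epsilon\tw(\phi)}\big)=O^{*}\big(2^{(\epsilon'/2)\tw(\phi)}\big)=O^{*}\big(2^{\epsilon'\tw(\phi)}\big)$, as required. For the running time, the exponent is
\begin{equation*}
\big(\lambda_c(\log N-c)+c\big)(1-\epsilon)\tw(\phi)
\le \lambda_c(1-\epsilon)\,\tw(\phi)\log|\phi| + O(\tw(\phi)),
\end{equation*}
and the additive $O(\tw(\phi))$ term contributes only a $2^{O(\tw(\phi))}=O^*(1)\cdot$-type factor that is dominated once $\log|\phi|$ is at least a constant (and if $\log|\phi|$ is bounded, the whole running time is polynomial and there is nothing to prove). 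Setting $\delta=\lambda_c(1-\epsilon)$, one checks $\delta=\lambda_c(1-\epsilon'/(2c))<(1+\epsilon'/4)(1-\epsilon'/(2c))<1$ for $c\ge 1$ and small $\epsilon'$, so $\delta<1$ is a genuine constant (depending only on $\epsilon'$), giving time $O^{*}\big(d^{\delta\tw(\phi)\log_2|\phi|}\big)$.

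The only mild subtlety — and the one step that needs a sentence of care rather than a plug-in — is bookkeeping the additive $O(\tw(\phi))$ and $\lambda_c\cdot(-c)$ terms in the time exponent: I must argue that these do not spoil the multiplicative constant $\delta<1$ in front of $\tw(\phi)\log|\phi|$. This is handled by the dichotomy above: either $\log|\phi|$ exceeds a suitable constant (depending on $c$), in which case the additive terms are absorbed and the bound $d^{\delta\tw(\phi)\log|\phi|}$ holds with the stated $\delta$, or $\log|\phi|$ is bounded, in which case $\tw(\phi)\le|\phi|=O(1)$ is impossible to use directly but the brute-force bound $2^{O(\tw(\phi))}\le 2^{O(|\phi|)}$ is still polynomial in a fixed constant — more cleanly, we simply note the claimed time bound $O^*(d^{\delta\tw(\phi)\log_2|\phi|})$ is an $O^*$ statement and any instance family with $\log|\phi|=\Theta(1)$ is a finite-complexity tail that the $O^*$ swallows. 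With this observation the corollary follows immediately from Theorem~\ref{the:algorithm_complexity}.
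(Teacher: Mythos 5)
There is a genuine quantitative gap in your choice of parameters. Your plan — fix $c$ large, set $\epsilon=\Theta(\epsilon'/c)$ so that the space bound $2^{c\epsilon\tw(\phi)}$ becomes $2^{O(\epsilon')\tw(\phi)}$, and argue $\delta=\lambda_c(1-\epsilon)<1$ — is exactly the paper's plan. But the condition you impose on $c$, namely $\lambda_c<1+\epsilon'/4$ (a threshold that does \emph{not} shrink with $c$), is not strong enough, and your claimed inequality $(1+\epsilon'/4)(1-\epsilon'/(2c))<1$ is simply false for $c\ge 3$ and small $\epsilon'$: expanding, it is equivalent to $1/4<1/(2c)+\epsilon'/(8c)$, which fails already for $c=6$, $\epsilon'=0.1$ (one gets $(1.025)(1-1/120)\approx 1.016>1$). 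The underlying tension is that the multiplicative slack you gain from $(1-\epsilon)$ is only about $\epsilon'/(2c)$, which \emph{decreases} as $c$ grows, so it is not enough to know that $\lambda_c\to 1$; you must know that $\lambda_c-1$ tends to $0$ \emph{faster than} $1/c$. Merely citing the table of values or the limit $\lambda_c\downarrow 1$ does not give this.

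The paper closes exactly this gap with Lemma~\ref{lem:bound_of_lambda}, which shows $\lambda_c<1+2\cdot 2^{-c/2}$, i.e.\ exponential convergence. With $\epsilon=\epsilon'/c$ one then gets $\lambda_c(1-\epsilon'/c)<(1+2\cdot 2^{-c/2})(1-\epsilon'/c)<1$ for all sufficiently large $c$, because $2\cdot 2^{-c/2}=o(\epsilon'/c)$. Your proof can be repaired by replacing the condition ``$\lambda_c<1+\epsilon'/4$'' with ``$\lambda_c<1+\epsilon'/(4c)$'' and justifying that such a $c$ exists via this exponential bound (or an equivalent estimate on the largest root of $X^c-X^{c-1}-\cdots-1$); as written, the existence of a suitable $c$ is not established. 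Your bookkeeping of the additive $O(\tw(\phi))$ terms in the time exponent is fine and is no more careful than what the paper itself does.
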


\begin{lemma}\label{lem:bound_of_lambda}
$\lambda_c<1+\frac{2}{2^{\frac{c}{2}}}$
\end{lemma}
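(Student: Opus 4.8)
The plan is to reduce the claim to a single scalar inequality via the characterization of $\lambda_c$ recorded in the proof of Lemma~\ref{lem:algorithm_complexity2}, namely that $\lambda_c$ is a positive solution of $\sum_{l=1}^{c}2^{-l/\lambda}=1$. Write $g(\lambda)=\sum_{l=1}^{c}2^{-l/\lambda}$. Each summand has derivative $2^{-l/\lambda}\ln 2\cdot l/\lambda^{2}>0$, so $g$ is a strictly increasing bijection from $(0,\infty)$ onto $(0,c)$; since $c\ge 2$, the equation $g(\lambda)=1$ pins down $\lambda_c$ uniquely, and, crucially, $\lambda_c<\bar\lambda$ is \emph{equivalent} to $g(\bar\lambda)>1$. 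Hence, with $\bar\lambda=1+\tfrac{2}{2^{c/2}}$, it suffices to prove $g(\bar\lambda)>1$.

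Next I would evaluate $g(\bar\lambda)$ by setting $\delta:=2^{1-c/2}$, so that $\bar\lambda=1+\delta$ and $2^{-l/\bar\lambda}=2^{-l}\cdot 2^{\,l\delta/(1+\delta)}$. Applying the tangent-line bound $2^{x}\ge 1+x\ln 2$ (valid for all $x$, in particular the nonnegative exponents $l\delta/(1+\delta)$) termwise yields
\[ g(\bar\lambda)\;\ge\;\sum_{l=1}^{c}2^{-l}+\frac{\delta\ln 2}{1+\delta}\sum_{l=1}^{c}l\,2^{-l}\;=\;(1-2^{-c})+\frac{\delta\ln 2}{1+\delta}\bigl(2-(c+2)2^{-c}\bigr). \]
Because $\sum_{l=1}^{c}l\,2^{-l}=2-(c+2)2^{-c}\ge 1$ for every $c\ge 2$ (equivalently $2^{c}\ge c+2$), this reduces the goal to the single inequality $\frac{\delta\ln 2}{1+\delta}>2^{-c}$.

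Finally I would verify this last inequality uniformly for $c\ge 2$: here $\delta=2^{1-c/2}\le 1$, so $1+\delta\le 2$ and hence $\frac{\delta}{1+\delta}\ge\frac{\delta}{2}=2^{-c/2}$; therefore $\frac{\delta\ln 2}{1+\delta}\ge 2^{-c/2}\ln 2$, and $2^{-c/2}\ln 2>2^{-c}$ since $2^{-c/2}\le\tfrac12<\ln 2$. Chaining these bounds gives $g(\bar\lambda)>1$, hence $\lambda_c<1+\tfrac{2}{2^{c/2}}$.

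I expect the only delicate point to be that the available slack is modest — $g(\bar\lambda)$ exceeds $1$ by only a small amount for the smallest $c$ (one checks $g(2)=2^{-1/2}+2^{-1}\approx 1.21$) — so the two coarsenings used, replacing $\sum_{l}l2^{-l}$ by its minimum $1$ over $c\ge 2$ and replacing $1+\delta$ by $2$, must be arranged so as not to overspend; in particular one should keep the full strength of $2^{x}\ge 1+x\ln 2$. A bonus of routing the argument through $\ln 2>\tfrac12$ is that no separate small-$c$ case analysis is needed, as the bound already absorbs the worst case $c=2$. A sanity check against Table~\ref{fig:lambda_c} ($\lambda_2=1.441<2$, $\lambda_3=1.138<1.707$, etc.) confirms the statement is both true and non-vacuous.
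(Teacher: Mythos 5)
Your proof is correct, and it takes a genuinely different (though related) route from the paper's. The paper argues in the variable $\gamma_c=2^{1/\lambda_c}$: it considers $f(X)=X^{c}-\sum_{i=0}^{c-1}X^{i}$, checks $f(2)=1>0$, shows $f\bigl(2-2^{-c/2}\bigr)<0$ by reducing the sign condition to the clean inequality $y<2-1/y^{c}$ for $y=2-2^{-c/2}>\sqrt{2}$, concludes $\gamma_c>2-2^{-c/2}$ by the intermediate value theorem, and then asserts the stated bound via $\lambda_c=1/\log_2\gamma_c$ --- leaving the final conversion $1/\log_2\bigl(2-2^{-c/2}\bigr)<1+2\cdot 2^{-c/2}$ unverified. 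You instead work directly in the $\lambda$ variable using the equivalent characterization $\sum_{l=1}^{c}2^{-l/\lambda_c}=1$ recorded at the end of the proof of Lemma~\ref{lem:algorithm_complexity2} (this is the same equation as $f(\gamma)=0$ after dividing by $\gamma^{c}$, so your starting point is legitimate), establish monotonicity of $g$, and evaluate $g$ at the candidate $\bar\lambda=1+2^{1-c/2}$ via the tangent bound $2^{x}\ge 1+x\ln 2$. Both proofs are of the form ``monotonicity plus evaluation at the candidate value,'' but yours targets the desired inequality on $\lambda_c$ directly and is fully self-contained, whereas the paper's is shorter at the cost of glossing over the last logarithm-to-$\lambda$ step. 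Your estimates all check out: $\sum_{l=1}^{c}l2^{-l}=2-(c+2)2^{-c}\ge 1$ for $c\ge 2$ (equivalently $2^{c}\ge c+2$, with equality at $c=2$), and $\frac{\delta\ln 2}{1+\delta}\ge 2^{-c/2}\ln 2>2^{-c}$ since $2^{-c/2}\le\tfrac12<\ln 2$, so $g(\bar\lambda)>1$ and hence $\lambda_c<\bar\lambda$.
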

\begin{proof}
Let $f(X)=X^{c}-\sum_{i=0}^{c}{X^{i}}$, then $\gamma_{c}$ is the
root of $f(X)=0$ with largest absolute value. We know $f(2)=1>0$, so
if we can prove $f(2-\frac{1}{2^{\frac{c}{2}}})<0$ then there must be a
root between 2 and $2-\frac{1}{2^{\frac{c}{2}}}$. Denote
$y=2-\frac{1}{2^{\frac{c}{2}}}$,
\begin{eqnarray*}
f(y)<0\iff y^{c}<\sum_{i=0}^{c}{y^{i}}=\frac{y^{c}-1}{y-1}\iff y<2-\frac{1}{y^c}
\end{eqnarray*}
The last inequality is true because
$y=2-\frac{1}{2^{\frac{c}{2}}}>\sqrt{2}$ when $c\ge2$ and
$2-\frac{1}{y^c}>2-\frac{1}{\sqrt{2}^{c}}=y$. By
$\lambda_c=\frac{1}{\log_{2}{\gamma_{c}}}$,
$\lambda_c<1+\frac{2}{2^{\frac{c}{2}}}$.
\end{proof}
Given the upper bound on $\lambda_c$, the corollary can be proved as
follows.
\begin{proof}(Proof of Corollary~\ref{cor:hybrid_opt_space})
For fixed $\epsilon$ and $c$, by Theorem~\ref{the:algorithm_complexity}, there is an
algorithm with running time
$O^*(d^{\lambda_{c}(1-\epsilon)\log_{2}{N}\tw(\phi)})$ and space
$O^*(2^{c\epsilon\tw(\phi)})$ for any $\epsilon>0$. Set
$\epsilon=\frac{\epsilon'}{c}$, then the space is
$O^*(2^{\epsilon'\tw(\phi)})$ and the running time is
$O^*(d^{\lambda_{c}(1-\frac{\epsilon'}{c})\log_{2}{N}\tw(\phi)})$.
By Lemma~\ref{lem:bound_of_lambda},
$\lambda_{c}(1-\frac{\epsilon'}{c})<(1+\frac{2}{2^{\frac{c}{2}}})(1-\frac{\epsilon'}{c})<1$
when we pick a large enough $c$.
\end{proof}

%The proofs are put in appendix due to spatial restriction.

\begin{remark}\label{rem:time_and_depth}
Given a tree $\mathcal{T}$, any algorithm $\mathcal{A}$ avoiding
$type_{c+1}$ trees with splitting depth
$\mathsf{SD}_{c}(\mathcal{A}, \mathcal{T}, \emptyset)$ requires
$O^{*}(d^{(1-\epsilon)\mathsf{SD}_{c}(\mathcal{A}, \mathcal{T},
\emptyset)\tw(\phi)})$ time and $O^{*}(2^{c\epsilon\tw(\phi)})$
space.
\end{remark}

\paragraph{Optimality}
Similarly to the second half of the previous section, we also prove the optimality of the generalized tradeoff algorithm. We construct the hard instance using generalized fibonacci trees.

\begin{definition}
For any integer $c\ge 2$, and a positive integer $h$, a \emph{$(c,
h)$-fibonacci tree}(denoted as $F_{c,h}$) is defined as by one of
the rules,
\begin{enumerate}[(1)]
\item if $h\le c$, $F_{c,h}$ is a chain of $2^c$ nodes;
\item if $h>c$, $F_{c,h}$ is constructed by starting from a chain of $c$ nodes, then replacing the $i$th node by a subtree $F_{c,h-i}$.
\end{enumerate}
An \emph{extended $(c,h,r)$-fibonacci tree} (denote as $F^{*}_{c, h,
r}$) is constructed by connecting one root node $r$ to a subtree
$F_{c, h}$.
\end{definition}

See Figure~\ref{fig:optimality_of_the_hybrid} for an illustration of
a $(c,h)$-fibonacci tree. A $(c,h)$-fibonacci tree is indeed the
hardest input of the splitting algorithm. To be more specific, the
following lemma holds.
\begin{lemma}\label{lem:worst_case_bound}
For each $h\geq 1$, $\mathsf{MSD}(F^{*}_{c, h, r}, \{r\}) \ge h$.
\end{lemma}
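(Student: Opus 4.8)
The plan is to prove by induction on $h$ that any splitting algorithm restricted to the family $\mathfrak{A}_c$ (equivalently, never creating a $type_i$ tree for $i>c$), when started on the extended $(c,h,r)$-fibonacci tree $F^*_{c,h,r}$ with $\{r\}$ as the set of previous splitting nodes, must have $c$-splitting depth at least $h$. Since $F^*_{c,h,r}$ is a $type_1$ tree (only the extra root $r$ is a splitting node), the key will be to track how splitting interacts with the rigid chain-of-$F_{c,h-i}$ structure. As in the $c=2$ case (Claim~\ref{lem:worst_case_bound_simple_prev} and Lemma~\ref{lem:worst_case_bound_simple}), I would introduce auxiliary ``canonical'' $type_j$ trees $\mathcal{T}_{j,h}$ built from $j$ splitting nodes attached to a copy of a $(c,h)$-fibonacci-type subtree, and prove a simultaneous lower bound $\mathsf{MSD}_c(\mathcal{T}_{j,h})\ge h + (j-1)$ (or the appropriate offset) for all $1\le j\le c$ by a single induction on $h$.

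The core inductive step is a case analysis on where the chosen splitting node $p$ can lie inside a $type_j$ canonical tree. If $p$ lies on the path joining two previous splitting nodes, splitting produces only $type_{j'}$ trees with $j'\le j$, but — because of the fibonacci/chain structure — at least one of the resulting pieces still contains a canonical subtree of fibonacci-height $h-1$ (the chain of $c$ nodes guarantees that removing any one node leaves a consecutive run of $F_{c,h-i}$ blocks, the largest of which has height $\ge h-1$). If instead $p$ is off all such paths, splitting is only allowed when $j<c$, and it yields a $type_{j+1}$ tree; here the defining recursion ``replace the $i$-th node of a length-$c$ chain by $F_{c,h-i}$'' forces that $type_{j+1}$ piece to contain a canonical subtree of height at least $h-1$ as well. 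In every case one recursion level is consumed and the induction hypothesis applied to height $h-1$ gives the $+1$, yielding $\mathsf{SD}_c\ge h$. Summing the $\log N$ levels and using $N=\Theta(\gamma_c^{\,h})$ with $\gamma_c$ the dominant root of $X^c-X^{c-1}-\dots-1=0$ translates height into $\log N$ with the factor $\lambda_c=1/\log\gamma_c$, matching the algorithmic upper bound in Theorem~\ref{the:algorithm_complexity} and Lemma~\ref{lem:algorithm_complexity2}.

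The main obstacle I anticipate is the bookkeeping in the case where $p$ lies on a path between two of the (up to $c$) previous splitting nodes: unlike the $c=2$ situation, there can be several previous splitting nodes, many admissible ``on-path'' positions, and the subtree types produced can drop to any $j'\le j$, so I need a clean invariant — a canonical embedded $(c,\cdot)$-fibonacci subtree whose height degrades by at most one under any single split — that survives the full generality of algorithms in $\mathfrak{A}_c$. Establishing that invariant (essentially: the $(c,h)$-fibonacci tree is ``self-similar under one cut'' in the sense that every connected remnant after deleting one vertex still hosts an $F_{c,h-1}$) is the heart of the argument; once it is in place, the depth recursion $\mathsf{MSD}_c(\cdot,h)\ge \mathsf{MSD}_c(\cdot,h-1)+1$ is immediate and the base cases $h\le c$ (where $F_{c,h}$ is a $2^c$-node chain, forcing $c$ nested $1/2$-splits) are routine.

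Finally, I would note that combining Lemma~\ref{lem:worst_case_bound} with Remark~\ref{rem:time_and_depth} immediately gives the matching running-time lower bound $\Omega^*(d^{\lambda_c(1-\epsilon)\tw(\phi)\log N})$ for every algorithm in $\mathfrak{A}_c$, so that Theorem~\ref{the:algorithm_complexity} is tight within the technique; this is the ultimate payoff of the lemma and the reason the $(c,h)$-fibonacci construction is the right hard instance.
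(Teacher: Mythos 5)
Your overall architecture is the right one and matches the paper's: the paper also proves the lemma by introducing an auxiliary family of partially-split trees $G_{c,h,w}$ (a length-$w$ chain whose $i$-th node carries a copy of $F_{c,\,h-c+w-i}$, with $c-w+1$ splitting nodes attached to the first chain node) and running a double induction, outer on $h$ and inner on $w$, with a case analysis on whether the chosen splitting node lies in the first attached fibonacci subtree or not; the conclusion is $\mathsf{MSD}_c(G_{c,h,w},S_{c-w+1})\ge h-c+w$, which at $w=c$ gives the lemma. Your $\mathcal{T}_{j,h}$ with a simultaneous bound of the form $h+(j-1)$ is essentially the same device.

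However, the specific invariant you propose to carry the induction — that ``every connected remnant after deleting one vertex still hosts an $F_{c,h-1}$,'' so that the fibonacci height degrades by at most one per split — is false, and it is not the invariant that makes the argument work. Already for $c=2$, deleting the root of $F_h$ leaves a remnant that is only $F_{h-2}$; for general $c$, deleting the $i$-th chain node leaves remnants whose largest hosted fibonacci subtrees have heights $h-1$ and $h-i-1$, and only one of them reaches $h-1$. More importantly, in the forced case (the $type_c$ tree, $w=1$ in the paper's notation), the only legal split is at the chain node carrying all $c$ splitting nodes, and the surviving component is $G_{c,\,h-c,\,c}$: the fibonacci height drops by $c$ in a single step, not by one. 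What decreases by exactly one per split is the combined potential $h-c+w$ (height of the deepest hosted fibonacci subtree \emph{plus} how much of the chain, equivalently how many further off-path splits, remains available before a $type_{c+1}$ tree would be created). Your plan correctly senses that the type count $j$ must enter the induction statement, but the ``self-similar under one cut'' invariant you single out as the heart of the argument cannot be established as stated; you need the trade-off version, in which a split either costs one unit of height while consuming one unit of allowed type increase, or resets the type at the price of $c$ units of height. Once that potential is in place the rest of your outline (base cases $h\le c$, translation to $\log N$ via the dominant root $\gamma_c$, and the matching lower bound via Remark~\ref{rem:time_and_depth}) goes through as in the paper.
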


\begin{proof}
For any $c\geq 2$, $h > c$ and $1 \leq w \leq c$, the tree $G_{c, h,
w}$ is defined as follows, first construct a chain of length $w$,
connect $c - w + 1$ splitting nodes to the first node of the chain,
and connect a subtree $F_{c, h - c + w - i}$ to $i$-th node of the
chain. Denote $S_{\ell}$ as the set of the $\ell$ splitting nodes
connected to the first node of the chain. We prove that
$\mathsf{MSD}(G_{c, h, w}, S_{c - w + 1})\geq h - c + w$, which
implies the inequality that we need. Specifically,
$\mathsf{MSD}_{c}(F^{*}_{c, h, r},\{r\}) = \mathsf{MSD}_{c}(G_{c, h,
c},S_{1})\geq h - c + c \geq h$.

The inequality is proved by induction on $h$. The basic case is
trivial. Suppose for any $h < h_0$, $\mathsf{MSD}_c(G_{c, h, c},S_{c
- w + 1})\geq h - c + w$. Now we prove $\mathsf{MSD}_c(G_{c, h_0,
c},S_{1})\geq h_0 - c + w$ by induction on $w$. When $w = 1$, to
prevent $type_i$ tree for $i > c$, we must split at the first node
of the chain. Therefore, $\mathsf{MSD}_{c}(G_{c, h_0, 1}, S_{c}) = 1
+ \mathsf{MSD}_{c}(G_{c, h_0 - c, c}, S_{1}) \geq h_0 - c + 1$. When
$w > 1$, if the splitting node is in the subtree $F_{c, h_0 - c + w
- 1}$ connected to the first node of the chain,
$\mathsf{MSD}_{c}(G_{c, h_0, w}, S_{c - w + 1}) \ge 1 +
\mathsf{MSD}_{c}(G_{c, h_0, w - 1}, S_{c - w + 2})\ge h_0 - c + w$,
otherwise $\mathsf{MSD}_{c}(G_{c, h_0, w}, S_{c - w + 1}) \ge 1 +
\mathsf{MSD}_{c}(G_{c, h_0 - c + w - 1, c}, S_{1}) = h_0 - c + w$.
\end{proof}

\begin{figure}[h]
\centering{\includegraphics[height=200pt]{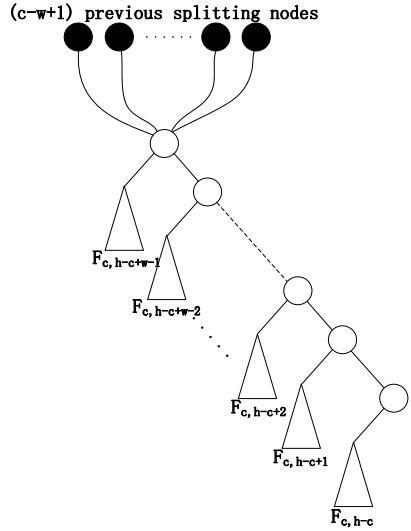}}
\caption{A tree $G_{c, h, w}$.}\label{fig:worst_case_proof}
\end{figure}

\begin{theorem}\label{the:worst_case_bound}
For every $c\geq 2$ and $N > 2^c$, there exists a tree $\mathcal{T}$
with $N$ nodes, such that the $c$-minimal splitting depth of
$\mathcal{T}$ $\mathsf{MSD}_c(\mathcal{T}, \emptyset)$ is at least
$\lambda_{c} (\log N - c) + c - O(1)$.
\end{theorem}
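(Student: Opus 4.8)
The plan is to take $\mathcal{T}$ to be an extended generalized Fibonacci tree $F^{*}_{c,h,r}$ (padded up to exactly $N$ nodes), lower bound its $c$-minimal splitting depth by reusing the machinery behind Lemma~\ref{lem:worst_case_bound}, and then translate the splitting depth back into $N$ via the growth rate of $(c,h)$-Fibonacci trees. Concretely: given $N>2^c$, let $h$ be the largest integer with $|F^{*}_{c,h,r}|\le N$, and obtain $\mathcal{T}$ from $F^{*}_{c,h,r}$ by appending the remaining $N-|F^{*}_{c,h,r}|$ nodes along a pendant path hanging off a leaf. Appending pendant nodes can only increase (or leave unchanged) the minimal splitting depth, so it suffices to bound $\mathsf{MSD}_c$ of the extended Fibonacci tree itself.

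\textbf{From one splitting node to none.} Lemma~\ref{lem:worst_case_bound} gives $\mathsf{MSD}_c(F^{*}_{c,h,r},\{r\})\ge h$, but the theorem asks about the initial configuration $\emptyset$. By the inductive definition of splitting depth, $\mathsf{MSD}_c(\mathcal{T},\emptyset)=1+\min_{p}\max_{i}\mathsf{MSD}_c(\mathcal{T}_i,\{p\})$, where $p$ is the first splitting node and the $\mathcal{T}_i$ are the resulting subtrees, each carrying the single splitting node $p$. I would show that, wherever the adversary places $p$, at least one $\mathcal{T}_i$ — the one inheriting the bulk of the tree — contains, hanging off $p$, a $G_{c,h',w}$-shaped subtree (in the sense of the proof of Lemma~\ref{lem:worst_case_bound}) with $h'\ge h-O(1)$; applying the $G$-tree bound established there yields $\mathsf{MSD}_c(\mathcal{T}_i,\{p\})\ge h-O(1)$, hence $\mathsf{MSD}_c(\mathcal{T},\emptyset)\ge h-O(1)$. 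The crux is that the $(c,h)$-Fibonacci tree is self-similar enough that no single cut can destroy a Fibonacci subtree of depth $h-O(1)$ rooted at the cut node; I would prove this by re-running the same nested induction (on $h$, and on the chain length $w$) used for Lemma~\ref{lem:worst_case_bound}, rather than invoking it as a black box.

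\textbf{From $h$ to $N$.} The sizes $a_h:=|F_{c,h}|$ satisfy $a_h=2^c$ for $h\le c$ and $a_h=\sum_{i=1}^{c}a_{h-i}$ for $h>c$; the characteristic polynomial of this recurrence is $X^c-X^{c-1}-\cdots-1$, whose dominant root is $\gamma_c$ with $\lambda_c=1/\log_2\gamma_c$. Hence $a_h=\Theta(\gamma_c^{\,h})$, so the choice of $h$ above gives $h\ge \lambda_c\log_2 N-O(1)$. Since $(\lambda_c-1)c=O(1)$ by Lemma~\ref{lem:bound_of_lambda}, this is the same, up to an additive constant, as $\lambda_c(\log N-c)+c-O(1)$. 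Combining with the previous paragraph, $\mathsf{MSD}_c(\mathcal{T},\emptyset)\ge\lambda_c(\log N-c)+c-O(1)$, which (together with Lemma~\ref{lem:algorithm_complexity2}) pins down the $c$-minimal splitting depth of the worst case up to lower-order terms.

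\textbf{Main obstacle.} The genuinely delicate step is the second paragraph: upgrading the ``one pre-existing splitting node'' bound of Lemma~\ref{lem:worst_case_bound} to a ``no pre-existing splitting node'' statement while losing only an additive $O(1)$ in the depth — an additive $O(c)$ loss would already be fatal, as it would degrade the bound below $\lambda_c(\log N-c)+c-O(1)$. This forces one to argue robustness of the hard instance against the very first cut, which I expect to require the explicit self-similar structure of $F_{c,h}$ (every node has a near-full Fibonacci subtree attached in some direction) and the same careful case analysis on where $p$ falls relative to the length-$c$ spine of the construction.
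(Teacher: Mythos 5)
Your proposal follows the paper's own route: the hard instance is the (extended) $(c,h)$-Fibonacci tree, the depth lower bound in terms of $h$ is exactly Lemma~\ref{lem:worst_case_bound}, and the conversion from $h$ to $N$ goes through the linear recurrence for $|F_{c,h}|$ and its dominant root $\gamma_c$ with $\lambda_c=1/\log\gamma_c$. The two additional points you flag (padding up to exactly $N$ nodes, and passing from the pre-split configuration $\{r\}$ to $\emptyset$ at an $O(1)$ cost) are details the paper silently elides, so your treatment is, if anything, more careful than the published argument.
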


\begin{proof}
Let $|F_{c, h}|$ be the number of nodes in the tree $F_{c,
h}$. For any $h\leq c$, we have $|F_{c, h}|\leq 2^{c}$, when $h
> c$, we have $|F_{c, h}| = \sum_{i = 1}^{c}|F_{c, h - i}| + c$. By the recursive relation,
generating function of $|F_{c, h}|$ can be written as $f(X) = X^{c}
- \sum_{i=0}^{c}{X^{i}}$. Therefore $|F_{c, h}| = \sum_{i =
1}^{c}{\delta_{c, i}\gamma_{c, i}^{h - c}}$, where $\delta_{c,i}$ is
at most constant times of $2^{c}$ and $\gamma_{c, i}$ is the $i$-th
root of the equation $f(X) = 0$.

Let $\gamma_{c} = \arg\max_i\{|\gamma_{c,i}|\}$. When $h$ tends to
infinity, $|F_{c, h}| = \Theta(2^{c}\gamma_{c}^{h - c})$. So, $h
\geq \log_{\gamma_{c}}{\left({|F_{c, h}| / 2^{c}}\right)} + c - O(1)
= \lambda_{c}(\log{|F_{c, h}|} - c) + c - O(1)$. Therefore, for any
$c\geq 2$ and $N > 2^c$, there exists a tree $\mathcal{T}$ with $N$
nodes, such that the $c$-minimal splitting depth of $\mathcal{T}$
$\mathsf{MSD}_c(\mathcal{T}, \emptyset)$ is at least $\lambda_{c}
(\log N - c) + c - O(1)$.
\end{proof}

\begin{figure}[h]
\centering{\includegraphics[height=160pt]{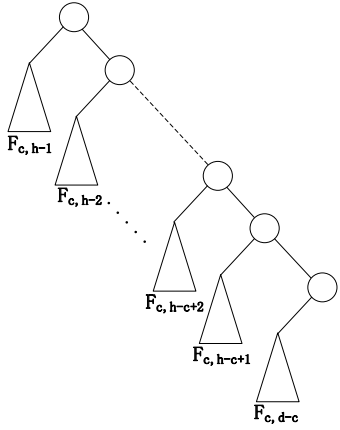}}
\caption{A \emph{$(c, h)$-fibonacci tree} ($F_{c,
h}$).}\label{fig:optimality_of_the_hybrid}
\end{figure}

Similarly to Theorem \ref{the:worst_case_bound_simple}, we conclude
the optimality of our tradeoff algorithm, namely, for fixed $c\ge
2$, $\epsilon$, $0<\epsilon<1$, any algorithm in $\mathfrak{A}_c$
runs in $\Omega^*(3^{\lambda_{c} (\log N - c) + c - O(1)})$.

\subsection{An algorithm optimal in our framework}
As mentioned in Remark~\ref{rem:time_and_depth}, for fixed $c\ge 2$,
an optimal splitting algorithm, i.e. with smallest splitting depth
implies an optimal algorithm in $\mathfrak{A}_c$. Indeed, the
following lemma assures that such an optimal splitting algorithm can
be computed in quasi-polynomial time.

\begin{lemma}\label{lem:search_splitting_depth}
For a tree T with N nodes, $\mathsf{MSD}_{c}(T,\emptyset)$ can be
computed in time $O^{*}\left(N^{\mathsf{MSD}_{c}(T,
\emptyset)}\right)$ and polynomial space.
\end{lemma}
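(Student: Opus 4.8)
The plan is to compute $\mathsf{MSD}_c(T,\emptyset)$ by an explicit recursion that mirrors Definition~\ref{def:cutting_depth}, memoized over the relevant subproblems. The key observation is that although a splitting algorithm may act on arbitrary trees-with-splitting-nodes $(\mathcal{T}, S)$, every subtree that can arise from a sequence of splittings applied to the original input tree $T$ is an \emph{induced connected subgraph} of $T$ together with a set $S$ of at most $c$ of its nodes marked as splitting nodes. First I would bound the number of such configurations: a subtree arising during the recursion is determined by its ``boundary'' toward the rest of $T$, i.e. by the previous splitting nodes and the edges through which they were separated. Because we only ever keep at most $c$ previous splitting nodes (otherwise the $c$-splitting depth is $\infty$ and such branches are irrelevant for the minimum), such a subtree is cut out of $T$ along at most $c$ edges incident to the splitting nodes, so the number of relevant $(\mathcal{T}, S)$ pairs is $N^{O(c)}$, i.e. polynomial for fixed $c$.

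Next I would set up the recursion. For a configuration $(\mathcal{T}, S)$ with $|S|\le c$ and $|S|<|\mathcal{T}|$, by definition
\begin{eqnarray*}
\mathsf{MSD}_c(\mathcal{T},S) = 1 + \min_{p}\;\max_{(\mathcal{T}_0,S_0)\in C_{\mathcal{T},S,p}}\mathsf{MSD}_c(\mathcal{T}_0,S_0),
\end{eqnarray*}
where $p$ ranges over the $\le N$ candidate splitting nodes of $\mathcal{T}$, and in the $\max$ we discard any child configuration with $|S_0|>c$ by treating its value as $+\infty$. The base case $|S|=|\mathcal{T}|$ returns $0$. Evaluate this recursion top-down with memoization: each of the $N^{O(c)}$ configurations is computed once, and for each we try $\le N$ splitting nodes, each producing $\le d$ children whose values have already been computed (or are being computed along a path of length at most $\mathsf{MSD}_c(T,\emptyset)$ in the recursion tree). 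Since the recursion depth is $\mathsf{MSD}_c(T,\emptyset)$ and the branching at each level is over $\le N$ choices of $p$, a direct depth-first evaluation without full memoization already gives time $O^*(N^{\mathsf{MSD}_c(T,\emptyset)})$; memoization only helps. The space is polynomial because we store the recursion stack (depth $\mathsf{MSD}_c(T,\emptyset) = O(\log N)$ many frames, each describing a configuration in $\mathrm{poly}(N)$ bits) — and one may forgo the memoization table entirely to keep the space polynomial, at which point the running time bound $O^*(N^{\mathsf{MSD}_c(T,\emptyset)})$ is exactly what the brute-force search over splitting nodes yields.

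The main obstacle I expect is the bookkeeping needed to argue that the configurations $(\mathcal{T},S)$ arising in the recursion are efficiently enumerable and that the recursion terminates correctly: one must verify that splitting a valid configuration yields valid configurations (each $\mathcal{T}_i$ is a connected induced subtree sharing exactly the splitting node $p$ with its siblings, and $S_i = (S\cap \mathcal{T}_i)\cup\{p\}$), that $|S_i|\le |S|+1$ so the ``$|S|>c$ gives $\infty$'' clause correctly prunes, and that the measure $|\mathcal{T}| - |S|$ strictly decreases along every non-terminating branch so the recursion is well-founded. Once this is in place, the time bound follows by observing that the search tree explored has depth $\mathsf{MSD}_c(T,\emptyset)$ (we can stop exploring a branch once its partial depth exceeds the best value found, or simply note the minimum is witnessed by a tree of that depth) and fan-out $\le N$ at the level of choosing $p$, giving $N^{\mathsf{MSD}_c(T,\emptyset)}$ leaves times $\mathrm{poly}(N)$ work per node; the polynomial space bound is immediate from storing only the current root-to-node path.
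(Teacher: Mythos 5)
Your overall approach --- exhaustive recursion over all candidate splitting nodes, evaluated depth-first in polynomial space --- is the same as the paper's, but your time analysis has a genuine gap. You assert that ``a direct depth-first evaluation without full memoization already gives time $O^{*}(N^{\mathsf{MSD}_c(T,\emptyset)})$'' because ``the recursion depth is $\mathsf{MSD}_c(T,\emptyset)$.'' This is false for the unpruned min--max recursion: to compute the $\min$ over $p$ you must also explore \emph{suboptimal} splitting nodes, and along such branches the only quantity guaranteed to decrease is your termination measure $|\mathcal{T}|-|S|$, so the recursion can reach depth $\Theta(N)$ (e.g.\ repeatedly splitting near a leaf leaves one subtree of size $N-O(1)$). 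Consequently neither the $N^{\mathsf{MSD}_c}$ leaf count nor your claim that the stack holds only $O(\log N)$ frames follows. Your parenthetical fixes do not close the gap: ``the minimum is witnessed by a tree of that depth'' says nothing about the branches the algorithm actually explores, and pruning against ``the best value found'' gives no bound before the first complete branch has been explored (and seeding the bound with the $O(\log N)$ value from $\mathcal{H}_c$ only yields $N^{O(\log N)}$, which can exceed $N^{\mathsf{MSD}_c(T,\emptyset)}$ by a superpolynomial factor).

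The missing idea --- and the one the paper uses --- is to impose an explicit depth budget: for $h=1,2,\dots$ run the recursion with a hard cap of $h$ levels, answering the decision question ``is $\mathsf{MSD}_c(T,\emptyset)\le h$?''. Each such test branches over at most $N$ choices of $p$ per level for at most $h$ levels, hence costs $O^{*}(N^{h})$ in polynomial space, and the first $h$ for which it succeeds is $\mathsf{MSD}_c(T,\emptyset)$, giving total time $O^{*}(N^{\mathsf{MSD}_c(T,\emptyset)})$. Your observations about which configurations $(\mathcal{T},S)$ can arise and about pruning the $|S|>c$ branches are correct and compatible with this, but the iterative-deepening structure must be the backbone of the argument rather than an afterthought.
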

\begin{proof}
For any $h$, whether $\mathsf{MSD}_{c}(T, \emptyset)\leq h$ in
$O^{*}(N^{d})$ can be tested by a branch-and-bound algorithm. For any tree $T$ and
previous splitting nodes $S$, enumerate all nodes and check
whether the minimal splitting depth of each subtree is at most $h-1$
recursively. The maximal depth of the recursion is set to $h$, thus the
running time is $O^{*}\left(N^{h}\right)$. Therefore, the overall
time is at most $O^{*}\left(N^{\mathsf{MSD}_{c}(T, \emptyset)}\right)$. The
required space is polynomial.
\end{proof}

By applying the splitting algorithm in the previous lemma we obtain the following tradeoff algorithm.

\begin{theorem}\label{the:optimal_hybrid}
For any $c\geq 2$, any $\epsilon$ satisfying $0<\epsilon<0$, the
hybrid algorithm can be done in time
$O^{*}(d^{(1-\epsilon)\mathsf{MSD}_{c}(T, \emptyset)\tw(\phi)})$ and
space $O^{*}(2^{c\epsilon\tw(\phi)})$.
\end{theorem}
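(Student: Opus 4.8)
The plan is to combine the quasi-polynomial-time search procedure for an optimal splitting algorithm (Lemma~\ref{lem:search_splitting_depth}) with the generic time-space accounting for the hybrid framework (Remark~\ref{rem:time_and_depth}). The statement of Theorem~\ref{the:optimal_hybrid} asserts that for every $c\ge 2$ and every $\epsilon\in(0,1)$, there is an instantiation of Algorithm~\ref{alg:hybrid} running in time $O^{*}(d^{(1-\epsilon)\mathsf{MSD}_{c}(T,\emptyset)\tw(\phi)})$ and space $O^{*}(2^{c\epsilon\tw(\phi)})$. The proof is essentially a plug-in: use as the splitting algorithm (line~\ref{line:splitting} of Algorithm~\ref{alg:hybrid}) the one that, on input a (sub)tree $\mathcal{T}$ with previous splitting nodes $S$, computes and returns a node realizing $\mathsf{MSD}_{c}(\mathcal{T},S)$ — i.e., a splitting node on an optimal branch of the minimal-splitting-depth recursion.

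First I would invoke Lemma~\ref{lem:search_splitting_depth} to argue that this optimal splitting node can be found in quasi-polynomial time $O^{*}(N^{\mathsf{MSD}_{c}(T,\emptyset)})$ and polynomial space; since $\mathsf{MSD}_{c}(T,\emptyset)=O(\log N)$ by Lemma~\ref{lem:algorithm_complexity2} (it is at most $\lambda_c(\log N-c)+c+O(1)$), this is $N^{O(\log N)}=2^{O(\log^2 N)}$, which is subsumed by the $d^{\Theta(\log N)\tw(\phi)}$ running time of the hybrid algorithm itself whenever $\tw(\phi)=\Omega(\log N)$, the regime we care about — so it contributes only a lower-order additive term and does not disturb the stated bound. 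Next I would observe that this splitting algorithm, by definition, never creates a $type_i$ tree for $i>c$ (the value $\infty$ in Definition~\ref{def:cutting_depth} guarantees that any branch creating such a tree is never chosen by an algorithm attaining finite, hence minimal, splitting depth), so the algorithm lies in $\mathfrak{A}_c$ and, as in the proof of Theorem~\ref{the:algorithm_complexity}, stores only $\epsilon$-assignment groups for $type_{\le c}$ trees, costing $O^{*}(2^{c\epsilon\tw(\phi)})$ space.

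Then I would carry out the running-time analysis exactly as in Remark~\ref{rem:time_and_depth}: at each level of the recursion, by Lemma~\ref{lem:assignment_consistent_count_important} the number of consistent $\epsilon$-assignment-group tuples generated at a split is at most $d^{(1-\epsilon)\tw(\phi)}$, and the recursion has depth equal to the splitting depth $\mathsf{SD}_{c}(\mathcal{A},T,\emptyset)=\mathsf{MSD}_{c}(T,\emptyset)$ for this particular $\mathcal{A}$; multiplying the branching factor over all levels gives $d^{(1-\epsilon)\mathsf{MSD}_{c}(T,\emptyset)\tw(\phi)}$, and the per-node bookkeeping (filling the arrays $M$, checking consistency of the $2^{O(\epsilon\tw(\phi))}$ assignments within each group) contributes only $2^{O(\tw(\phi))}$ per recursive call and $N^{O(1)}$ many calls, all absorbed into $O^{*}(\cdot)$. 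Correctness is inherited verbatim from the correctness of Algorithm~\ref{alg:hybrid} established in Sections~\ref{subsec:space-efficient} and~\ref{subsec:hybrid}, since we have only changed which valid splitting node is selected, not the semantics of assignments, consistency, or the DP-style caching.

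The main obstacle I expect is purely a matter of careful bookkeeping rather than a new idea: one must check that the quasi-polynomial overhead of \emph{searching} for the optimal splitting node — incurred afresh at every one of the (polynomially many across the whole recursion, but $d^{\Theta(\log N)\tw(\phi)}$ counting repetitions) recursive invocations — truly stays below the dominant $d^{(1-\epsilon)\mathsf{MSD}_{c}\tw(\phi)}$ term, which needs the standing assumption $\tw(\phi)=\Omega(\log N)$ (equivalently, one absorbs $2^{O(\log^2 N)}$ into $O^{*}(\cdot)$ only under this assumption) and the observation that the search is performed in polynomial space so it does not inflate the $O^{*}(2^{c\epsilon\tw(\phi)})$ space budget. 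A minor additional point is to confirm that the recursion tree of Algorithm~\ref{alg:hybrid} under this splitting rule indeed has depth exactly $\mathsf{MSD}_c(T,\emptyset)$ and not merely $\mathsf{SD}_c$ of some suboptimal algorithm — this is immediate since we defined $\mathcal{A}$ to attain the minimum, but it should be stated.
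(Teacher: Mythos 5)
Your proposal is correct and follows exactly the route the paper intends: the paper gives no explicit proof beyond the sentence ``By applying the splitting algorithm in the previous lemma we obtain the following tradeoff algorithm,'' i.e., plug the optimal splitting rule from Lemma~\ref{lem:search_splitting_depth} into Algorithm~\ref{alg:hybrid} and invoke the accounting of Remark~\ref{rem:time_and_depth}. Your additional care about absorbing the quasi-polynomial $N^{\mathsf{MSD}_c}$ search overhead under the standing assumption $\tw(\phi)=\Omega(\log N)$ is a detail the paper glosses over, and it is handled correctly.
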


\section{Some remarks on the complexity of bounded width $\sat$}\label{sec:complexity}
In this section we separate the complexity of the tree-width parameterized from the 
path-width parameterized $\sat$ for the same width value, and we initiate the study of the incompressibility and non-sparsification of width-parameterized $\sat$ instances.

\paragraph{More preliminaries and notation}
For the first part we proceed by giving a machine characterization of
$\sattw(w(n))$; the problem of deciding $\sat$ where the CNF formula is given together
with a tree-decomposition of width $w(n)$, where $n$ is the input length.
\cite{papakonstantinou2009note} gives a machine characterization of $\satpw(w(n))$; the corresponding problem for a given path decomposition.
We define $\nl[r(n)]$ to be the class of problems decidable by a log-space machine
equipped with a read-only non-deterministic, polynomially long tape,
where the machine makes $\leq r(n)$ passes (as the head reverses) over the witness tape.
It was shown that $\satpw(r(n)\log n)$ is complete for $\nl[r(n)]$ under many-to-one
logspace reductions. Also under the ETH $\nl[\omega(1)]$ is incomparable to $\mathbf{P}$.
The ETH states that $3$-$\sat$ on $N$ variables cannot be decided in time $2^{o(N)}$.
Further study shows that under the ETH together with the assumption that NP is not contain in some fixed polynomial space bound, as $r(n)$ grows from constant to polynomial there is a
strict hierarchy of classes that grows from $\nl$ all the way up to $\np$.

We strengthen the assumption that $\np\not\subseteq \conp$ to $\nl[\omega(1)]\not\subseteq \conp$. In fact,
we assume further that
$$\nl[\omega(1)]\not\subseteq \conp\textsf{/poly}$$
A complexity theoretic study of this assumption is interesting on its own right, and it is left for future work. 
Here are some indications on it validity:
(i) the belief that $\np\not\subseteq \conp$ is because usually people think that
in fact the required certificate size blows up to exponential (not merely super-polynomial)
- i.e. some kind of exhaustive enumeration is required -
and (ii) given a non-uniform advice of polynomial size won't help either
(in particular, an easy extension of Karp-Lipton shows that if $\np\subseteq\conp\textsf{/poly}$ the polynomial
hierarchy collapses).

\subsection{Tree-width parameterized $\sat$ is harder than path-width parameterized $\sat$}
We characterize $\sattw$ in terms of non-deterministic space bounded machines, that run in time polynomial, and in addition they
are equipped with an unbounded stack \cite{cook1971characterizations}. $\nauxpda(s(n), t(n))$ is the class of decision problems decidable by such a machine in space $O(s(n))$ and time $O(t(n))$. $\sac(s(n), t(n))$ is defined as the semi-unbounded boolean circuit of size $O(s(n))$ and depth $O(t(n))$, in which the AND gates have constant fan-in and all negations are at the input level. We have that $\sac(\log{n}, n^{O(1)})=:\sac^1$, and $\textbf{NC}^1\subseteq\sac^1\subseteq\textbf{AC}^1$.

The following theorem is a new characterization of $\sattw$. The proof of this theorem is very different than the path-width characterization theorem of \cite{papakonstantinou2009note}.

\begin{theorem}\label{thm:sat_tw}
$\sattw(\log^l{n})$ is hard for $\nauxpda(\log^l{n}, n^{O(1)}), \forall l\in \mathbb{Z}^+$, under $O(\log^l{n})$-space may-to-one reductions.
\end{theorem}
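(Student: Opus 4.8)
The plan is to exhibit a many-to-one reduction from the canonical complete problem for $\nauxpda(\log^l n, n^{O(1)})$ to $\sattw(\log^l n)$, computable in $O(\log^l n)$ space. Recall that $\nauxpda(\log^l n, n^{O(1)})$-computations are captured by polynomial-size \emph{semi-unbounded} circuits of depth $O(\log^l n)$ with AND gates of bounded fan-in (since $\nauxpda$ in space $s$ and polynomial time equals $\sac$-circuits of depth $O(s)$ and polynomial size, by the Cook/Venkateswaran-style characterization referenced in the excerpt). So it suffices to reduce the evaluation problem for such circuits to $\sattw$ with tree-width $O(\log^l n)$. First I would take an arbitrary $\sac$-circuit $\mathcal{C}$ of size $n^{O(1)}$ and depth $O(\log^l n)$ with bounded-fan-in AND gates, OR gates of unbounded fan-in, and negations pushed to the inputs, and build a CNF $\phi_{\mathcal C}$ together with an explicit tree decomposition of its incidence graph.

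The key steps, in order: (1) Introduce one propositional variable $y_g$ per gate $g$ of $\mathcal{C}$, and write local clauses enforcing $y_g \leftrightarrow (\text{gate operation applied to its inputs})$. For a bounded-fan-in AND this is a constant number of clauses; for an unbounded-fan-in OR gate $g = \bigvee_i h_i$ the naive encoding $y_g \leftrightarrow \bigvee_i y_{h_i}$ is fine because the clause $\overline{y_g}\vee y_{h_1}\vee\cdots\vee y_{h_k}$ has large \emph{arity} but that is not what controls tree-width (the incidence graph only needs the clause-node and its variable-nodes to co-occur in some bag, and a high-degree clause-node need not inflate width if we route the decomposition carefully). (2) Fix the output variable to $1$ and hardwire the input literals of $\mathcal C$ according to the actual input $x$ of the $\nauxpda$ instance — this is the part that depends on $x$ and is trivially log-space. (3) Build the tree decomposition: use the circuit's layered structure — a depth-$O(\log^l n)$ $\sac$-circuit can be laid out so that, along a traversal of the circuit DAG, the number of ``active'' gate-variables (those that must still be referenced) at any cut is $O(\text{depth}) = O(\log^l n)$, by the standard argument that a circuit of depth $D$ has a path-like/tree-like separator structure with separators of size $O(D)$. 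Turning this layout into the bags of a tree decomposition gives width $O(\log^l n)$. (4) Verify the three tree-decomposition axioms and that every clause of $\phi_{\mathcal C}$ has all its variables inside a common bag (each local constraint involves a gate and its $O(1)$ AND-inputs, or a single OR gate together with its inputs — for the OR gate we can place a chain of bags threading the OR-gate variable through all its input-gate variables, lengthening $|W|$ but keeping width bounded). (5) Conclude that $\mathcal C(x)=1$ iff $\phi_{\mathcal C}$ is satisfiable, and that the whole map $x\mapsto(\phi_{\mathcal C},\text{decomposition})$ is computable in $O(\log^l n)$ space, matching the reduction bound in the statement.

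The main obstacle will be step (3)–(4): bounding the tree-width of the incidence graph of $\phi_{\mathcal C}$ by $O(\log^l n)$ and producing the decomposition \emph{explicitly in small space}, rather than merely proving existence. The subtlety is that a $\sac$-circuit is a DAG, not a tree, so gate-variables with high fan-out must appear in many bags; the connectivity axiom (3) then forces those bags to form a connected subtree, and one must argue that at most $O(\log^l n)$ such ``long-lived'' variables are simultaneously active along the chosen traversal. I expect to handle this by an explicit recursive decomposition of the circuit mirroring its depth-$O(\log^l n)$ structure — essentially the same divide-and-conquer that underlies the $\sac^1 \subseteq$ space-$O(\log n)$-with-stack simulation — and to charge each active variable to an ancestor gate on the current root-to-leaf path, of which there are $O(\log^l n)$. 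A secondary point to get right is that the OR gates of unbounded fan-in do not blow up the width: since we only need \emph{satisfiability} (not the stronger consistency bookkeeping of Section \ref{sec:alg}), a single wide clause per OR gate suffices, and its variable-node can be made to live in a thin ``comb'' of bags that does not overlap more than $O(1)$ other gate-variables at a time. Once the decomposition is in hand, correctness of the encoding and the space bound on the reduction are routine.
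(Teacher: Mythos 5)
There is a genuine gap at the heart of your step (3), and it is fatal to the whole route. You claim that a depth-$D$ $\sac$ circuit ``can be laid out so that the number of active gate-variables at any cut is $O(D)$,'' i.e.\ that the Tseitin-style encoding $y_g \leftrightarrow (\text{gate operation})$ has incidence-graph tree-width $O(\text{depth})$. This is false: the tree-width of that encoding is governed by the \emph{wiring width} of the circuit, not its depth. Already a depth-$2$ layer whose input-to-OR-gate bipartite wiring is an expander forces tree-width $n^{\Omega(1)}$, and polynomial-size $\sac^1$ circuits can contain exactly such layers. The charging-to-ancestors argument you sketch does not rescue this, because a gate variable $y_g$ with large fan-out must carry a single consistent truth value to all of its uses, so the connectivity axiom forces its bags to span all of those constraints simultaneously; in a DAG with expander-like connectivity there is no traversal under which only $O(\log^l n)$ such variables are live at once. (A secondary issue: the characterization you need gives circuits of size $2^{O(\log^l n)}$ and depth $O(\log^l n)$, which is quasi-polynomial for $l>1$, not $n^{O(1)}$.)

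The paper's proof sidesteps exactly this obstacle by never introducing gate-value variables at all. It first normalizes the circuit (layered, strictly alternating, AND fan-in $2$, negations at inputs), so that every accepting \emph{proof tree} has the same fixed shape, the \emph{skeleton}. The CNF variables then encode, for each node $v$ of the skeleton, an $O(\log^l n)$-bit index naming \emph{which gate of the circuit labels $v$}; clauses only enforce, for each skeleton edge $(u,v)$, that the two chosen gates are actually wired together (plus that leaves are labeled by true input literals). All constraints are local to edges of a tree, so the tree decomposition is simply the skeleton with each bag holding the index-bits of a node and its parent, giving width $O(\log^l n)$ regardless of the circuit's fan-out structure. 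That shift --- from certifying gate values on the DAG to certifying a labeling of a tree-shaped witness --- is the missing idea; without it, bounding the tree-width in your steps (3)--(4) cannot go through.
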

\begin{proof}
By the characterization theorems from \cite{venkateswaran1987properties} and \cite{ruzzo1980tree}, it can be shown that $\nauxpda(\log^l{n},poly(n))\subseteq\sac(2^{O(\log^l{n})}, \log^l{n})$ by $O(\log^l{n})$ space bounded transformations. Now it suffices to construct a $\sattw(\log^l{n})$ instance $\phi$, given the $\sac(2^{O(\log^l{n})}, \log^l{n})$ circuit $C$ and an input $x$, such that $\phi$ is satisfiable if and only if $C$ evaluates to $1$ on $x$.

Our construction generalizes the observations in \cite{gottlob2001complexity}. Without loss of generality we assume that the circuit has the following \emph{normal form}:
\begin{enumerate}[(1)]
\item Fan-in of all AND gates is $2$.
\item The circuit is \emph{layered}.
\item The circuit is \emph{strictly alternating}, odd-layer gates are OR, even-layer gates are AND.
\item The circuit has an odd number of layers.
\item NOT gates only appear in the bottom layer.
\end{enumerate}

\begin{figure*}[ht]
\centering \subfloat[A semi-unbounded circuit with a proof tree
highlighted. NOT gates are
hidden]{\includegraphics[height=140pt]{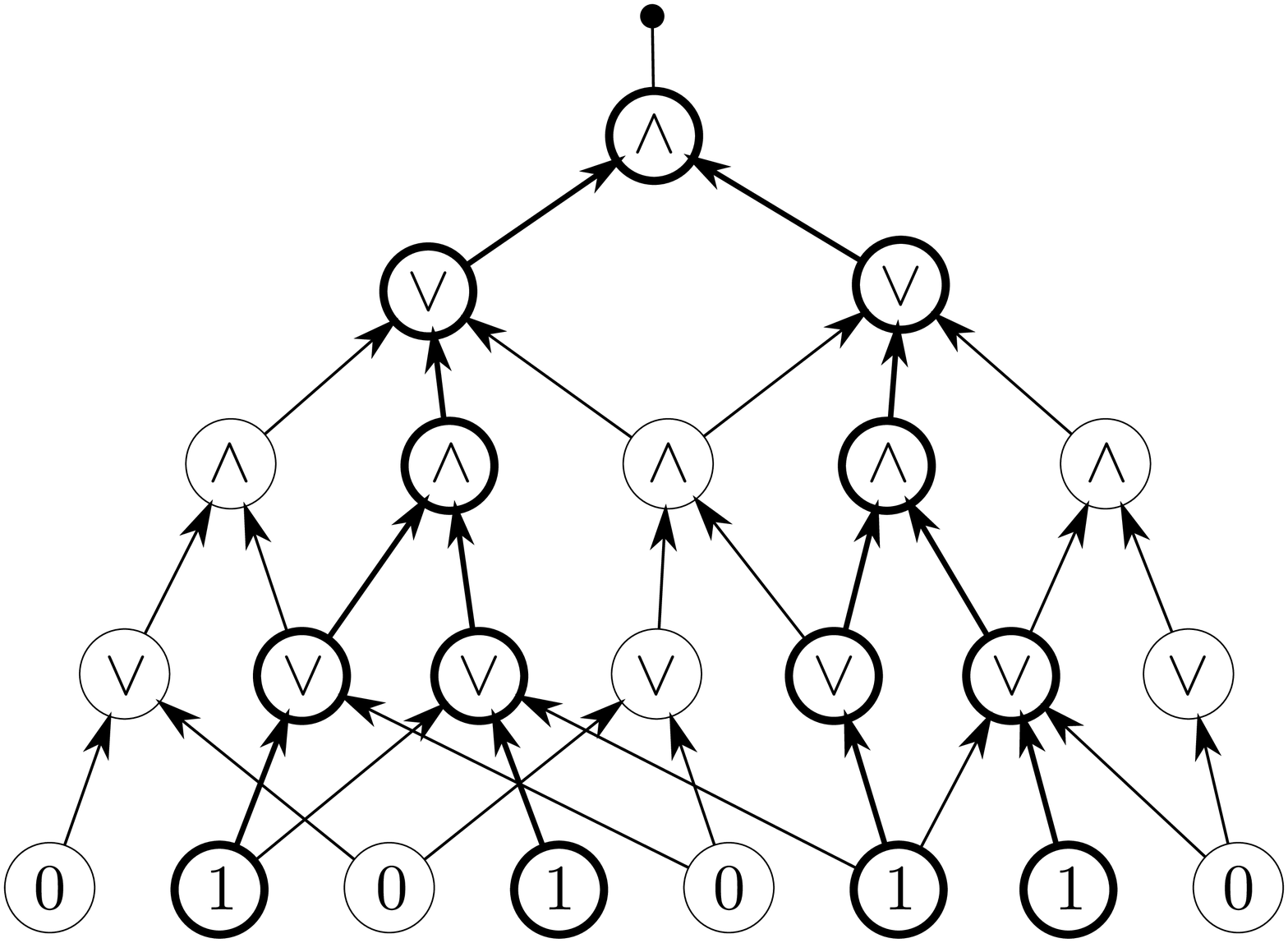}\label{subfig:proof_tree}}
\hspace{40pt} \subfloat[The skeleton of the proof trees]
{\includegraphics[height=140pt]{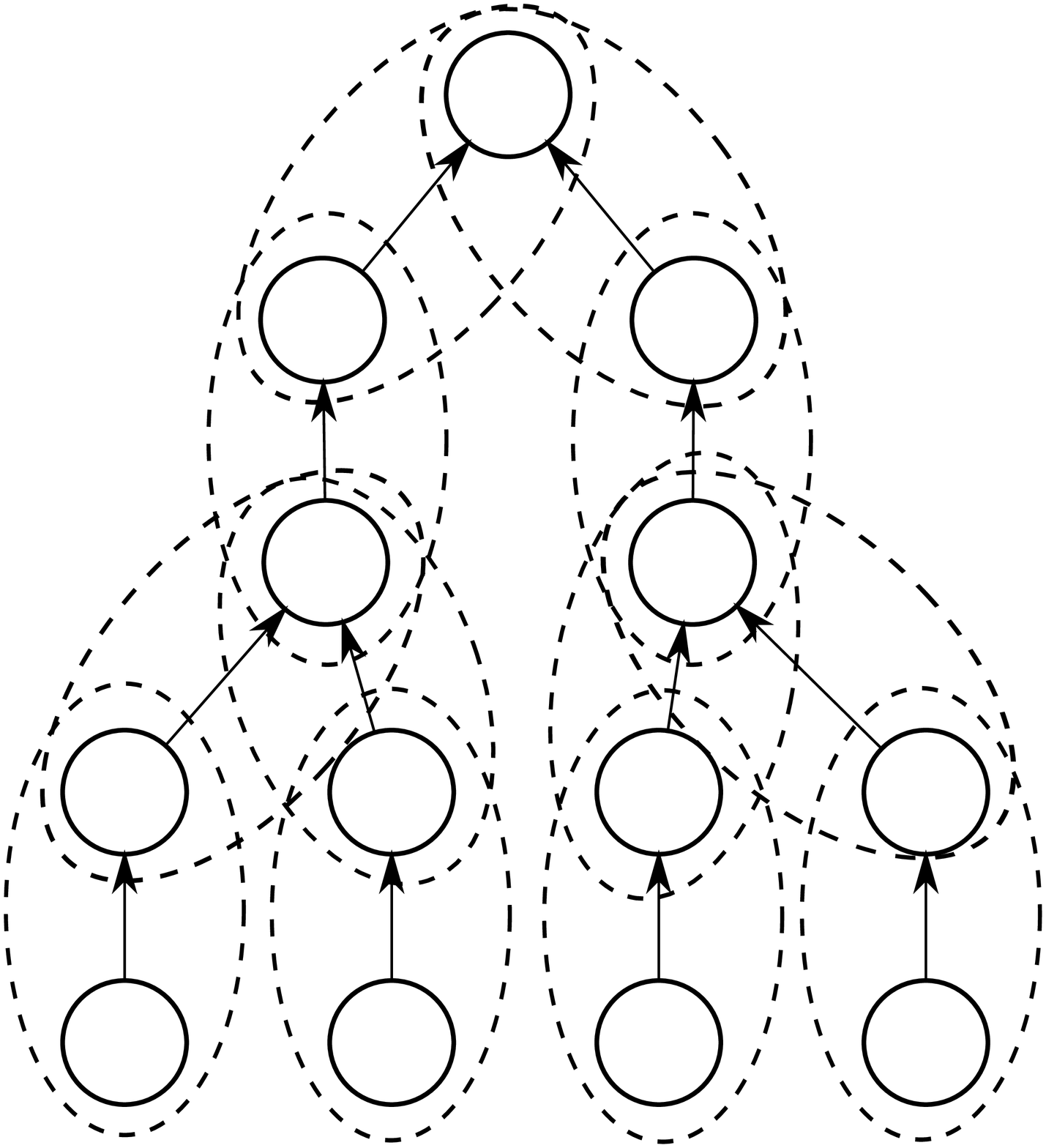}\label{subfig:skeleton}}
\caption{In
\ref{subfig:skeleton} a $\sattw(\log^l{n})$ instance is
constructed from the skeleton: each node corresponds to
$O(\log^l{n})$ boolean variables; clauses are constructed for each
dashed circle; and only those variables corresponding to a node
shared by different dashed circles must be put into a bag in the
tree decomposition, which ensures $O(\log^l{n})$ tree
width.}\label{fig:logcfl_hard}
\end{figure*}

A \emph{proof tree} is a tree with the same layering as the circuit. Each node of the tree is labeled by a gate from the corresponding layer of the circuit. At an odd layer, each node has one child, while at an even layer, each node has two children. Two connected nodes must be labeled such that the corresponding gates are connected. At the bottom layer, each node must be labeled by an input gate or a NOT gate which outputs value $1$. See Figure~\ref{subfig:proof_tree} for an example.

A proof tree can be viewed as the witness that a circuit evaluates to $1$ on the given input. One may observe that, since we assume that all the circuits are of normal form, every proof tree must have the same shape. The tree of the same shape of a proof tree without labeling is called a \emph{skeleton}(see Figure~\ref{subfig:skeleton}). Showing that the circuit evaluates to $1$ on an input is equivalent to giving a labeling satisfying the proof tree conditions to the skeleton.

One of the $2^{O(\log^l{n})}$ gates in the circuit can be indexed by a $O(\log^l{n})$ bit binary string. For each node $v$ in the skeleton, assign a variable $x_v$ using space $O(\log^l{n})$, to indicate the index of the gate that this node be labeled. This variable is also seen as a group of $O(\log^l{n})$ boolean variables. For each pair of connected nodes $u,v$ in the skeleton, and for each pair of possible indices $a,b$, which can be assigned to $x_u$ and $x_v$ correspondingly satisfying the conditions of a proof tree, create a clause encoding $x_u=a\wedge x_v=b$. All these clauses form an $\sat$ instance whose incidence graph has a tree decomposition with tree width $O(\log^l{n})$ (See Figure~\ref{fig:logcfl_hard} for an example and illustration).
\end{proof}

\begin{corollary}
$\sattw(\log n)$ is hard for $\sac^1$, under log-space many-to-one reductions.
\end{corollary}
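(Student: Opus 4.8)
The plan is to derive the corollary directly from Theorem~\ref{thm:sat_tw} by specializing the layer count to a single logarithmic factor and matching it against the known characterization of $\sac^1$. Concretely, I would set $l=1$ in Theorem~\ref{thm:sat_tw}, which gives that $\sattw(\log n)$ is hard for $\nauxpda(\log n, n^{O(1)})$ under $O(\log n)$-space many-to-one reductions. The first step is then to recall the classical characterization (Venkateswaran~\cite{venkateswaran1987properties}, and the surrounding literature on $\logcfl$ and semi-unbounded circuits) that $\nauxpda(\log n, n^{O(1)}) = \logcfl = \sac^1$; this is exactly the $l=1$ instance of the circuit-class containment $\nauxpda(\log^l n, \mathrm{poly}(n)) \subseteq \sac(2^{O(\log^l n)}, \log^l n)$ already invoked inside the proof of Theorem~\ref{thm:sat_tw}, together with the reverse simulation. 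So the set of problems for which $\sattw(\log n)$ is shown hard is precisely $\sac^1$.

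Second, I would check the reduction-type bookkeeping: Theorem~\ref{thm:sat_tw} gives hardness under $O(\log^l n)$-space many-to-one reductions, which for $l=1$ is $O(\log n)$-space, i.e. logarithmic-space many-to-one reductions --- exactly the reduction notion stated in the corollary. Since logspace reductions compose and $\sac^1 \supseteq \mathbf{NC}^1 \supseteq \mathbf{L}$ is closed under them in the relevant sense, the hardness statement is meaningful (the class is not trivially closed downward past the reductions), and no further argument is needed beyond quoting Theorem~\ref{thm:sat_tw} at $l=1$.

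The only genuine content to be spelled out --- and the step I would flag as the ``main obstacle,'' though it is really just a citation --- is the identification $\nauxpda(\log n, n^{O(1)}) = \sac^1$. One direction, $\nauxpda(\log n, n^{O(1)}) \subseteq \sac^1$, is the content of the containment used in the proof of Theorem~\ref{thm:sat_tw} (via \cite{venkateswaran1987properties} and \cite{ruzzo1980tree}); the other direction, that $\sac^1$-circuit-value reduces to a polynomial-time log-space auxiliary pushdown computation, is the standard fact that $\logcfl = \sac^1$. Putting these together, any $\sac^1$ problem many-to-one logspace-reduces to $\sattw(\log n)$, which is the claim. I would therefore write the proof as a single short paragraph: instantiate $l=1$ in Theorem~\ref{thm:sat_tw}, invoke $\nauxpda(\log n, n^{O(1)}) = \sac^1$, and conclude.
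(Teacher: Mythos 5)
Your proposal is correct and matches the paper's argument: the paper's one-line proof simply cites the characterization of \cite{venkateswaran1987properties} (i.e.\ $\nauxpda(\log n, n^{O(1)}) = \sac^1$), which is exactly the instantiation of Theorem~\ref{thm:sat_tw} at $l=1$ that you spell out. Your version just makes the bookkeeping explicit.
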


This corollary follows by the characterization in \cite{venkateswaran1987properties}.

Therefore, under the standard assumption that $\nl\subsetneq \sac^1$ we separate the complexity of $\satpw(\log n)$ and $\sattw(\log n)$. How about higher width values? In this case we do not have well-established higher analogs of $\nl\subsetneq \sac^1$. However, the characterization of Theorem \ref{thm:sat_tw} together with the main theorem in \cite{papakonstantinou2009note} can be understood in the other direction. This means that the conjectures about the corresponding complexity classes may be true exactly because one may  conjecture that $\sattw(w(n))$ is harder than $\satpw(w(n))$. It seems that there is much more to be done towards this direction.

\subsection{Incompressibility and non-sparsification}
By compressibility of $\sat$ instances we mean that the input instance or parameters can be reduced by an efficient algorithm, in a way that the compressed instance preserves the satisfiability. Let us start with some preliminary observations stating that no non-trivial compression can be done to reduce the width parameter.

Suppose we have an instance $\phi$ together with an optimal tree decomposition of width $\tw(\phi)=\omega(\log|\phi|)$, and assume that there is a procedure running in polynomial time, which constructs a new instance $\phi'$ with tree-width $\tw(\phi')=\frac{1}{2}\tw(\phi)$ such that $\phi\iff \phi'$. If we repeat this procedure for $\log\tw(\phi)$ times, we will obtain an instance $\overline{\phi}$, where $\tw(\overline{\phi})$ is constant and has the same satisfiability as $\phi$. Satisfiability of $\overline{\phi}$ can be determined in polynomial time, and the transformation can also be done in polynomial time, therefore we can determine satisfiability of $\phi$ in polynomial time. Under ETH, this is not possible, which implies that determining satisfiability of $\phi$ requires $2^{\Omega(\tw(\phi))}$ time. Namely, such a procedure cannot exist.

Next we turn to the question of interactively ``compressing'' the instance length. The literature refers to this process as \emph{sparsification}. \cite{DM10} shows that in order for a polynomial-time bounded machine to decide 3-$\sat$ with the help of an unbounded oracle, if the number of bits needed to be sent to the oracle is $O(n^{3-\epsilon})$, where $n$ is the number of variables and $\epsilon>0$, then $\np\subseteq$ co$\np$\textsf{/poly}. Let $L$ be a language, denote $\mathbf{OR}(L)$ be the problem of asking whether one of the input instance belongs to $L$. The following lemma is crucial for the proof.

\begin{lemma}[\cite{DM10}]\label{lem:distillation}
Let $L$ be a language, with instance size $s$ and $t:\mathbb{Z}^+\to\mathbb{Z}^+$ be polynomially bounded s.t. the problem of $\mathbf{OR}(L)$ with $t(s)$ instances can be decided by sending $O(t(s)\log{t(s)})$ bits, then $L\in\conp$\textsf{/poly}.
\end{lemma}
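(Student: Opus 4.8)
The plan is to derive this as a consequence of the composition/distillation framework of Fortnow--Santhanam and Dell--van~Melkebeek, which is exactly the machinery of \cite{DM10} cited just above. First I would recall the notion of an \emph{OR-distillation}: a polynomial-time algorithm that takes $t$ instances $x_1,\dots,x_t$ of $L$, each of size at most $s$, and outputs a single instance $y$ of some language $L'$ with $|y| = O(t \cdot \mathrm{poly}(s))$ such that $y \in L'$ iff some $x_i \in L$. The key theorem from the literature states that if $\mathbf{OR}(L)$ admits such a distillation where the output size is in fact $O(t \log t)$ measured in the appropriate parameter (i.e. the compression is strong enough to beat the trivial ``concatenate all instances'' bound by more than a polynomial factor in $s$), then $L \in \conp/\mathrm{poly}$. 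So the proof is essentially an application: I would show that the hypothesized protocol --- deciding $\mathbf{OR}(L)$ on $t(s)$ instances by sending only $O(t(s)\log t(s))$ bits to the oracle --- \emph{is} an OR-distillation of the required strength.

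The key steps, in order, are: (1) Given the communication protocol, package the transcript. The verifier is polynomial-time and sends $O(t(s)\log t(s))$ bits total to the all-powerful oracle; by a standard argument the oracle's answer can be taken to depend only on those bits (one bit of reply suffices by binary search / or we can just let the oracle answer the OR-question directly). (2) Define the instance $y$ of the ``padded'' language $L'$ to be the pair (verifier's randomness fixed to a good string, the $O(t(s)\log t(s))$-bit message), together with a circuit description of the verifier; membership of $y$ in $L'$ is ``the oracle would accept,'' which is a $\mathrm{coNP}$-type predicate once we quantify over oracle strategies appropriately. (3) Verify the size bound: $|y| = O(t(s)\log t(s)) + \mathrm{poly}(s)$, and since $t$ is polynomially bounded in $s$ this is $O(t(s)\log t(s)) \cdot \mathrm{poly}(s)$ --- crucially the dependence on the number of instances is near-linear, which is what the distillation theorem needs. (4) Invoke the Dell--van~Melkebeek complementary-witness / distillation lemma: an OR-distillation of $\mathbf{OR}(L)$ with this parameter regime forces $L \in \conp/\mathrm{poly}$, because otherwise one could compress $\conp$-certificates below the information-theoretic barrier established by their witness-counting argument (the ``complementary witness lemma''). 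The nonuniform advice arises from fixing the good randomness string and the oracle's optimal strategy for each input length.

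The main obstacle I expect is step (4): faithfully reducing to the precise statement of the distillation/complementary-witness lemma of \cite{DM10} rather than re-proving it. One has to be careful that the protocol's use of an \emph{adaptive} oracle (the oracle sees messages and could in principle send information back) does not blow up the transcript beyond $O(t\log t)$ bits; this is handled by observing the verifier only needs the final accept/reject bit, so any back-communication can be simulated by the verifier guessing and the oracle confirming, at the cost of a constant factor. A secondary technical point is making the predicate ``the oracle accepts'' genuinely $\conp$ (or $\conp/\mathrm{poly}$): one phrases it as ``for all purported distillation outputs claiming all $x_i \notin L$, the protocol rejects,'' which is the standard trick that turns an OR-compression into a $\conp/\mathrm{poly}$ membership statement. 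Since all of this is exactly the content imported from \cite{DM10}, I would keep the proof short, cite their Theorem for the distillation$\Rightarrow\conp/\mathrm{poly}$ implication, and only spell out the packaging of the communication transcript into a distillation instance.
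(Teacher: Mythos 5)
The paper does not prove this lemma at all: it is stated verbatim as an imported result of \cite{DM10}, so your plan of packaging the hypothesized protocol as an OR-distillation and invoking the Dell--van Melkebeek complementary-witness theorem is exactly the treatment the paper intends. Your sketch is consistent with that, and the only caveat is cosmetic (the verifier here is deterministic, so the ``fixing good randomness'' step is vacuous); no gap to report.
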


Applying the same technique, we obtain the non-sparsification of 3-$\satpw(w(n))$ instances, $w(n)=\Omega(\log n)$.

\begin{lemma}
If a 3-$\satpw(w(n)\log n)$ instance can be decided by sending $O(n^{1-\epsilon})$ bits to the oracle, then $\nl[w(n)/\log n]\subseteq\conp$\textsf{/poly}.
\end{lemma}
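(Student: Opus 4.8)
The plan is to mimic the argument from \cite{DM10} that underlies Lemma~\ref{lem:distillation}, combined with the machine characterization of $\satpw$ from \cite{papakonstantinou2009note}. First I would recall that $3$-$\satpw(r(n)\log n)$ is complete for $\nl[r(n)]$ under logspace many-to-one reductions; setting $r(n)=w(n)/\log n$, the class in question is $\nl[w(n)/\log n]$, and the relevant complete problem is $3$-$\satpw(w(n))$. So it suffices to show that a hypothetical $O(n^{1-\epsilon})$-bit oracle protocol for $3$-$\satpw(w(n))$ puts this complete problem into $\conp/\textsf{poly}$, since $\conp/\textsf{poly}$ is closed under logspace (indeed polynomial-time) many-to-one reductions.

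The key step is to set up an $\mathbf{OR}$-composition and invoke Lemma~\ref{lem:distillation}. Given $t = t(s)$ instances $\phi_1,\dots,\phi_t$ of $3$-$\satpw(w(n))$, each of size at most $s$ (so each has $O(s)$ variables), I would form their disjunction in a way that keeps the path-width controlled: the natural move is to take a new variable $y$ and build a formula that is satisfiable iff at least one $\phi_i$ is, arranging the path decomposition of the combined instance as the concatenation of the path decompositions of the $\phi_i$'s with a few selector variables threaded through. The subtlety is that a naive disjunction of $t$ CNF formulas is not a CNF; the standard trick (introduce indicator variables and enforce ``exactly one block is active'') must be done so that the incidence-graph path-width grows by only an additive $O(\log t)$ or so, which is absorbed into $w(n)$ when $t$ is polynomially bounded. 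Then, running the assumed protocol on this $\mathbf{OR}$-instance of total size $S = \Theta(t\cdot s)$ sends $O(S^{1-\epsilon}) = O((ts)^{1-\epsilon})$ bits; choosing $t = t(s)$ to be a sufficiently large fixed polynomial in $s$ (e.g. $t(s)=s^{c}$ with $c$ chosen so that $(ts)^{1-\epsilon} = o(t\log t)$, which holds once $c(1-\epsilon) < c$, i.e. always, but quantitatively we need $(ts)^{1-\epsilon}$ to be $O(t\log t)$), we get a protocol deciding $\mathbf{OR}(L)$ on $t(s)$ instances using $O(t(s)\log t(s))$ bits. Lemma~\ref{lem:distillation} then yields $L = 3\text{-}\satpw(w(n)) \in \conp/\textsf{poly}$, and hence $\nl[w(n)/\log n] \subseteq \conp/\textsf{poly}$ via the completeness of $3$-$\satpw(w(n))$ for that class.

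The main obstacle I expect is the bookkeeping of the path-width under the $\mathbf{OR}$-composition: one must verify that gluing together $t$ path decompositions of width $w(n)$ and adding the selector gadget really yields a path decomposition whose width is $O(w(n))$ (after re-normalizing, since $w(n)=\Omega(\log n)$ and $t$ is polynomial in $s$, the additive $O(\log t)=O(\log s)$ term is within $O(w(n))$). A related point to check is that the instance-size parameter $s$ in Lemma~\ref{lem:distillation} is consistently tracked: the combined instance has size polynomial in $t$ and $s$, and the $O(n^{1-\epsilon})$ bound on the protocol, applied with $n$ equal to the variable count of the combined instance, must still be $o(t\log t)$ for the chosen $t(s)$ — this is where the $n^{1-\epsilon}$ (rather than $n$) is essential and forces a careful but routine choice of the polynomial $t(s)$. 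Everything else (completeness of $3$-$\satpw$, closure of $\conp/\textsf{poly}$ under reductions) is off-the-shelf.
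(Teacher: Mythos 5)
Your overall strategy (OR-composition, then the distillation lemma, then the $\nl[r(n)]$ characterization) is the same as the paper's, and your bookkeeping of the bit budget --- choosing $t(s)=s^{c}$ with $c$ large enough that $(t(s)\cdot s)^{1-\epsilon}=O(t(s)\log t(s))$ --- is actually more explicit than what the paper writes down. But there is a genuine gap in the width accounting, and it stems from a misreading of the hypothesis. The lemma assumes a protocol for $3$-$\satpw(w(n)\log n)$, not for $3$-$\satpw(w(n))$; the extra $\log n$ factor in the hypothesis is not incidental but is exactly the slack needed to absorb the path-width blowup of the OR-composition. In the paper's construction each clause $C_{i,j}$ of $\phi_i$ is replaced by $\overline{(a=i)}\vee C_{i,j}$, where the selector $a$ consists of $O(\log t(s))=O(\log n)$ variables threaded through every bag; restoring the $3$-CNF property then forces each such $O(\log n)$-literal clause to be split into a chain with fresh auxiliary variables, and the paper's analysis of this step yields a composed instance of path-width $w(n)\log n$ --- a multiplicative, not additive, loss. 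The hypothesized protocol is applied to this wider instance, and the conclusion then drops back down to $\nl[w(n)/\log n]$, whose complete problem is the width-$w(n)$ version.

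Your proof instead assumes a protocol only for $3$-$\satpw(w(n))$ and compensates by asserting that the composition increases the width by ``an additive $O(\log t)$ or so.'' That assertion is the crux and is never justified: you would need to show that the selector gadget and, more delicately, the re-conversion of the modified clauses into $3$-CNF can be laid out along the concatenated path decompositions with only $O(\log t)+O(1)$ extra elements per bag. This requires a careful serialization argument (each auxiliary chain must be interleaved into the interval of bags spanned by its clause, consistently with where that clause's variables sit), and if carried out it would prove a statement stronger than the lemma. As written, either you supply this tighter composition, or you should use the hypothesis as actually stated (width $w(n)\log n$), in which case the paper's cruder width bound suffices and the rest of your argument closes.
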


\begin{proof}
Obviously, $s$ is at most polynomial in $n$. Consider an $\mathbf{OR}(3$-$\satpw(w(n)))$ instance, which contains $t(s)$ 3-$\satpw(w(n))$ instances each with $n$ variables can be represented by a 3-$\satpw(w(n))$ instance with $t(s)s$ variables. Assume all the instances use different variables. Suppose the instances are $\phi_i$, $\forall i$, and each has a corresponding path-decomposition $\mathcal{P}_i$, variables $v_{i,j}$, clauses $C_{i,j}$. Simply joining path-decompositions will impose an AND-relation instead of an OR-relation. To impose an OR-relation, first join the path-decompositions sequentially, let $a$ be a selector, for each $\mathcal{P}_i$, replace each clause $C_{i,j}$ by a clause representing $(a=i)\rightarrow C_{i,j}=\overline{(a=i)}\vee C_{i,j}$. $a$ can be implemented by $O(\log t(s))=O(\log n)$ variables appearing in every bag. One last step is that each newly created clause is of $O(\log n)$ variables, to break each of them into clauses of $3$ variables by standard technique, $O(w(n)\log n)$ variables need to be introduced. In the end, we constructed a 3-$\sat$ instance with path-width $w(n)\log n$. Now by hypothesis, this instance of $t(s)s$ variables can be decided by sending $(t(s)s)^{1-\epsilon}$ bits, by Lemma~\ref{lem:distillation}, this means 3-$\satpw(w(n))$ is in $\conp$\textsf{/poly}, and by the characterization mentioned before, the lemma follows.
\end{proof} 

\section{Conclusions}\label{sec:conclusion}
We devised a simple algorithm for deciding the satisfiability of arbitrary
CNF formulas, that runs in time $2^{O(\tw(\phi)\log|\phi|)}$ and space polynomial.
We conjecture that doing asymptotically better in the exponent blows up
the space to exponential in the tree-width. Our main technical development is a family of deterministic
algorithms that achieve tradeoffs by trading constants in the exponent between
space and running time. 

One issue we did not discuss is whether randomness can be helpful towards 
better algorithms for tree-width bounded $\sat$. Take for example Schoening's 
algorithm  \cite{schoning1999probabilistic} for $3$-$\sat$, 
which achieves running time $\alpha^n$, for a constant $\alpha<2$ and space polynomial. 
Intuitively, this algorithm is oblivious to any structure the given CNF may
have. An interesting question is finding a way to exploit the small width structure  
using randomness. 

This paper also initiates an in-depth study of the computational complexity 
of width-parameterized $\sat$. One possible direction is understanding the implications
of our conjecture; i.e. developing machinery that leads to either proving 
or disproving it. Another interesting research direction is to obtain stronger sparsification 
results than those obtained in Section \ref{sec:complexity}. The main technical obstacle
towards this goal is that the packing lemma of \cite{DM10} does not apply 
in the setting of bounded-width $\sat$. The reason is that even the first step in the
construction of \cite{DM10} blows up the tree-width from any value (e.g. $\log^2 n$) 
to linear. That is, following their work the given formula is right away transformed into
a formula of huge tree-width. We believe that proving non-sparcification $O(n)$ is possible,
and it seems to require the development of new non-sparsification tools. 

In general, understanding various aspects of the computational complexity of width-parameterized 
$\sat$ is an issue left open for future research.

\section*{Acknowledgments}
We would like to thank Kevin Matulef for useful remarks and suggestions.

\bibliographystyle{alpha}
\bibliography{ksat-bw}

%\appendix
%\section{Appendix: Deferred Proofs}\label{sec:appendix}
%\input{appendix}

\end{document}